\documentclass[journal,twoside,web]{ieeecolor}
%把eps全换成png 因为arxiv智能识别一种
\usepackage{generic}
\usepackage{cite}

  %amsthm和IEEE都定义了proof,为了避免冲突
\usepackage{amsmath,amssymb,amsfonts,amsthm}
\usepackage{algorithmicx}
\usepackage{graphicx}
\usepackage{algorithm}
\usepackage{hyperref}
\hypersetup{hidelinks=true}
\usepackage{textcomp}
\usepackage{subfig}
\usepackage{colortbl,booktabs}
\newtheoremstyle{boldstyle} % Style name
{\topsep}   % Space above
{\topsep}   % Space below
{\normalfont}  % Body font (not italic)
{}          % Indent amount
{\bfseries} % Theorem head font (bold)
{.}         % Punctuation after theorem head
{ }         % Space after theorem head
{}          % Theorem head spec (empty = default)

% Apply this style
\theoremstyle{boldstyle}

\newtheorem{definition}{Definition}
\newtheorem{assumption}{Assumption}
\newtheorem{corollary}{Corollary}
\newtheorem{theorem}{Theorem}
\newtheorem{lemma}{Lemma}
\newtheorem{proposition}{Proposition}
\newtheorem{remark}{Remark}

\def\BibTeX{{\rm B\kern-.05em{\sc i\kern-.025em b}\kern-.08em
    T\kern-.1667em\lower.7ex\hbox{E}\kern-.125emX}}
\markboth{\hskip25pc IEEE TRANSACTIONS AND JOURNALS TEMPLATE}
{Kanghui He \MakeLowercase{\textit{et al.}}: Predictive CBFs for PWA systems}
\begin{document}
\title{Predictive control barrier functions for piecewise affine systems with non-smooth constraints}
\author{Kanghui He, Anil Alan, Shengling Shi, \IEEEmembership{Member, IEEE}, Ton van den Boom, and Bart De Schutter, \IEEEmembership{Fellow, IEEE}
	\thanks{This paper is part of a project that has received funding from the European Research Council (ERC) under the European Union’s Horizon 2020 research and innovation programme (Grant agreement No. 101018826 - CLariNet). }
	\thanks{All authors are with Delft Center for Systems and Control, Delft University of Technology, Delft, The Netherlands (e-mail: k.he@tudelft.nl; a.alan@tudelft.nl; s.shi-3@tudelft.nl; a.j.j.vandenBoom@tudelft.nl; b.deschutter@tudelft.nl).}}

\maketitle
\begin{abstract}
Obtaining control barrier functions (CBFs) with large safe sets for complex nonlinear systems and constraints is a challenging task. Predictive CBFs address this issue by using an online finite-horizon optimal control problem that implicitly defines a large safe set. The optimal control problem, also known as the predictive safety filter (PSF), involves predicting the system’s flow under a given backup control policy. However, for non-smooth systems and constraints, some key elements, such as CBF gradients and the sensitivity of the flow, are not well-defined, making the current methods inadequate for ensuring safety. Additionally, for control-non-affine systems, the PSF is generally nonlinear and non-convex, posing challenges for real-time computation. This paper considers piecewise affine systems, which are usually control-non-affine, under nonlinear state and polyhedral input constraints. We solve the safety issue by incorporating set-valued generalized Clarke derivatives in the PSF design. We show that enforcing CBF constraints across all elements of the generalized Clarke derivatives suffices to guarantee safety. Moreover, to lighten the computational overhead, we propose an explicit approximation of the PSF. The resulting control methods are demonstrated through numerical examples.
\end{abstract}

\begin{IEEEkeywords}
Constrained control, piecewise affine systems, non-smooth control barrier functions, predictive control.
\end{IEEEkeywords}

\section{Introduction}

\subsection{Background}
Motivated by emerging safety-critical applications such as autonomous driving \cite{alan2023control} and automatic therapeutic regimens \cite{jarrett2020optimal}, control of nonlinear systems subject to state and input constraints has been receiving a growing attention in the academic community. Various safe control methods have been developed, such as model predictive control (MPC) \cite{borrelli2017predictive}, control barrier functions (CBFs)-based safety filters \cite{ames2016control}, solving Hamilton-Jacobi (HJ) reachability problems \cite{bansal2017hamilton}, and constrained reinforcement learning \cite{he2024approximate}. All methods have distinct advantages but also inherent limitations that hinder their practical effectiveness. In particular, MPC can be computationally expensive for hybrid and nonlinear systems; HJ reachability suffers from the curse of dimensionality; and constrained reinforcement learning often lacks rigorous safety guarantees.  

CBFs provide a modular framework with formal safety guarantees, called the CBF-based safety filter, to monitor and modify the potential unsafe control actions online. CBFs use their super-level sets to define control invariant sets, and the safety filter further involves a constrained optimization problem with CBF-based safety constraints to produce a safe control policy that keeps the system within the control invariant set. However, computing a permissive CBF, i.e., one associated with a non-conservative control invariant set, can be particularly challenging for nonlinear systems with both state and input constraints. As a result, CBF-based safety filters may be overly conservative.

Drawing inspiration from MPC, researchers have proposed to use model predictions to reduce the conservatism of CBFs both in discrete-time \cite{wabersich2022predictive,wabersich2021predictive} and in continuous-time domains \cite{gurriet2020scalable}, \cite{chen2021backup,van2024disturbance,breeden2022predictive}. In particular, they use a finite-horizon constrained optimal control problem with a terminal CBF constraint to implicitly represent a larger control invariant set, resulting in a predictive safety filter (PSF). The terminal CBF is referred to as the backup CBF in \cite{gurriet2020scalable}. For continuous-time control-affine systems, the PSF can be cast as convex quadratic programs, enabling efficient real-time implementation.

It should be emphasized that existing predictive CBF approaches for continuous-time systems are inherently limited to smooth systems and smooth safety constraints \cite{chen2021backup,van2024disturbance,breeden2022predictive,gurriet2020scalable}. This requirement arises from the need to differentiate both the safety constraints and the system’s flow in the predictive CBF formulation. Furthermore, to allow for fast computation, these approaches concentrate on control-affine systems. For control-non-affine systems, the resulting safety filter usually involves solving a time-consuming non-convex nonlinear optimization problem.

Piecewise affine (PWA) systems can approximate general nonlinear systems to arbitrary precision and naturally capture hybrid behaviors \cite{bemporad1999control}. On the other hand, PWA systems are generally control-non-affine and non-smooth. Therefore, the above-mentioned CBF-based safety filters cannot be directly adopted to efficiently provide safe control inputs for PWA systems. 

\subsection{Contributions}
\begin{figure*}
	\centering
	\includegraphics[width=400pt,clip]{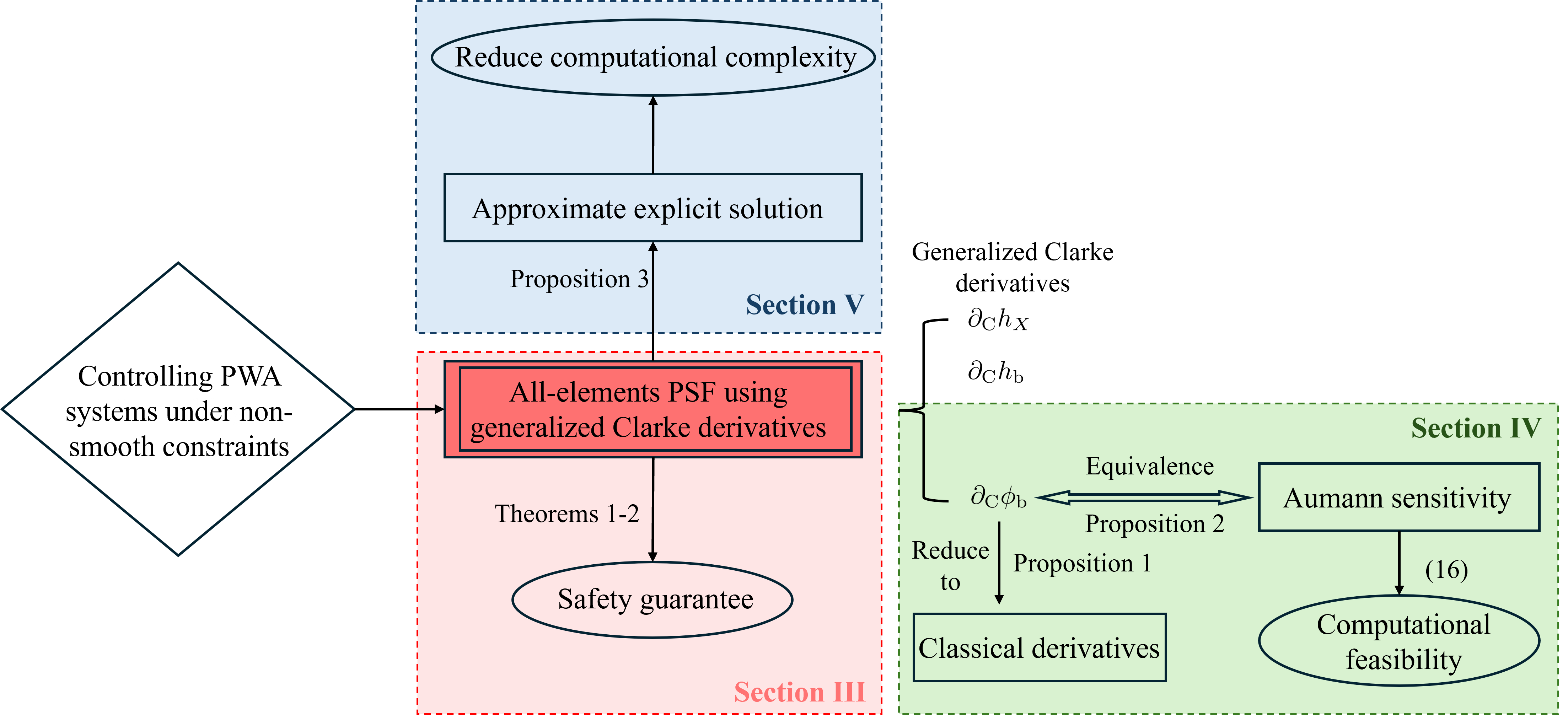}
	\caption{Graphical representation of the paper structure.}
	\label{structure}
\end{figure*}

In this paper, we address the challenge of designing PSFs for continuous-time piecewise affine (PWA) systems subject to arbitrary, possibly non-smooth, Lipschitz continuous state constraints\footnote{We say that a constraint $x\in X$ is continuous if $X$ can be represented by the level set of a continuous function, i.e., $X =\{x| f(x) \leq 0\}$, where $f$ is continuous.} and polyhedral input constraints. The resulting PSF is formulated as a mixed-integer quadratic programming (MIQP) problem. The main contributions of this paper, in comparison to the state of the art, are:

1. We propose a novel PSF that uses generalized Clarke derivatives to enforce safety constraints. This framework is applicable to any Lipschitz continuous but non-smooth CBF and any continuous PWA system with Lipschitz continuous but non-smooth state constraints. We rigorously analyze the safety guarantees of the resulting control policy (Lemma \ref{lemma_invariance} and Theorem \ref{theoreom_forward}). 

2. We present a tractable approach for computing the generalized Clarke derivative (Clarke sensitivity) of the system's flow. In particular, we introduce an alternative Aumann sensitivity and show their equivalence. Notably, the Aumann sensitivity can be efficiently computed through numerical integration of the system.

3. To reduce the computational complexity of solving the MIQP corresponding to the PSF online, we introduce an explicit approximation of the PSF. This approximation is provably feasible and significantly lowers the online computational burden.

4. Additionally, the theoretical safety result (Lemma 2) generalizes classical CBF conditions that require continuous differentiability to encompass CBFs that are merely Lipschitz continuous. This extension applies to general nonlinear systems, not just PWA ones.

The paper’s main results and their organization are shown in Fig. \ref{structure}.

\subsection{Related work}

\textbf{(Approximate) MPC for PWA systems.} Hybrid MPC is widely used to control PWA systems under linear state and input constraints \cite{borrelli2017predictive}. In the hybrid MPC formulation, the PWA system is equivalently transformed into a mixed-logical dynamical (MLD) system that includes both continuous and discrete state and input variables \cite{bemporad1999control}. This transformation allows the MPC problem to be formulated as a mixed-integer convex quadratic or linear optimization problem, which is solved online to generate safe control inputs. Explicit MPC is an offline version of hybrid MPC, which offline computes and stores the map from the state to the solution of the MPC problem by multi-parametric programming \cite{bemporad2000piecewise}. However, MPC methods struggle with significant computational challenges in high-dimensional systems, either due to solving complex mixed-integer problems or evaluating intricate explicit control laws.

This limitation of MPC motivates the research of approximating MPC control laws, by using, e.g., PWA simplicial functions \cite{poggi2011explicit}, neural networks \cite{he2024approximate}, sampling-based polytopic trees \cite{sadraddini2019sampling}, and minimum-time objective functions \cite{grieder2005stabilizing}. Another promising approach to simplify the MPC problem is to employ reinforcement learning \cite{da2024integrating} or supervised learning \cite{mallick2024learning,russo2023learning} to determine the values of the discrete decision variables, effectively reducing the mixed-integer problem to a convex optimization problem. However, while exact MPC offers formal performance guarantees, it leads to a high computational burden, especially for nonlinear, non-convex constraints. Approximate MPC, on the other hand, still has open challenges regarding stability, robustness, and sample complexity.

\textbf{Predictive safety filters.} Research in both the continuous-time and discrete-time domains has contributed to the development of PSFs, which aim to reduce the conservatism of standard safety filters. Representative work includes \cite{wabersich2022predictive,wabersich2021predictive} in the discrete-time domain and \cite{chen2021backup,van2024disturbance,breeden2022predictive,gurriet2020scalable} (sometimes also called backup CBFs) in the continuous-time domain. Based on \cite{wabersich2022predictive}, the authors in \cite{didier2023approximate,he2023state,he2025learning} approximate discrete-time predictive CBFs using neural networks to reduce the computational burden of safety filters. The work in \cite{molnar2023safety} extends the PSFs of \cite{gurriet2020scalable,van2024disturbance} to guaranteeing safe behavior for complex systems by designing safety filters for reduced-order models. However, continuous-time predictive CBF methods are fundamentally limited to smooth systems and smooth safety constraints. As mentioned before, this requirement arises from the need to differentiate both the safety constraints and the system’s flow in the predictive CBF formulation, which requires computing the Jacobian of the closed-loop dynamics with a smooth backup controller. Moreover, to date, there is no related work addressing the design of PSFs for PWA systems.

\textbf{Non-smooth CBFs.} \cite{glotfelter2017nonsmooth} proposes non-smooth CBFs for deterministic multi-agent systems, leveraging generalized gradients to ensure forward invariance. This result has been employed in applications where non-smooth safety constraints naturally arise, such as obstacle avoidance involving ellipsoids and polytopes \cite{thirugnanam2023nonsmooth} and network connectivity maintenance \cite{ong2023nonsmooth}. Besides, formulating multiple constraints with “min” and “max” functions will generally result in a non-smooth CBF. Building on \cite{glotfelter2017nonsmooth}, \cite{thirugnanam2023nonsmooth} extends non-smooth CBFs to stochastic systems. Related to these techniques, piecewise continuous CBFs are introduced in \cite{cortes2024discontinuous}, where the authors of \cite{cortes2024discontinuous} argue that, in addition to satisfying the CBF inequality, a tangent cone condition at the boundary of the safe set should also be enforced to ensure forward invariance. Meanwhile, Cohen et al. \cite{cohen2023characterizing} take a different perspective by smoothing the solution to CBF-based safety filters using the Implicit Function Theorem, rather than focusing on the smoothness of the CBFs themselves. All of the above approaches focus on classical CBFs with no predictions. This means that the analysis primarily addresses the non-smoothness of the CBF itself. In contrast, the non-smoothness in our setting arises not only from the structure of the CBF but also from state constraints and the PWA dynamics, making it more challenging to derive safety guarantees.

\subsection{Notations} The set of real vectors with the dimension $n$ is denoted by $\mathbb{R}^n$. The sets $\mathbb{N}$ and $\mathbb{N}^+$ represent the set of non-negative integers and the set of positive integers, respectively. Furthermore, $\mathbb{N}_a= \{0,1,...,a\}$ and $\mathbb{N}^+_a= \{1,2,...,a\}$. The norm $||x||$ is the Euclidean norm of the vector $x$, and $||A||$ is the induced norm of the Euclidean norm for the matrix $A$. The notation $\mathrm{int}(S)$ refers to the interior of the set $S$, and $\mathrm{clo}(S)$ is the closure of the set $S$. Besides, $1_n$ is the vector of ones with $n$ elements, and $I_n$ is the identity matrix with dimension $n\times n$. A continuous function $\alpha:  [0, \infty) \to [0, \infty]$ is said to belong to class $\mathcal{K}_\infty$ if it is strictly increasing and if $\alpha(0)=0$ and $\lim _{r \to \infty} \alpha(r)=\infty $. The step function $\mathrm{step}(\cdot)$ returns 1 if the input is non-negative, and 0 otherwise. The notation $D f(x_0) \in \mathbb{R}^{m\times n}$ represents the derivative of the differentiable function $f(\cdot): \mathbb{R}^n \to \mathbb{R}^m$ at $x_0$.
%can work in conjunction with an arbitrary learning algorithm.
\section{Preliminaries and backup CBFs}

Consider a general continuous-time nonlinear system of the form
\begin{equation}\label{pwa}
	\dot x = f(x,u),
\end{equation}
where $x \in \mathbb{R}^n$ is the state, $u \in \mathbb{R}^m$ is the input, and $f(\cdot):\mathbb{R}^{n+m} \to \mathbb{R}^n$ is globally Lipschitz continuous. The system must always satisfy the given state constraint $x \in X:=\{x\in \mathbb{R}^n| h_X(x) \geq 0\}$ and input constraint $u \in U$. Here, the function $h_X(\cdot): \mathbb{R}^n \to \mathbb{R}$ is locally Lipschitz continuous, and $U$ is a polyhedron\footnote{While the results of this paper can be extended to cases where $U$ is a union of polyhedra, we assume for simplicity that $U$ is a single polyhedron}. 

\subsection{Control barrier functions}
CBFs offer an effective approach for designing safe control systems.
\begin{definition}[Control barrier function\cite{ames2019control}]
	A continuously differentiable function $h(\cdot): \mathbb{R}^n \to \mathbb{R}$ is a \emph{control barrier function} for \eqref{pwa} if there exists an $\alpha \in \mathcal{K}_\infty$ such that 
	\begin{equation}\label{cbf}
		\sup _{u \in U} \dot{h}(x, u) =\sup _{u \in U} \{\nabla h(x)\cdot f(x,u)\} \geq -\alpha(h(x))
	\end{equation}
	holds for any $x \in S:= \{x \in \mathbb{R}^n | h(x) \geq 0\}$.
\end{definition}

With a CBF available, there always exists a controller $\pi(\cdot): \mathbb{R}^n \to \mathbb{R}^m$ satisfying $\pi(x) \in U$ and $\nabla h(x) \cdot  f(x,\pi(x)) \geq -\alpha(h(x))$ for all $x \in S$. This controller will render $S$ forward invariant for \eqref{pwa}, and $S$ is called the safe set \cite{ames2019control}. Under the supplemental condition $S \subseteq X$, the controller $\pi$ makes the system always satisfy the constraints if the initial state is within $S$. 

\subsection{Predictive safety filters for smooth systems}

For nonlinear systems, it is nontrivial to get a permissive CBF, i.e., a CBF with a large safe set $S$ \cite{wabersich2023data}. Drawing ideas from model predictive control, PSFs address this issue using model predictions \cite{molnar2023safety,chen2021backup}. We recall this method in this subsection.

First, a CBF $h_\mathrm{b}: \mathbb{R}^n \to \mathbb{R}$ and an associated safe controller $\pi_\mathrm{b}$, satisfying (i) $\pi_\mathrm{b}(x) \in U$ for all $x\in X$, and (ii) $\nabla h_\mathrm{b}(x) \cdot f(x,\pi_\mathrm{b}(x)) \geq -\alpha_\mathrm{b}(h_\mathrm{b}(x))$ for all $x \in S_\mathrm{b}:=\{x \in \mathbb{R}^n | h_\mathrm{b}(x) \geq 0\} \subseteq X$, are obtained based on existing methods such as LQR or sum-of-squares programming \cite{wang2018permissive}, depending on the form of the system. Here, the subscript “b” refers to “backup”. 

The closed-loop system with the backup controller can be written as
\begin{equation}\label{closed_loop}
	\dot x = f_\mathrm{b} (x): = f(x,\pi_\mathrm{b}(x))
\end{equation}
If $f$ and $\pi_\mathrm{b}$ are globally Lipschitz continuous, which is commonly the case in many practical scenarios, including linear, PWA, and polynomial systems and policies, then for any $x(0) = x_0 \in  \mathbb{R}^n$, according to \cite[Theorem 3.2]{khalil2002nonlinear}, the system \eqref{closed_loop} has a unique solution $x(t) = \phi_\mathrm{b}(x_0,t)$, $\forall t\in [0,\infty)$.

\begin{definition}[Constrained reachable set]
	Given the system \eqref{closed_loop} and the set $X$, for any set $S$ and any time $T>0$, the $T$-step \emph{constrained reachable set} for \eqref{closed_loop} is defined as $\Phi_\mathrm{const}(S,T) :=\{x \in \mathbb{R}^n |  \phi_\mathrm{b}(x,\tau) \in X,\;\forall \tau \in [0,T] \wedge \phi_\mathrm{b}(x,T) \in S \}$.
\end{definition}

 The PSF method uses an important property related to constrained reachable sets, stating that the constrained reachable sets of any invariant set $S$ are always controlled-invariant and are supersets of $S$. See \cite[Lemma 1]{molnar2023safety} and \cite[Proposition 2]{gurriet2020scalable} for a more precise statement. This invariance property allows us to synthesize a safe controller that has a larger safe region. 

\begin{lemma}[Safety of backup CBFs \cite{molnar2023safety,gurriet2020scalable}]\label{lemma_safety}
	Consider the system \eqref{pwa} with the state and input constraints $x \in X$ and $u \in U$, the backup controller $\pi_\mathrm{b}$, and the backup CBF $h_\mathrm{b}$. Suppose that $f$, $h_X$, and $h_\mathrm{b}$ and $\pi_\mathrm{b}$ are continuously differentiable and that $X$ is compact. Then, there exist a controller $\pi_\mathrm{safe} (\cdot): \mathbb{R}^n \to \mathbb{R}^m$ and $\alpha, \alpha_{\mathrm{b}} \in \mathcal{K}_{\infty}$ such that the following conditions hold  $\forall T \geq 0$ and $\forall x \in \Phi_\mathrm{const}(S_\mathrm{b},T)$: 
	\begin{align}\label{cbf condition}
		\dot{h}_X\left(\phi_{\mathrm{b}}( x,\tau), \pi_\mathrm{safe}(x)\right) & \geq-\alpha\left(h_X\left(\phi_{\mathrm{b}}( x,\tau)\right)\right), \forall \tau \in[0, T] \nonumber \\
		\dot{h}_{\mathrm{b}}\left(\phi_{\mathrm{b}}(x,T), \pi_\mathrm{safe}(x)\right) & \geq-\alpha_{\mathrm{b}}\left(h_{\mathrm{b}}\left(\phi_{\mathrm{b}}(x,T)\right)\right) \nonumber\\
		\pi_\mathrm{safe}(\phi_{\mathrm{b}}(x,\tau)) &\in U, \forall \tau \in[0, T].
	\end{align}
	Moreover, any locally Lipschitz continuous controller $\pi_\mathrm{safe}$ that satisfies \eqref{cbf condition} renders $\Phi_\mathrm{const}(S_\mathrm{b},T)$ forward invariant for \eqref{pwa}.
\end{lemma}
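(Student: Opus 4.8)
\emph{Proof idea.} The plan is to split the statement into its two assertions: first the existence of a triple $(\pi_\mathrm{safe},\alpha,\alpha_\mathrm{b})$ satisfying \eqref{cbf condition}, and then the forward invariance of $\Phi_\mathrm{const}(S_\mathrm{b},T)$ under any locally Lipschitz $\pi_\mathrm{safe}$ fulfilling \eqref{cbf condition}. Throughout I read $\dot h_X(\phi_\mathrm{b}(x,\tau),\pi_\mathrm{safe}(x))$ as the total time derivative of $\tau\mapsto h_X(\phi_\mathrm{b}(x(t),\tau))$ along the closed loop $\dot x=f(x,\pi_\mathrm{safe}(x))$, i.e. $\nabla h_X(\phi_\mathrm{b}(x,\tau))\,D_x\phi_\mathrm{b}(x,\tau)\,f(x,\pi_\mathrm{safe}(x))$, which is well defined because the $C^1$ assumptions on $f$ and $\pi_\mathrm{b}$ make $f_\mathrm{b}$ continuously differentiable and hence the flow $\phi_\mathrm{b}(\cdot,\tau)$ continuously differentiable in its initial condition.

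For existence I would take $\pi_\mathrm{safe}=\pi_\mathrm{b}$ and exploit the semigroup identity $D_x\phi_\mathrm{b}(x,\tau)\,f_\mathrm{b}(x)=f_\mathrm{b}(\phi_\mathrm{b}(x,\tau))=\tfrac{\partial}{\partial\tau}\phi_\mathrm{b}(x,\tau)$, obtained by differentiating $\phi_\mathrm{b}(\phi_\mathrm{b}(x,s),\tau)=\phi_\mathrm{b}(x,s+\tau)$ at $s=0$. This collapses every total derivative in \eqref{cbf condition} into a Lie derivative along the backup flow. The second inequality then reads $\nabla h_\mathrm{b}(\phi_\mathrm{b}(x,T))\,f_\mathrm{b}(\phi_\mathrm{b}(x,T))\ge-\alpha_\mathrm{b}(h_\mathrm{b}(\phi_\mathrm{b}(x,T)))$, which holds with the backup CBF's own $\alpha_\mathrm{b}$ because $\phi_\mathrm{b}(x,T)\in S_\mathrm{b}$ for $x\in\Phi_\mathrm{const}(S_\mathrm{b},T)$. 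Forward invariance of $S_\mathrm{b}$ under $\pi_\mathrm{b}$ (a standard consequence of the backup CBF condition) moreover extends $\phi_\mathrm{b}(x,\tau)\in X$ from $\tau\in[0,T]$ to all $\tau\ge0$, so the input-constraint condition follows from $\pi_\mathrm{b}(X)\subseteq U$.

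The only nontrivial part of existence is producing a single $\alpha\in\mathcal K_\infty$ for the state-constraint inequality $\nabla h_X(y)\,f_\mathrm{b}(y)\ge-\alpha(h_X(y))$, valid simultaneously for all $T$ on the reachable set $\mathcal R:=\{\phi_\mathrm{b}(x,\tau):x\in\Phi_\mathrm{const}(S_\mathrm{b},T)\text{ for some }T\ge0,\ \tau\ge0\}\subseteq X$. The key observation is that at every boundary point $y\in\mathcal R$ with $h_X(y)=0$ the map $\tau\mapsto h_X(\phi_\mathrm{b}(x,\tau))$ attains a (one-sided) minimum equal to zero, forcing $\nabla h_X(y)\,f_\mathrm{b}(y)\ge0$; together with continuity of the Lie derivative and compactness of $X$ this lets me build a class-$\mathcal K_\infty$ upper envelope of $r\mapsto\sup\{-\nabla h_X(y)f_\mathrm{b}(y):y\in\mathcal R,\ h_X(y)=r\}$, a converse-CBF type construction. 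I expect this compactness-based construction of $\alpha$ near $r=0$ to be the main obstacle of the proof.

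For the invariance claim I would fix $T$, take any closed-loop solution $x(t)$ under an admissible $\pi_\mathrm{safe}$, and apply the comparison lemma to each scalar signal $v_\tau(t):=h_X(\phi_\mathrm{b}(x(t),\tau))$ and $w(t):=h_\mathrm{b}(\phi_\mathrm{b}(x(t),T))$. As long as $x(t)\in\Phi_\mathrm{const}(S_\mathrm{b},T)$, \eqref{cbf condition} yields $\dot v_\tau\ge-\alpha(v_\tau)$ and $\dot w\ge-\alpha_\mathrm{b}(w)$; since $\alpha,\alpha_\mathrm{b}\in\mathcal K_\infty$ vanish at the origin, a signal that is nonnegative at $t=0$ cannot cross zero, so $v_\tau(t)\ge0$ for all $\tau\in[0,T]$ and $w(t)\ge0$ persist. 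Because $\Phi_\mathrm{const}(S_\mathrm{b},T)$ equals $\{x: v_\tau\ge0\ \forall\tau\in[0,T]\ \wedge\ w\ge0\}$ and is closed by continuity of the flow, a standard maximal-interval argument closes the claim, treating the continuum of constraints $\tau\in[0,T]$ pointwise so that no extra non-smooth machinery is needed here.
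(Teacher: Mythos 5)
Your proposal is correct in outline, and its two halves compare differently with the paper. The existence half is essentially the argument the paper (and the references it cites) uses: take $\pi_\mathrm{safe}=\pi_\mathrm{b}$, collapse the sensitivity term via the semigroup identity, inherit $\alpha_\mathrm{b}$ from the backup CBF, use $\pi_\mathrm{b}(X)\subseteq U$ for the input constraint, and construct $\alpha$ by a compactness/converse argument from the boundary inequality $\nabla h_X(y)f_\mathrm{b}(y)\ge 0$ on $\{h_X=0\}$ — you correctly identify this last construction as the crux, and correctly note it needs forward invariance of $S_\mathrm{b}$ under $\pi_\mathrm{b}$ to extend constraint satisfaction to all $\tau\ge 0$. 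The invariance half, however, takes a genuinely different route. The paper's proof consists of forming the single predictive CBF $h$ in \eqref{CBF_candidate} (the min over the horizon), verifying that \eqref{cbf condition} makes $h$ a valid CBF for $\Phi_\mathrm{const}(S_\mathrm{b},T)=\{h\ge 0\}$, and invoking a CBF invariance result; note that $h$ is non-smooth (because of the min) even under the lemma's $C^1$ hypotheses, so this route requires non-smooth machinery, and it is precisely the construction the paper later generalizes in Theorem \ref{theoreom_forward} via Lemma \ref{lemma_invariance} and Clarke calculus. You instead never form $h$: you keep the continuum of constraints and apply the comparison lemma to each scalar signal $v_\tau$, $w$ separately. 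What this buys is an entirely pointwise, smooth argument with no generalized gradients; what it costs is that it does not produce the predictive-CBF object that the rest of the paper is built on.

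Two caveats on your invariance half. First, the step ``a signal that is nonnegative at $t=0$ cannot cross zero'' is false for a merely continuous $\alpha\in\mathcal{K}_\infty$: the comparison ODE $\dot u=-3u^{2/3}$ with $u(0)=1$ has the solution $u(t)=(1-t)^3$, which crosses zero. You need $\alpha$ and $\alpha_\mathrm{b}$ locally Lipschitz (so that $u\equiv 0$ is the unique solution through the origin) — exactly the hypothesis the paper makes explicit in Lemma \ref{lemma_invariance}; since in this lemma you get to construct $\alpha$, you should build local Lipschitz continuity into the existence half or state it as an assumption. Second, both your argument and the paper's share the subtlety that \eqref{cbf condition} is assumed only on $\Phi_\mathrm{const}(S_\mathrm{b},T)$ itself, not on a neighborhood: your maximal-interval argument shows membership persists up to the first exit time, but membership at that time does not by itself contradict maximality; closing this requires a Gronwall/Nagumo-type step exploiting Lipschitz continuity of the closed loop. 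This gloss is also present in the paper's own Appendix A, so it is not a defect specific to your route, but it is the place where ``standard'' hides real work.
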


The proof of Lemma \ref{lemma_safety} relies on verifying that 
\begin{equation}\label{CBF_candidate}
		h(x)\!:=\!\min \left\{\min _{\tau \in[0,T]} h_X(\phi_\mathrm{b}(x, \tau)), h_\mathrm{b}(\phi_\mathrm{b}(x, T))\right\},
	\end{equation}
    is a valid CBF. We call the function $h$ in \eqref{CBF_candidate} the \emph{predictive CBF}. 

Lemma \ref{lemma_safety} can be directly employed for safe controller synthesis by designing the following optimization-based PSF\footnote{The conditions under which $\pi_\mathrm{safe}$ is locally Lipschitz, which is required in Lemma \ref{lemma_safety}, can be found in \cite{MESTRES2025101098}.}:
\begin{align}\label{safety filter}
	&\pi_\mathrm{safe} (x)\nonumber\\
	=&\underset{u \in U}{\operatorname{argmin}}\left\|u-\pi_{\mathrm{r}}(x)\right\|^2\nonumber\\	
	&\mathrm{s.t.}\;\dot{h}_X\left(\phi_{\mathrm{b}}(x,\tau), u\right)  \geq-\alpha\left(h_X\left(\phi_{\mathrm{b}}( x,\tau)\right)\right), \forall \tau \in[0, T] \nonumber \\
	&\quad \;\;\dot{h}_{\mathrm{b}}\left(\phi_{\mathrm{b}}( x,T), u\right)  \geq-\alpha_{\mathrm{b}}\left(h_{\mathrm{b}}\left(\phi_{\mathrm{b}}( x,T)\right)\right),
\end{align}
where $T \geq 0$ is the prediction horizon and $\pi_\mathrm{r}(\cdot) :\mathbb{R}^n \to \mathbb{R}^m$ is a reference controller. Problem \eqref{safety filter} is, in general, nonlinear and non-convex. In existing literature, a control-affine system and a polyhedral $U$ are usually considered, which makes \eqref{safety filter} a convex quadratic program (QP) (after discretizing $[0,\;T]$) \cite{van2024disturbance,chen2021backup,molnar2023safety}. Besides, thanks to the existence of $\pi_\mathrm{b}$, problem \eqref{safety filter} is guaranteed to be feasible $\forall x \in \Phi_\mathrm{const}(S_\mathrm{b},T)$. 

For the continuously differentiable system \eqref{pwa}, the derivatives $\dot{h}_X$ and $\dot{h}_{\mathrm{b}}$ in \eqref{safety filter} can be expressed as
\begin{align}\label{sensitivity}
	\dot{h}_X\left(\phi_{\mathrm{b}}( x,\tau), u\right) & =D h_X\left(\phi_{\mathrm{b}}( x,\tau)\right) \frac{\partial \phi_{\mathrm{b}}( x,\tau)}{\partial x} f(x,u) \\
	\dot{h}_{\mathrm{b}}\left(\phi_{\mathrm{b}}(x,T), u\right) & =D h_{\mathrm{b}}\left(\phi_{\mathrm{b}}( x,T)\right) \frac{\partial \phi_{\mathrm{b}}(x,T)}{\partial x}f(x,u) ,
\end{align}
where $\frac{\partial \phi_\mathrm{b}( x,\tau)}{\partial  x} \in \mathbb{R}^{n\times n}$ is called the sensitivity of the flow to its initial condition\footnote{In \eqref{safety filter}, $x$ refers to the initial condition of the safety filter, which is also the current state of the system \eqref{pwa}.}. 

\begin{remark}
	Note that the backup controller $\pi_\mathrm{b}$ is a virtual controller that is never applied. It is only used to derive the CBF constraints in \eqref{safety filter}. The applied controller is $\pi_\mathrm{safe}$. 
\end{remark}

\subsection{Problem statement}

In this paper, we focus on a class of continuous PWA systems:
\begin{equation}\label{pwa_form}
	\dot x = f_\mathrm{PWA}(x,u):=A_i x + B_i u + c_i, \text{ for } \left[\begin{array}{l}
		x\\u
	\end{array} \right]\in \mathcal{P}_i,
\end{equation}
where $f_\mathrm{PWA}:\mathbb{R}^{n+m} \to \mathbb{R}^n$ is a continuous PWA function, $\{\mathcal{P}_i\}^p_{i=1}$ constitutes a strict polyhedral partition \cite[Definition 4.5]{borrelli2017predictive} of the state-input space. The union of all boundaries of all $\mathcal{P}_i,\;i\in\mathbb{N}^+_p$ is called the boundary of the PWA system. 

Computing the time derivatives in \eqref{sensitivity} is an essential step for designing the safety filter \eqref{safety filter}. It should be noted that existing literature usually assumes that the system, the CBF, and the constraint $X$ are continuously differentiable \cite{van2024disturbance,chen2021backup,molnar2023safety}. In this case, the terms on the right-hand side of \eqref{sensitivity} are well-defined and safety performance is guaranteed (Lemma \ref{lemma_safety}). However, the PWA system is usually not continuously differentiable at the boundary of each $\mathcal{P}_i$, and many widely existing constraints, such as polytopic constraints, are not differentiable everywhere. This means that the time derivatives in \eqref{sensitivity} may have multiple possible values and that the results in Lemma \ref{lemma_safety} may not hold. 

Secondly, for control-affine systems, problem \eqref{safety filter} can be formulated as a convex QP, which is computationally inexpensive to solve in real time. However, as observed in \eqref{pwa_form}, $f_\mathrm{PWA}(x,u)$ does not depend linearly on $u$. In this case, problem \eqref{safety filter} becomes an MIQP, which needs much more computational effort to solve than a convex QP.

In the remainder, the following problems will be addressed.

\textbf{Problem 1} (addressed in Section \ref{section_synthesis}): How to design a PSF with safety guarantees for a given PWA system, which should address the non-smoothness of the PWA dynamics, the state constraint, and the CBF $h_\mathrm{b}$?

\textbf{Problem 2} (addressed in Section \ref{section_nonsmooth}): How to overcome the computational intractability of the critical component of the safety filter, in particular, the not well-defined sensitivity for PWA systems? 

\textbf{Problem 3} (addressed in Section \ref{section_approximate}): How to approximate the solution to the safety filter in a computationally efficient way to enable swift online implementation?

\subsection{Generalized sensitivity}
For non-smooth functions, the concept of the Jacobian matrix from smooth functions can be generalized using several approaches, such as the generalized Clarke derivative \cite{clarke1976inverse} and Mordukhovich subdifferential \cite{Mordukhovich03062017}. In this paper, we advocate for the use of the generalized Clarke derivative due to its convenience in performance analysis. The generalized Clarke derivative extends the idea of the Jacobian matrix to Lipschitz continuous functions that may not be differentiable everywhere.

\begin{definition}[Generalized Clarke derivative \cite{clarke1976inverse}]\label{Clarke definition}
	The \emph{generalized Clarke derivative} of a locally Lipschitz continuous function $\phi(\cdot):\mathbb{R}^n \to \mathbb{R}^{n_\phi}$, denoted by $\partial_\mathrm{C} \phi$, is defined as
	\begin{align*}
		&\partial_\mathrm{C} \phi(x)=\\
		&\operatorname{conv}\left\{\left.\lim _{i \rightarrow \infty} D\phi(x_i) \right| x_i \rightarrow x, \; \phi \text { is differentiable at } x_i\right\},
	\end{align*}
	where $D\phi(x_i) \in \mathbb{R}^{n_\phi\times n}$ is the classical Jacobian at points $x_i$ where $\phi$ is differentiable. The notation $n_\phi$ refers to the image dimension of $\phi$. 
\end{definition}
Note that for a locally Lipschitz function, the generalized Clarke derivative is nonempty, compact, convex, and set-valued \cite{clarke1976inverse}. It is the convex hull of the limits of classical Jacobian sequences at nearby points where the function is differentiable. If a function is differentiable at a point, the generalized Clarke derivative reduces to the classical Jacobian at that point.

\section{Safe controller synthesis}\label{section_synthesis}

For PWA systems, piecewise linear feedback controllers are often designed in the literature \cite{lazar2006stabilizing}. The PWA system \eqref{pwa_form} controlled by a piecewise linear backup controller can therefore be written as
\begin{equation}\label{closed_loop_pwa}
	\dot x = f_\mathrm{PWA,b} (x) := D_i x + d_i,\;\mathrm{for} \; x \in \mathcal{R}_i.
\end{equation}
In \eqref{closed_loop_pwa}, $f_\mathrm{PWA,b}:\mathbb{R}^n \to \mathbb{R}^n$ is a globally continuous PWA function, and $\{\mathcal{R}_i\}^r_{i=1}$ constitutes a strict polyhedral partition of the state space. Since \eqref{closed_loop_pwa} has a global Lipschitz constant $L= \max_{i\in \mathbb{N}^+_r} ||D_i||$, we can denote the unique solution with any initial condition  $x(0) = x_0 \in  \mathbb{R}^n$ by $x(\tau) = \phi_\mathrm{b}(x_0,\tau)$, $\forall \tau\in [0,\infty)$. Furthermore, according to \cite[Theorem 3.4]{khalil2002nonlinear}, for any two initial states $x_0$ and $y_0$, the solutions of \eqref{closed_loop_pwa} satisfy $||\phi_\mathrm{b}(x_0,\tau)- \phi_\mathrm{b}(y_0,\tau)|| \leq ||x_0 - y_0|| \mathrm{exp} (L \tau),\; \forall \tau>0$. This proves the Lipschitz continuous dependence of $\phi_\mathrm{b}(x_0,\tau)$ on the initial state $x_0$ in $\mathbb{R}^n$ for any finite $\tau$.

Considering the case when $\phi_\mathrm{b}$, $h_\mathrm{b}$, and $h_X$ is not necessarily differentiable, and based on the design of PSFs for smooth systems in \eqref{safety filter}, we design the following PSF utilizing the generalized Clarke derivatives of $\phi_\mathrm{b}$, $h_b$, and $h_X$:

\begin{subequations}\label{safety filter pwa}
	\begin{align}
		&\text{All-elements PSF:} \nonumber\\
		&	\pi_\mathrm{safe} (x)\nonumber\\
	=&\underset{u \in U}{\operatorname{argmin}}\left\|u-\pi_{\mathrm{r}}(x)\right\|^2\\	
		&\mathrm{s.t.}\; \partial h_X Q f_\mathrm{PWA}(x,u)  \geq-\alpha\left(h_X\left(\phi_{\mathrm{b}}( x,\tau)\right)\right),  \nonumber\\
		&\quad \;\;\forall \partial h_X \!\in \!\partial_\mathrm{C} h_X(\phi_{\mathrm{b}}( x,\tau)),\;\forall  Q \!\in\! \partial_\mathrm{C}\phi_\mathrm{b}(x,\tau),\; \forall \tau \!\in\! [0, T] \label{b} \\
		&\quad \;\;\partial h_{\mathrm{b}} Q  f_\mathrm{PWA}(x,u)   \geq-\alpha_{\mathrm{b}}\left(h_{\mathrm{b}}\left(\phi_{\mathrm{b}}( x,T)\right)\right),\nonumber\\
		&\quad \;\;\forall \partial h_{\mathrm{b}} \in \partial_\mathrm{C}  h_{\mathrm{b}}(\phi_{\mathrm{b}}( x,T)),\; \forall  Q \in \partial_\mathrm{C}\phi_\mathrm{b}(x,T), \label{c}
	\end{align}
\end{subequations}
where $ \partial_\mathrm{C}\phi_\mathrm{b}(x,\tau)$ is called the \emph{Clarke sensitivity}. Different from \eqref{safety filter} in which the CBF constraints are uniquely determined by the unique sensitivity matrix $\frac{\partial \phi_{\mathrm{b}}}{\partial x}$ and the unique derivatives $D h_X$ and $D h_\mathrm{b}$, we require that the CBF constraints should be satisfied for \emph{all} elements contained in the sets $\partial_\mathrm{C} h_X$, $\partial_\mathrm{C} h_\mathrm{b}$, and $\partial_\mathrm{C}\phi_\mathrm{b}$. 

\subsection{Performance analysis}
Before deriving a computationally tractable way to compute the Clarke sensitivity and to solve \eqref{safety filter pwa}, we first analyze the safety performance of $\pi_\mathrm{safe}$.

\begin{assumption}\label{A1}
	\noindent (i) $\pi_\mathrm{b}(x) \in U$, for all $x \in X$;
	
	\noindent (ii) $\partial h_\mathrm{b}f_\mathrm{PWA,b}(x) \geq -\alpha_\mathrm{b}(h_\mathrm{b}(x))$, for all $x \in S_\mathrm{b}:=\{x \in \mathbb{R}^n | h_\mathrm{b}(x) \geq 0\} \subseteq X$ and for all $\partial h_\mathrm{b} \in \partial_\mathrm{C} h_\mathrm{b}(x)$.
\end{assumption}

Assumption \ref{A1} is reasonable and aligns with similar assumptions commonly found in the literature on predictive CBFs \cite{chen2021backup,molnar2023safety} and MPC frameworks that incorporate CBFs or control Lyapunov functions for designing terminal constraints \cite{gurriet2020scalable}. The key distinction in our assumption is that it requires the CBF constraint to hold for all elements in the Clarke generalized gradient of $h_b$, rather than relying solely on the unique classical gradient as typically assumed in existing works. If $h_\mathrm{b}$ is smooth, then Assumption \ref{A1} becomes equivalent to other assumptions in literature, such as \cite[Assumption 1]{molnar2023safety}.

\begin{theorem}[Feasibility]\label{Theorem_feasibility}
	Consider the continuous PWA system \eqref{pwa_form}, the piecewise linear backup controller $\pi_\mathrm{b}$, the Lipschitz continuous $h_X$ and $h_\mathrm{b}$, and the proposed PSF \eqref{safety filter pwa}. Suppose that Assumption \ref{A1} holds and that $X$ is compact. Then, there exist $\alpha,\;\alpha_{\mathrm{b}} \in \mathcal{K}_{\infty}$ such that problem \eqref{safety filter pwa} is feasible for any $x \in \Phi_{\text {const}}\left(S_{\mathrm{b}}, T\right)$ and any $T\geq0$.
\end{theorem}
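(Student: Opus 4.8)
The plan is to exhibit a single admissible input that satisfies every constraint of \eqref{safety filter pwa} simultaneously, the natural candidate being the backup input itself, $u^\star := \pi_\mathrm{b}(x)$. First I would verify admissibility: any $x \in \Phi_\mathrm{const}(S_\mathrm{b},T)$ satisfies $\phi_\mathrm{b}(x,0)=x\in X$ by definition of the constrained reachable set, so Assumption \ref{A1}(i) yields $u^\star\in U$. The point of this choice is that $f_\mathrm{PWA}(x,u^\star)=f_\mathrm{PWA,b}(x)$, which collapses the left-hand sides of \eqref{b} and \eqref{c} onto the backup flow.

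The key reduction step is a Clarke analogue of the smooth semigroup identity $\frac{\partial \phi_\mathrm{b}}{\partial x}(x,\tau)\,f_\mathrm{PWA,b}(x)=f_\mathrm{PWA,b}(\phi_\mathrm{b}(x,\tau))$: I claim that \emph{every} $Q\in\partial_\mathrm{C}\phi_\mathrm{b}(x,\tau)$ obeys $Q f_\mathrm{PWA,b}(x)=f_\mathrm{PWA,b}(\phi_\mathrm{b}(x,\tau))$. This is clean because both sides are single vectors ($f_\mathrm{PWA,b}$ is continuous): the identity holds at every $x_i$ where $\phi_\mathrm{b}(\cdot,\tau)$ is differentiable, survives the limit $x_i\to x$ by continuity of $f_\mathrm{PWA,b}$ and of $\phi_\mathrm{b}(\cdot,\tau)$, and survives the convex hull of Definition \ref{Clarke definition} since every generating Jacobian limit returns the same fixed right-hand side and convex weights sum to one. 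Substituting this identity, constraint \eqref{b} reduces to $\partial h_X\cdot f_\mathrm{PWA,b}(\phi_\mathrm{b}(x,\tau))\ge -\alpha(h_X(\phi_\mathrm{b}(x,\tau)))$ for all $\partial h_X\in\partial_\mathrm{C} h_X(\phi_\mathrm{b}(x,\tau))$ and all $\tau\in[0,T]$, and \eqref{c} reduces to $\partial h_\mathrm{b}\cdot f_\mathrm{PWA,b}(\phi_\mathrm{b}(x,T))\ge -\alpha_\mathrm{b}(h_\mathrm{b}(\phi_\mathrm{b}(x,T)))$ for all $\partial h_\mathrm{b}\in\partial_\mathrm{C} h_\mathrm{b}(\phi_\mathrm{b}(x,T))$.

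The terminal constraint \eqref{c} is then immediate: $x\in\Phi_\mathrm{const}(S_\mathrm{b},T)$ forces $\phi_\mathrm{b}(x,T)\in S_\mathrm{b}$, so Assumption \ref{A1}(ii) applied at $y=\phi_\mathrm{b}(x,T)$ gives the required inequality for every element of $\partial_\mathrm{C} h_\mathrm{b}$, with $\alpha_\mathrm{b}$ inherited from the assumption. For the path constraint \eqref{b} I would construct $\alpha$ from compactness of $X$. Let $\mathcal{Y}$ be the compact closure of all reachable trajectory points $\{\phi_\mathrm{b}(x,\tau): x\in\Phi_\mathrm{const}(S_\mathrm{b},T),\ \tau\in[0,T]\}\subseteq X$; local Lipschitzness of $h_X$ bounds $\partial_\mathrm{C} h_X$ on $\mathcal{Y}$ and the PWA field is bounded there, so $g(r):=\sup\{-\partial h_X\cdot f_\mathrm{PWA,b}(y): y\in\mathcal{Y},\ h_X(y)=r,\ \partial h_X\in\partial_\mathrm{C} h_X(y)\}$ is finite and bounded above on the relevant range of $r$. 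Any $\alpha\in\mathcal{K}_\infty$ dominating $g$ makes \eqref{b} hold at $u^\star$, so it remains to establish $\limsup_{r\to0^+} g(r)\le 0$, i.e.\ that at reachable boundary points the backup field does not point strictly outward.

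I expect this last point to be the main obstacle. At a reachable $y_0\in\mathcal{Y}$ with $h_X(y_0)=0$, the $C^1$-in-$\tau$ map $\tau\mapsto h_X(\phi_\mathrm{b}(x,\tau))$ attains its minimum value $0$, so the backup velocity $f_\mathrm{PWA,b}(y_0)$ lies in the tangent cone of $X$ at $y_0$; the difficulty is upgrading this to the \emph{all-elements} statement $\partial h_X\cdot f_\mathrm{PWA,b}(y_0)\ge 0$ for every $\partial h_X\in\partial_\mathrm{C} h_X(y_0)$, which is exactly what the ``for all $\partial h_X$'' quantifier in \eqref{b} demands. This requires combining the polyhedral tangent-cone structure at $y_0$ with the polarity between that cone and the Clarke gradient, together with upper semicontinuity of $\partial_\mathrm{C} h_X$ to pass the limits $y_k\to y_0$ and $\partial h_X^{(k)}\to\partial h_X^{(0)}$. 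The smooth Lemma \ref{lemma_safety} cannot be invoked here, since $h_X$, $h_\mathrm{b}$, and $\phi_\mathrm{b}$ are only Lipschitz, so this nonsmooth boundary analysis is where the real work lies; the remaining steps are routine once the Clarke flow identity and Assumption \ref{A1} are in hand.
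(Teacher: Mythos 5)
Your proposal starts exactly as the paper's proof does (exhibit $u^\star=\pi_\mathrm{b}(x)$ as a feasible point), and your Clarke flow identity is correct: since the right-hand side $f_\mathrm{PWA,b}(\phi_\mathrm{b}(x,\tau))$ is a single fixed vector, the identity $Q f_\mathrm{PWA,b}(x)=f_\mathrm{PWA,b}(\phi_\mathrm{b}(x,\tau))$ indeed survives both the limit operation and the convex hull in Definition \ref{Clarke definition}. This is actually a cleaner disposal of the set-valued sensitivity than the paper's own argument, which takes joint limits of $Dh_X$ and $D\phi_\mathrm{b}$ along a common sequence and must then worry about covering all pairs in the product of the two Clarke sets; and your treatment of the terminal constraint \eqref{c} via Assumption \ref{A1}(ii) is complete. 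The problem is the step you flag at the end, and it is not "real work" that a tangent-cone/polarity argument would finish --- it is a step that fails at the stated level of generality. Forward invariance of the backup trajectory only controls the Clarke \emph{directional} derivative, i.e.\ it gives $\max_{\partial h_X\in\partial_\mathrm{C}h_X(y_0)}\partial h_X\, f_\mathrm{PWA,b}(y_0)\geq 0$, whereas \eqref{b} demands the same sign for \emph{every} element, i.e.\ for the minimum. For a general Lipschitz $h_X$ these genuinely differ: if $h_X$ behaves like $|x_1|$ near a point $y_0$ with $h_X(y_0)=0$ that the backup flow crosses transversally (such a $y_0$ can even lie in the topological interior of $X$, since $h_X$ may touch zero without changing sign), the trajectory never leaves $X$, yet $\partial_\mathrm{C}h_X(y_0)$ contains two opposite covectors and one of them pairs strictly negatively with the velocity; since $\alpha(h_X(y_0))=\alpha(0)=0$, no $\alpha\in\mathcal{K}_\infty$ can absorb this, so $\limsup_{r\to0^+}g(r)\leq 0$ is unprovable from the theorem's hypotheses. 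The polarity argument you anticipate does go through when $h_X$ is a min of affine functions (Clarke gradient = convex hull of active gradients, polyhedral tangent cone = linearized cone), but the theorem allows arbitrary Lipschitz $h_X$.

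The paper closes this decisive step by an entirely different mechanism, which is precisely the one you declined to use: it never constructs $\alpha$ and never analyzes the boundary. At points where the relevant functions are differentiable it invokes the smooth-case Lemma \ref{lemma_safety} wholesale, inheriting the existence of $\alpha,\alpha_\mathrm{b}$ from that lemma; at non-differentiable points it writes the smooth inequality at nearby differentiable points $x_i$ with $u_i=\pi_\mathrm{b}(x_i)$ and passes to the limit, the generating-limit description of $\partial_\mathrm{C}$ extending the inequality to the Clarke sets. So the two arguments diverge exactly where you predicted the difficulty: you refuse to apply Lemma \ref{lemma_safety} to merely Lipschitz data (a scruple that is defensible --- its hypotheses are indeed violated) and consequently cannot finish, while the paper's proof finishes by leaning on that lemma "analogously" for the non-smooth case. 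As submitted, your proof is therefore incomplete: the path-constraint inequality near the zero level set of $h_X$ is asserted as a remaining obstacle rather than proved, and the route you sketch for it cannot succeed without additional structural assumptions on $h_X$.
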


\begin{proof}
	We will prove that the backup solution $\pi_\mathrm{b}(x)$ is a feasible solution to \eqref{safety filter pwa}. To do so, for any $\tau \in [0,T]$, we need to consider several cases depending on the differentiability of the functions $\phi_\mathrm{b}$, $h_\mathrm{b}$, and $h_X$. For clarity and without loss of generality, we consider the following two cases: (i) $\phi_{\mathrm{b}}(x, \tau)$ is differentiable, and (ii) $\phi_{\mathrm{b}}(x, \tau)$ is not differentiable. The same reasoning applies analogously when considering the differentiability or non-differentiability of $h_\mathrm{b}$ and $h_X$, and thus we omit the cases for these two functions for brevity. 
	
	(i) If $\phi_{\mathrm{b}}(x, \tau)$ is differentiable w.r.t. $x$ at $x = \bar x$, then $\partial_\mathrm{C}\phi_\mathrm{b}(x,\tau)$ reduces to the classical sensitivity $\frac{\partial \phi_{\mathrm{b}}(x, \tau)}{\partial x}$ and therefore the feasibility of $\pi_\mathrm{b}(x)$ follows from Lemma \ref{lemma_safety}.
	
	(ii) If $\phi_{\mathrm{b}}(x, \tau)$ is not differentiable w.r.t. $x$ at $x = \bar x$, noting Definition \ref{Clarke definition}, consider any sequence $\{x_i\}^{\infty}_{i=0}$ converging to $\bar x$ and satisfying that $\phi_{\mathrm{b}}(x, \tau)$ is differentiable w.r.t. $x$ at $x_i$. It follows from the analysis in the first case that by letting $ u_i = \pi_\mathrm{b} (x_i)$, we have that
	\begin{equation*}
		D\! h_X\!\left(\phi_{\mathrm{b}}( x_i,\tau)\right) \frac{\partial \phi_{\mathrm{b}}(x_i, \tau)}{\partial x_i} f_\mathrm{PWA}(x_i,u_i) \! \geq \!-\!\alpha\!\left(h_X\left(\phi_{\mathrm{b}}( x_i,\tau)\right)\right)
	\end{equation*}
	holds. By taking the limit $i \to \infty$ and noting that $\lim_{i \to \infty} \frac{\partial \phi_{\mathrm{b}}(x_i, \tau)}{\partial x_i} \in \partial_\mathrm{C} \phi_{\mathrm{b}}(\bar x, \tau )$ by Definition \ref{Clarke definition}, we get the feasibility of \eqref{b}. An analogous argument can be applied to prove the feasibility of \eqref{c}. The proof of Theorem \ref{Theorem_feasibility} is completed.
\end{proof}

To establish the forward invariance of $\Phi_{\text{const}}(S_{\mathrm{b}}, T)$, i.e., the infinite-horizon safety of $\pi_\mathrm{safe}$, we introduce the following key lemma based on the comparison theorem \cite[Lemma 3.4]{khalil2002nonlinear}. This lemma is applicable to invariance analysis for general nonlinear systems involving any non-smooth CBF.

\begin{lemma}\label{lemma_invariance}
	Consider the system \eqref{pwa}. Suppose that there exist a Lipschitz continuous controller $\pi(\cdot) : \mathbb{R}^n \to \mathbb{R}^m$, a Lipschitz continuous $h(\cdot): \mathbb{R}^n \to \mathbb{R}$, and a Lipschitz continuous function $ \alpha \in \mathcal{K}_{\infty}$ such that
	\begin{align}\label{cbf_condition}
		&\partial h(x) f(x,\pi(x)) \geq - \alpha (h(x)),\nonumber\\
		&\forall x \in \{x \in \mathbb{R}^n| h(x) \geq 0\} \text{ and } \forall \partial h(x) \in \partial_\mathrm{C} h(x).
	\end{align}
	As a result, if $h(x_0) \geq 0$, we have $h(\phi(x_0,t)) \geq 0,\;\forall t\geq 0$, where $\phi(x_0,t)$ is the solution of the system \eqref{pwa} controlled by $\pi$, starting from $x_0$.
\end{lemma}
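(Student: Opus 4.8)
The plan is to follow the classical barrier-function route---reduce forward invariance to a scalar differential inequality along the closed-loop trajectory and invoke the comparison theorem \cite[Lemma 3.4]{khalil2002nonlinear}---while replacing the gradient of $h$ by the generalized gradient. Write $f_\mathrm{cl}(x):=f(x,\pi(x))$ for the closed-loop vector field. Since $f$ is globally Lipschitz and $\pi$ is Lipschitz, $f_\mathrm{cl}$ is continuous, so $t\mapsto \phi(x_0,t)$ is continuously differentiable in $t$ with $\dot\phi(x_0,t)=f_\mathrm{cl}(\phi(x_0,t))$, and $v(t):=h(\phi(x_0,t))$ is locally Lipschitz in $t$ because $h$ is Lipschitz. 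Note that the non-smoothness here lies entirely in the dependence of $h$ on the state, not in the time dependence of the flow.

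The crucial step is to establish the pointwise lower bound
\begin{equation*}
	D_+ v(t):=\liminf_{s\to 0^+}\frac{v(t+s)-v(t)}{s}\ \ge\ -\alpha\bigl(v(t)\bigr)\quad\text{whenever } v(t)\ge 0 .
\end{equation*}
Using $\phi(x_0,t+s)=\phi(x_0,t)+s\,f_\mathrm{cl}(\phi(x_0,t))+o(s)$ together with the Lipschitz continuity of $h$, $D_+v(t)$ equals the lower directional derivative of $h$ at $\phi(x_0,t)$ along $f_\mathrm{cl}(\phi(x_0,t))$, which for a Lipschitz function is bounded below by the support-function expression $\min_{\partial h\in\partial_\mathrm{C} h(\phi(x_0,t))}\partial h\, f_\mathrm{cl}(\phi(x_0,t))$. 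Whenever $v(t)=h(\phi(x_0,t))\ge 0$, the hypothesis \eqref{cbf_condition} applies at $\phi(x_0,t)$ and guarantees $\partial h\, f_\mathrm{cl}(\phi(x_0,t))\ge-\alpha(v(t))$ for \emph{every} $\partial h\in\partial_\mathrm{C} h(\phi(x_0,t))$, so the minimizing element also obeys the bound. This is precisely where the all-elements formulation is indispensable: the relevant lower bound is attained at the worst-case generalized gradient, so the inequality must hold for all of them. Moreover, because the flow is $C^1$ in $t$, this bound holds at \emph{every} such $t$ rather than merely almost everywhere.

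Given the pointwise inequality, I would let $y(\cdot)$ solve $\dot y=-\alpha(y)$ with $y(0)=v(0)\ge 0$; as $\alpha$ is Lipschitz with $\alpha(0)=0$, this solution is unique, $0$ is an equilibrium, and hence $y(t)\ge 0$ for all $t\ge 0$. The comparison theorem then yields $v(t)\ge y(t)\ge 0$, which is the claim. I expect the comparison step to be the main obstacle, because the Dini inequality is \emph{conditional}: it is only available on the set $\{t:v(t)\ge 0\}$. The resolution is a continuation argument coupling $v$ to $y$: on any interval where $v\ge 0$ the comparison theorem gives $v\ge y$, and since $y\ge 0$ this in turn certifies $v\ge 0$ and lets the interval be extended; one then verifies that $v$ cannot leave $\{v\ge0\}$ by applying the comparison theorem across each candidate exit point, the delicate case being $v=0$ (equivalently $y\equiv0$), which is handled by comparing against the strictly ordered solution rather than by a single-point Dini estimate. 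It is worth emphasizing that the everywhere-valid bound from the previous step, enabled by the continuity of $f_\mathrm{cl}$ and the support-function characterization of $\partial_\mathrm{C} h$, is exactly what makes this comparison rigorous; a merely almost-everywhere chain rule along the trajectory would be insufficient.
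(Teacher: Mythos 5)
Your proposal is correct and follows essentially the same route as the paper's proof: bound the one-sided Dini derivative of $t \mapsto h(\phi(x_0,t))$ by the support-function characterization of Clarke's generalized directional derivative over $\partial_\mathrm{C} h$, use the all-elements hypothesis \eqref{cbf_condition} to obtain the scalar differential inequality, and conclude with the comparison theorem \cite[Lemma 3.4]{khalil2002nonlinear}. The only differences are cosmetic --- the paper works with $\bar h = -h$, upper Dini derivatives, and the $\max$ formula, whereas you work with $h$, lower Dini derivatives, and the $\min$ formula --- and your treatment is, if anything, slightly more careful, since you explicitly flag that the differential inequality is only available on $\{t : h(\phi(x_0,t)) \geq 0\}$ and sketch the continuation argument needed there, a point the paper's proof passes over silently.
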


\begin{proof}
	 See Appendix \ref{appendix_lemma_invariance}.
\end{proof}

\begin{remark}
	The result in Lemma \ref{lemma_invariance} establishes the forward invariance of $\{x \in \mathbb{R}^n| h(x) \geq 0\}$. This result extends safety guarantees from continuously differentiable CBFs to CBFs that are merely Lipschitz continuous. More importantly, in contrast to the non-smooth CBF analysis in \cite{glotfelter2017nonsmooth}, we do not require $\dot{h}(t) \geq-\alpha(h(t))$ for almost every $t$.
\end{remark}

With the safety guarantees of non-smooth CBFs given in Lemma \ref{lemma_invariance}, we are ready to present the main result on the safety of $\pi_\mathrm{safe}$.

\begin{theorem}[Forward invariance]\label{theoreom_forward}
	Consider the continuous PWA system \eqref{pwa_form}, the piecewise linear backup controller $\pi_\mathrm{b}$, the Lipschitz continuous backup CBF $h_\mathrm{b}$, and the proposed all-elements PSF \eqref{safety filter pwa}. If Assumption \ref{A1} holds and the policy $\pi_\mathrm{safe}$ is locally Lipschitz continuous, then $\pi_\mathrm{safe}$ renders $\Phi_{\text {const}}\left(S_{\mathrm{b}}, T\right)$ forward invariant.
\end{theorem}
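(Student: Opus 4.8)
The plan is to recognize $\Phi_{\text{const}}(S_\mathrm{b},T)$ as the zero-superlevel set of the predictive CBF $h$ in \eqref{CBF_candidate} and then invoke Lemma \ref{lemma_invariance} with $h$, the controller $\pi_\mathrm{safe}$, and a suitably combined class-$\mathcal{K}_\infty$ function. First I would verify the set identity $\Phi_{\text{const}}(S_\mathrm{b},T)=\{x\mid h(x)\geq 0\}$: since $h(x)=\min\{\min_{\tau\in[0,T]}h_X(\phi_\mathrm{b}(x,\tau)),\,h_\mathrm{b}(\phi_\mathrm{b}(x,T))\}$, the condition $h(x)\geq 0$ is equivalent to $\phi_\mathrm{b}(x,\tau)\in X$ for all $\tau\in[0,T]$ together with $\phi_\mathrm{b}(x,T)\in S_\mathrm{b}$, which is exactly the defining condition of the constrained reachable set. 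I would also record that $h$ is Lipschitz continuous: $h_X$ and $h_\mathrm{b}$ are Lipschitz, $\phi_\mathrm{b}(\cdot,\tau)$ depends Lipschitz-continuously on the initial state (with constant $\exp(L\tau)$, as established after \eqref{closed_loop_pwa}), and the pointwise minimum of a uniformly Lipschitz family preserves Lipschitz continuity over the compact horizon $[0,T]$.

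The core of the argument is to show that $h$ satisfies the non-smooth CBF inequality \eqref{cbf_condition} under $\pi_\mathrm{safe}$, which I would establish through two layers of Clarke calculus. For the composite maps, Clarke's chain rule gives $\partial_\mathrm{C}(h_X\circ\phi_\mathrm{b})(x,\tau)\subseteq\operatorname{conv}\{\partial h_X\,Q:\partial h_X\in\partial_\mathrm{C}h_X(\phi_\mathrm{b}(x,\tau)),\,Q\in\partial_\mathrm{C}\phi_\mathrm{b}(x,\tau)\}$, and analogously for $h_\mathrm{b}\circ\phi_\mathrm{b}$ at $\tau=T$. For the outer pointwise minimum over the continuum $\tau\in[0,T]$ together with the terminal term, a Danskin/Clarke-type rule for marginal functions yields that every element of $\partial_\mathrm{C}h(x)$ is a convex combination of composite Clarke gradients evaluated only at active indices, i.e.\ those $\tau$ achieving the minimum (and the terminal term, if active). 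I would then set $\tilde\alpha:=\max\{\alpha,\alpha_\mathrm{b}\}$, which remains class $\mathcal{K}_\infty$ and Lipschitz. At any active state index $\tau^\ast$ one has $h_X(\phi_\mathrm{b}(x,\tau^\ast))=h(x)$, so constraint \eqref{b} evaluated at $u=\pi_\mathrm{safe}(x)$ gives, for each admissible $\partial h_X\,Q$, the bound $\partial h_X\,Q\,f_\mathrm{PWA}(x,\pi_\mathrm{safe}(x))\geq-\alpha(h(x))\geq-\tilde\alpha(h(x))$; likewise, if the terminal term is active then $h_\mathrm{b}(\phi_\mathrm{b}(x,T))=h(x)$ and \eqref{c} gives $\geq-\alpha_\mathrm{b}(h(x))\geq-\tilde\alpha(h(x))$. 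Taking the convex combination, every $\zeta\in\partial_\mathrm{C}h(x)$ satisfies $\zeta\,f_\mathrm{PWA}(x,\pi_\mathrm{safe}(x))\geq-\tilde\alpha(h(x))$, which is precisely \eqref{cbf_condition}.

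With the hypothesis that $\pi_\mathrm{safe}$ is locally Lipschitz and with feasibility on $\Phi_{\text{const}}(S_\mathrm{b},T)$ guaranteed by Theorem \ref{Theorem_feasibility} (so that $\pi_\mathrm{safe}(x)$ genuinely satisfies \eqref{b} and \eqref{c}), all hypotheses of Lemma \ref{lemma_invariance} are met with the Lipschitz CBF $h$ and the class-$\mathcal{K}_\infty$ function $\tilde\alpha$. Hence $\{x\mid h(x)\geq 0\}=\Phi_{\text{const}}(S_\mathrm{b},T)$ is forward invariant under $\pi_\mathrm{safe}$, which is the claim.

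I expect the main obstacle to be the second layer of the Clarke calculus: rigorously justifying that the generalized gradient of the marginal (pointwise-min) function over the continuum $\tau\in[0,T]$ is contained in the convex hull of the composite Clarke gradients at active indices only. This needs the parametrized map $(x,\tau)\mapsto h_X(\phi_\mathrm{b}(x,\tau))$ to be jointly continuous and Lipschitz in $x$ uniformly over the compact set $[0,T]$ (which follows from the flow estimate after \eqref{closed_loop_pwa}), and care is required because the inner chain rule already produces set-valued, merely upper-semicontinuous gradients. Relating the active-index structure of the minimum to the Clarke gradient — without assuming differentiability — is exactly where the \emph{all-elements} formulation of \eqref{safety filter pwa} becomes essential, since enforcing the inequality for \emph{every} $Q\in\partial_\mathrm{C}\phi_\mathrm{b}$ and \emph{every} $\partial h_X\in\partial_\mathrm{C}h_X$ is what guarantees the bound survives the convex combination defining $\partial_\mathrm{C}h(x)$.
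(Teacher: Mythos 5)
Your proposal follows essentially the same route as the paper's proof: identify $\Phi_{\text{const}}(S_\mathrm{b},T)$ with the zero-superlevel set of the predictive CBF $h$ in \eqref{CBF_candidate}, show via the Clarke chain rule and the all-elements constraints \eqref{b}--\eqref{c} that $h$ satisfies \eqref{cbf_condition} under $\pi_\mathrm{safe}$, and conclude with Lemma \ref{lemma_invariance}. You are in fact more explicit than the paper on two points it treats loosely --- the marginal-function (active-index) rule for the Clarke gradient of the pointwise minimum over $\tau\in[0,T]$, and the combined function $\tilde\alpha=\max\{\alpha,\alpha_\mathrm{b}\}$ needed at states where both the running and terminal terms are active --- so there are no gaps to report.
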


\begin{proof}
	Given $T\geq 0$, like \cite{chen2021backup}, we consider the CBF candidate in \eqref{CBF_candidate}, which is Lipschitz continuous and satisfies $\Phi_{\text {const}}\left(S_{\mathrm{b}}, T\right) = \{x\in\mathbb{R}^n| h(x) \geq 0\}$ according to \cite[Lemma 2]{chen2021backup}.
	
	For each $x \in \Phi_{\text {const}}\left(S_{\mathrm{b}}, T\right)$, based on \eqref{CBF_candidate}, we can distinguish two cases: (i) $h(x) = h_X\left(\phi_\mathrm{b}(x(t), \tau')\right)$ with a certain $\tau' \in [0,T]$ and (ii) $h(x) = h_\mathrm{b}\left(\phi_\mathrm{b}(x(t), T)\right)$. 
	
	For the first case, we define the composite function $h_\phi = h_X \circ \phi_\mathrm{b}$. Given $\tau' \in [0,T]$, we have the following expression for the Clarke derivative of the composite function \cite[Theorem 2.3.9]{clarke1990optimization}:
	\begin{align*}
		&\partial_\mathrm{C} h_\phi(x,\tau')\\
        \subset& \mathrm{conv}\left\{\partial h_X \;Q,\; \partial h_X \in \partial_\mathrm{C} h_X(\phi_{\mathrm{b}}(x,\tau')),\; Q \in \partial_\mathrm{C}\phi_\mathrm{b}(x,\tau') \right\}.
	\end{align*}
	Then, we can apply the chain rule to obtain the Clarke derivative of the composite function $h$, that is, for any $\partial h \in \partial_\mathrm{C} h(x)$,
	\begin{align}\label{deviation_CBF}
		&\partial h f_\mathrm{PWA}(x,\pi_\mathrm{safe}(x)) + \alpha (h(x)) \nonumber\\
		=&  \partial h_\phi f_\mathrm{PWA}(x,\pi_\mathrm{safe}(x)) + \alpha (h_X(x)),\;\exists \partial h_\phi \in \partial_\mathrm{C} h_\phi(x,\tau')\nonumber\\
		=& \partial h_X Q  f_\mathrm{PWA}(x,\pi_\mathrm{safe}(x)) + \alpha\left(h_X\left(\phi_{\mathrm{b}}( x,\tau')\right)\right),\nonumber\\
		& \quad\quad\quad\quad \exists \partial h_X \in \partial_\mathrm{C} h_X(\phi_\mathrm{b}(x, \tau')),
		 Q \in \partial_\mathrm{C}\phi_\mathrm{b}(x,\tau')\nonumber\\
		\geq &0,
	\end{align}
	where the second equality is obtained by applying the chain rule to $\partial_\mathrm{C} h_\phi$, and the last line follows from the fact that  $\pi_\mathrm{safe}$ satisfies the constraint \eqref{b}. 
	
	An analogous argument to that in \eqref{deviation_CBF} can be applied for the second case, yielding $\partial h f_\mathrm{PWA}(x,\pi_\mathrm{safe}(x))+ \alpha_\mathrm{b} (h(x)) \geq 0$ for any $\partial h  \in \partial_\mathrm{C} h(x)$.
	
	As a consequence, the CBF candidate $h$ satisfies \eqref{cbf_condition} in Lemma \ref{lemma_invariance}. If $\pi_\mathrm{safe}$ is locally Lipschitz continuous, applying Lemma \ref{lemma_invariance} yields $h(\phi_\mathrm{safe}(x_0,t)) \geq 0,\;\forall t\geq 0$, if $x_0 \in \Phi_{\text {const}}\left(S_{\mathrm{b}}, T\right)$. Here, $\phi_\mathrm{safe}(x_0,t)$ is the solution of the system \eqref{pwa_form} starting from $x_0$ controlled by $\pi_\mathrm{safe}$. This proves the forward invariance of $\Phi_{\text {const}}\left(S_{\mathrm{b}}, T\right)$ under $\pi_\mathrm{safe}$.
\end{proof}

Theorems \ref{Theorem_feasibility} and \ref{theoreom_forward} state that when there are non-smooth CBF constraints, the all-elements PSF can ensure that the system always satisfies the constraints, provided that $\pi_\mathrm{safe}$ is locally Lipschitz continuous. Analyzing the conditions under which $\pi_\mathrm{safe}$ is Lipschitz continuous is left for future work.

\section{Non-smooth analysis of PWA systems}\label{section_nonsmooth}

% Furthermore, according to \cite[Theorem 3.4]{khalil2002nonlinear}, for any two initial states $x_0$ and $y_0$, the solutions of \eqref{closed_loop_pwa} satisfy $||\phi_\mathrm{b}(x_0,\tau)- \phi_\mathrm{b}(y_0,\tau)|| \leq ||x_0 - y_0|| \mathrm{exp} (L \tau),\; \forall \tau>0$. This proves the Lipschitz continuous dependence of $\phi_\mathrm{b}(x_0,\tau)$ on the initial state $x_0$ in $\mathbb{R}^n$ for any finite $\tau$. Additionally, according to Rademacher's Theorem \cite{evans2018measure}, $\phi_\mathrm{b}(x_0,\tau)$ is differentiable w.r.t. $x_0$ almost everywhere. 

%Below, we provide a more detailed analysis of conditions under which the generalized Clarke derivative can be used to describe the sensitivity of $\phi_\mathrm{b}$ w.r.t. $x_0$.

In the previous section, we have shown sufficient conditions to ensure formal guarantees for feasibility and safety for the proposed all-element PSF \eqref{safety filter pwa}. However, To construct the proposed PSF, we need to compute the generalized Clarke derivative for $\phi_\mathrm{b}$ (Clarke sensitivity) and the generalized Clarke derivatives $\partial_\mathrm{C} h_X$ and $\partial_\mathrm{C} h_\mathrm{b}$. Computing $\partial_\mathrm{C} h_X$ and $\partial_\mathrm{C} h_\mathrm{b}$ is generally feasible, since their explicit forms are usually known. Finding a generalized Clarke derivative for $\phi_\mathrm{b}$ requires computing classical Jacobians for all converging sequences $x_i$ where $\phi_\mathrm{b}$ is differentiable, and it is generally intractable. This section addresses this challenge for the considered PWA system.

The following definition is useful for characterizing the behavior of PWA systems:
\begin{definition}[Time-stamped switching sequence]
	Consider the PWA system \eqref{closed_loop_pwa}. Given the initial state $x_0$, the \emph{time-stamped switching sequence} of the solution $\phi_\mathrm{b}(x_0,\tau)$, denoted as 
	\begin{align}\label{switching}
		\Gamma(x_0) = &\{(\tau_0,I_0),\;(\tau_1,I_1),\;...,\;(\tau_M,I_M)\},\nonumber\\
		&\text{ with }I_k \subseteq \mathbb{N}^+_r,\;\forall k\in \mathbb{N}_M,
	\end{align}
	is the ordered sequence of time instants and the indices of regions such that 
	\begin{itemize}
		\item $\tau_0 =0$ and $x_0 \in \mathrm{clo}(\mathcal{R}_{I_0})$;
		\item $\phi_\mathrm{b}(x_0,\tau) \in \mathrm{clo}(\mathcal{R}_{i}),\;\forall i \in I_k\text{ and } \forall \tau \in [\tau_k, \tau_{k+1}) $;
		\item $I_k \ne I_{k+1},\;\forall k \in \mathbb{N}_{M-1}$ and $\tau_0 < \tau_1<...<\tau_M$.
	\end{itemize}
	In \eqref{switching}, $r$ is the number of polyhedral regions of \eqref{closed_loop_pwa} and $M$ is the number of switching times. 
\end{definition}

The switching sequence can be infinite, i.e., $M = \infty$, meaning that the system exhibits zeno behavior if $\tau_M$ is finite or chattering if $\tau_M = \infty$. Besides, each $I_k$ can contain more than one element. In this case, the system operates on the shared boundary of multiple regions of \eqref{closed_loop_pwa} during $[\tau_k,\tau_{k+1})$. If $I_k$ consists of multiple elements, then we define $\mathcal{R}_{I_k} := \bigcup_{i \in I_k} \mathcal{R}_i$.

\subsection{Aumann sensitivity for PWA systems}
In this subsection, we propose an alternative and computable generalized sensitivity $Q_\mathrm{A}(x_0,\tau)$ using the Aumann integral \cite{aumann1965integrals}. In subsequent subsections, we will show that, under certain assumptions, this formulation is equivalent to the Clarke sensitivity.

%We can write the solution to \eqref{closed_loop_pwa} as 
%\begin{equation}\label{key}
%	\phi_\mathrm{b}(x_0,\tau) = x_0 + \int_{0}^\tau f_\mathrm{PWA,b}(\phi_\mathrm{b}(x_0,s)) d s. 
%\end{equation}

\begin{definition}[Aumann sensitivity]\label{Aumann}
	Consider the PWA system \eqref{closed_loop_pwa} and its solution $\phi_\mathrm{b}$. The \emph{Aumann sensitivity} for any initial state $x_0$ at time $\tau$, denoted by $Q_\mathrm{A}(x_0,\tau)$, is implicitly defined by
	\begin{align}\label{integral}
		Q_\mathrm{A}(x_0,\tau)&: =  \left\{Q(x_0,\tau) \left| Q(x_0,\tau)=I_n\right. \right. \nonumber\\
        &+  \int_{0}^\tau \partial f_\mathrm{PWA,b} (\phi_{\mathrm{b}}( x_0,s)) Q(x_0,s)d s,\nonumber\\
		&\left.\partial f_\mathrm{PWA,b}(x)\in \{D_i | i \text{ s.t. }x \in \mathrm{clo}(\mathcal{R}_i)\}\right\}.
	\end{align}
\end{definition}

The Aumann integral is introduced to handle cases where the integration is carried out on all possible values of the integrand. As a result, the Aumann sensitivity naturally becomes a set-valued function. When this sensitivity takes a single value at a point, it coincides with the classical Jacobian of $\phi_\mathrm{b}$ at that point. 

Given the initial state $ x_0$, denote by $I(x_0)$ the Cartesian product of the switching sequence $\{I_0, I_1,...,I_M\}$. With this definition, the Aumann sensitivity is expressed as:
\begin{align}\label{generalized sensitivity computation}
	&Q_\mathrm{A}(x_0,\tau) \nonumber\\
    \!=\!& \left\{Q(x_0,\tau) \left| Q(x_0,\tau)\! =\! I \!+  \!\sum_{k=0}^{\bar{k}} \int_{\tau_k}^{\min\{\tau,\tau_{k+1}\}} D_{i_k} Q(x_0,s) d s,\begin{array}{l}
		\;\\\;
	\end{array}\right.\right. \nonumber\\
	&\left.\begin{array}{l}
		\;\\\;\\\;
	\end{array}(i_0,i_1,...,i_M) \in I(x_0)\right\},\tau \!\in\! [\tau_{\bar{k}}, \tau_{\bar{k}+1}),\bar{k} \!\leq\! M\!-\!1.
\end{align}
Each $Q$ is single-valued and can be obtained by the standard sensitivity computation \cite[Chapter 3.3]{khalil2002nonlinear}. The cardinality of $Q_\mathrm{A}(x_0,\tau)$ is less than or equal to $\Pi_{k \in \mathbb{N}_M} |I_k|$. 

\subsection{Properties of the Aumann sensitivity}

The ultimate objective of this section is to show the equivalence between the Aumann sensitivity and the Clarke sensitivity. Before doing that, we first study the properties of the Aumann sensitivity.

Without loss of generality, suppose that the closure of each $\mathcal{R}_i$ admits the hyperplane representation $\{x\in \mathbb{R}^n | H_i x + h_i \leq 0\}$, where $H_i \in \mathbb{R}^{m_i \times n}$, $h_i \in \mathbb{R}^{m_i}$, and $m_i$ is the number of inequalities that constitute $\mathcal{R}_i$. Denote by 
\begin{equation*}
	\Gamma_I: = \{x\in \mathbb{R}^n | \bar{g}_{I} (x):= g_{I} x+b_{I}= 0\}
\end{equation*}
the common boundary of two or more adjacent regions $\mathcal{R}_I$, where $I$ is a set of indices of the adjacent regions. Here, $g_{I} \in \mathbb{R}^{n_{g_I} \times n}$, $b_{I} \in \mathbb{R}^{n_{g_I}}$, and $\bar{g}_{I}(\cdot):\mathbb{R}^n \to \mathbb{R}^{n_{g_I}}$.

From \eqref{integral} we know that $Q_\mathrm{A}$ could take multiple values if the solution $\phi_\mathrm{b}$ stays on the common boundary of some regions of \eqref{closed_loop_pwa} during a period. This is because during that period, the integral term can take multiple values. This motivates us to check if there is a non-empty set on the boundary such that the solution $\phi_\mathrm{b}$ belongs to this set during a certain period. We consider the following \emph{critical set} (could be empty):
\begin{align}\label{crtical set}
	\mathcal{C}&:= \bigcup_{I \in\mathcal{I}} \mathcal{C}_I, \text{ where } \mathcal{C}_I= \bigcap_{i\in I}\mathrm{clo}(\mathcal{R}_i)\nonumber \\ \mathcal{I}&: =\{I \subseteq \mathbb{N}^+_r | \{\mathcal{R}_i\}_{i\in I} \text{ are adjacent}, \bar{g}^{(q)}_{I}=0,\;\forall q \in \mathbb{N}_n,\nonumber\\
	&\quad \quad \quad \quad \quad \quad \text{ and }\exists i,j\in I \text{ s.t. } D_i \ne D_j  \}.
\end{align}
In \eqref{crtical set}, $\bar{g}^{(q)}_{I}$ is the $q$-th order element-wise time derivative of $\bar{g}_{I}$ along the PWA system \eqref{closed_loop_pwa}. Because the first $q$ order time derivatives of $\bar{g}_{I}$ are zero, $\bar{g}_{I}$ remains constant according to the Cayley–Hamilton Theorem \cite{decell1965application}. Therefore, we can interpret $\mathcal{C}$ as the set where the trajectory is possible to stay on a boundary $\bar{g}_{I} (x) =0$ of some polyhedra $\mathcal{R}_i,\; i\in I $ for some periods. 

For any initial state $x_0$, define the \emph{forward reachable set} as $\Phi_\mathrm{for}(x_0): = \bigcup_{\tau\in [0,\infty)} \{\phi_\mathrm{b}(x_0,\tau)\} $. Similarly, for any set $S$ of states, define the \emph{backward reachable set} $\Phi_\mathrm{back}(S) :=\{x \in \mathbb{R}^n | \exists \tau\in [0,\infty) \text{ s.t. } \phi_\mathrm{b}(x,\tau) \in S \}$. We have the following result on the relation among $\mathcal{C}$, $Q_\mathrm{A}$, and the classical Jacobian (if it exists).

\begin{proposition}\label{proposition1}
	The following statements are equivalent:
	
	(1) The classical Jacobian $\frac{\partial \phi_\mathrm{b}(x_0,\tau)}{\partial x_0}$ of $\phi_\mathrm{b}(x_0,\tau)$ exists at $x_0 = \bar x_0$ for all $\tau\geq 0$.
	
	(2) The Aumann sensitivity $Q_\mathrm{A}(x_0,\tau)$ is single-valued at $x_0 = \bar x_0$ for all $\tau\geq 0$.
	
	(3) $\Phi_\mathrm{for}(\bar x_0) \cap \mathcal{C} = \emptyset$.
	
	(4) $\bar x_0 \notin \Phi_\mathrm{back}(\mathcal{C})$.
	
	As a consequence, if $\mathcal{C}$ is empty, the Aumann sensitivity $Q_\mathrm{A}$ defined in \eqref{integral} is single-valued for any $x_0$, which implies the global existence of the classical Jacobian $\frac{\partial \phi_\mathrm{b}(x_0,\tau)}{\partial x_0}$.
\end{proposition}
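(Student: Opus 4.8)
The plan is to prove the equivalences through the chain $(1)\Leftrightarrow(3)$, $(2)\Leftrightarrow(3)$, $(3)\Leftrightarrow(4)$, treating $(3)$ as the central geometric characterization, and then to read off the final consequence by specializing to $\mathcal{C}=\emptyset$. The equivalence $(3)\Leftrightarrow(4)$ is immediate from the definitions of the forward and backward reachable sets: $\bar x_0\in\Phi_\mathrm{back}(\mathcal{C})$ holds precisely when there exists $\tau\geq0$ with $\phi_\mathrm{b}(\bar x_0,\tau)\in\mathcal{C}$, which is exactly the statement $\Phi_\mathrm{for}(\bar x_0)\cap\mathcal{C}\neq\emptyset$; negating both sides gives $(3)\Leftrightarrow(4)$.

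For $(2)\Leftrightarrow(3)$ I would work directly with the explicit representation \eqref{generalized sensitivity computation}. The cardinality of $Q_\mathrm{A}(\bar x_0,\tau)$ exceeds one only if some segment $[\tau_k,\tau_{k+1})$ of the switching sequence has $|I_k|\geq 2$ with the matrices $\{D_i\}_{i\in I_k}$ not all equal, so that distinct selections $i_k\in I_k$ yield distinct $D_{i_k}$. I would first observe that $|I_k|\geq 2$ forces $\phi_\mathrm{b}(\bar x_0,\cdot)$ to remain on the shared boundary $\Gamma_{I_k}$ throughout a positive-length interval, so $\bar g_{I_k}$ and all its time derivatives vanish there; consequently such a segment contributes a point of $\mathcal{C}$ exactly when $D_i\neq D_j$ for some $i,j\in I_k$, which is the defining condition of $\mathcal{I}$. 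This identifies ``$Q_\mathrm{A}$ multi-valued for some $\tau$'' with ``$\Phi_\mathrm{for}(\bar x_0)$ meets a critical segment,'' i.e.\ with $\neg(3)$. To confirm that distinct selections genuinely produce distinct sensitivities rather than coincidentally equal ones, I would fix all selections outside the critical segment and compare $Q(\tau)=e^{D_i(\tau-\tau_k)}Q(\tau_k)$ with $e^{D_j(\tau-\tau_k)}Q(\tau_k)$ for small $\tau-\tau_k>0$; since $Q(\tau_k)$ is a product of matrix exponentials and hence invertible, equality for all such $\tau$ would force $(D_i-D_j)Q(\tau_k)=0$ and thus $D_i=D_j$, a contradiction.

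For the remaining link I would prove $(1)\Leftrightarrow(3)$ directly, which together with $(2)\Leftrightarrow(3)$ closes the chain, using the fact noted after Definition \ref{Aumann} that a single-valued $Q_\mathrm{A}$ coincides with the classical Jacobian. In the direction $(3)\Rightarrow(1)$, if $\Phi_\mathrm{for}(\bar x_0)$ avoids $\mathcal{C}$ then every boundary interaction of $\phi_\mathrm{b}(\bar x_0,\cdot)$ is either a transversal crossing or an interval spent on a non-critical boundary where all active $D_i$ agree. For a \emph{continuous} PWA system the vector fields match on each boundary, so the saltation matrix at a transversal crossing reduces to the identity and the variational solution stays continuous; combined with the unambiguous linearization on non-critical segments (and finitely many crossings on each compact time interval), the flow is differentiable in $x_0$ and $\frac{\partial\phi_\mathrm{b}(\bar x_0,\tau)}{\partial x_0}$ exists for all $\tau$. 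The converse $(1)\Rightarrow(3)$, i.e.\ its contrapositive $\neg(3)\Rightarrow\neg(1)$, is where the real difficulty lies: once $\phi_\mathrm{b}(\bar x_0,\cdot)$ slides on a critical boundary with $D_i\neq D_j$, the flow must be shown to lose Fr\'echet differentiability at $\bar x_0$. The plan is to express $\phi_\mathrm{b}(\cdot,\tau)$ near $\bar x_0$ as a piecewise-$C^1$ map whose pieces correspond to the switching sequences of perturbed trajectories, and to argue that the one-sided sensitivity limits obtained by approaching through $\mathrm{int}(\mathcal{R}_i)$ and through $\mathrm{int}(\mathcal{R}_j)$ equal the two distinct elements $Q_i,Q_j\in Q_\mathrm{A}(\bar x_0,\tau)$ separated in the previous step; since these generators disagree, no single linear map can serve as the derivative.

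The delicate point, and the step I expect to be the main obstacle, is justifying this one-sided-limit picture for a continuous PWA system: the field is tangent to the critical boundary, so perturbations off the boundary re-enter the dynamics $D_i$ or $D_j$ and must be tracked to first order in the perturbation, with the attractive versus grazing behaviour of the tangential mode handled carefully so that both sensitivity generators are actually realized as approach limits. Once $(1)\Leftrightarrow(2)\Leftrightarrow(3)\Leftrightarrow(4)$ is established, the final consequence is immediate: if $\mathcal{C}=\emptyset$ then $(3)$ holds for every $\bar x_0$, so $Q_\mathrm{A}$ is single-valued everywhere, and by $(1)$ the classical Jacobian $\frac{\partial\phi_\mathrm{b}(x_0,\tau)}{\partial x_0}$ exists globally.
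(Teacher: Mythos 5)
Your handling of (3) $\Leftrightarrow$ (4) and (2) $\Leftrightarrow$ (3) coincides with the paper's proof: the former is definitional, and for the latter the paper likewise identifies multi-valuedness of $Q_\mathrm{A}$ with the existence of ``critical periods'' during which the trajectory slides on a shared boundary with $D_i \neq D_j$, via the same Cayley--Hamilton/invariant-manifold reasoning you invoke. Your additional step showing that distinct selections genuinely yield distinct sensitivities---invertibility of $Q(x_0,\tau_k)$ forces $e^{D_i(\tau-\tau_k)}Q(x_0,\tau_k) \neq e^{D_j(\tau-\tau_k)}Q(x_0,\tau_k)$ whenever $D_i \neq D_j$---is a worthwhile tightening of a point the paper's proof of (2) $\Rightarrow$ (3) leaves implicit.

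The genuine gap is in your link to statement (1). The paper never attempts your direct argument: it routes (1) through (2), asserting that the equivalence ``stems from Definition \ref{Aumann}'' (a single-valued Aumann sensitivity coincides with the classical Jacobian). You instead try to prove (1) $\Leftrightarrow$ (3), and while your direction (3) $\Rightarrow$ (1) is plausible (continuity of the PWA field trivializes the saltation matrices at transversal crossings), the contrapositive $\neg(3) \Rightarrow \neg(1)$ is both unfinished---you flag the ``attractive versus grazing'' tracking as unresolved---and, as sketched, logically insufficient. For a Lipschitz map, exhibiting two distinct limits of classical Jacobians along sequences approaching $\bar x_0$ through $\mathrm{int}(\mathcal{R}_i)$ and $\mathrm{int}(\mathcal{R}_j)$ does \emph{not} by itself preclude differentiability at $\bar x_0$: differentiability at a point imposes no constraint on the Jacobians at nearby points (the scalar function $x^2\sin(1/x)$ is differentiable at the origin although its nearby derivatives have multiple limit points). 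To conclude non-differentiability you would need the directional derivatives of $\phi_\mathrm{b}(\cdot,\tau)$ \emph{at} $\bar x_0$ to be realized as $Q_i v$ and $Q_j v$ with $Q_i v \neq Q_j v$, which requires establishing that the switching-sequence partition of the initial-condition space is piecewise $C^1$ up to its boundary and radially accessible at $\bar x_0$---precisely the first-order tracking you postpone. Note that the closely related claim, that every element of $Q_\mathrm{A}$ is attained as a limit of classical Jacobians at nearby points, is in the paper the content of Proposition \ref{proposition_equivalence}, and is proved there only with substantially more machinery (Clarke's inverse function theorem under Assumption \ref{assumption_invert}); your proposal effectively needs a piece of that result inside Proposition \ref{proposition1} without the tools to deliver it. So the chain (2) $\Leftrightarrow$ (3) $\Leftrightarrow$ (4) and the final consequence are sound, but the equivalence with (1) is not established.
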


\begin{proof}
	See Appendix \ref{appendix_proposition1}.
\end{proof}

The critical set $\mathcal{C}$ is a union of some polyhedra $\mathcal{C}_{I},\; I \in \mathcal{I} $, with the dimension strictly lower than $n$. By using the Lie derivative for \eqref{crtical set}, each $\mathcal{C}_{I},\;I \in \mathcal{I}$ can be expressed as
\begin{equation}\label{SI}
	\mathcal{C}_I = \left\{ 
	x \in \mathbb{R}^n \,\middle|\, 
	\begin{aligned}
	\left.{\begin{aligned}
		&g_I x + b_I = 0 \\
		&g_I(D_i x + d_i) = 0 \\
		&g_I D_i (D_i x + d_i) = 0 \\
		&\vdots \\
		&g_I D_i^{n-1}(D_i x + d_i) = 0 \\
		&H_i x + h_i \leq 0 
	\end{aligned}} \right\} \forall i \in I \\
 \text{ and }\exists i, j \in I \text{ s.t. } D_i \ne D_j
	\end{aligned}
	\right\}
\end{equation}
The non-emptiness of each $\mathcal{C}_{I}$ can be determined by checking the feasibility of a linear program. %Furthermore, note that due to the continuity of the PWA system, $D_i x + d_i = D_j x + d_j,\;\forall x\in \Gamma_I$. This means that some equalities in \eqref{SI} are redundant. %Therefore, the linear program contains $(n+1)n_{g_I}$ equality constraints and $\sum_{i\in I} m_i$ inequality constraints.

\begin{remark}
	Proposition \ref{proposition1} implies that the sensitivity $Q_\mathrm{A}$ is uniquely defined as long as the state does not remain on a cell boundary for any time interval, but always “passes through”. An important fact indicated by Proposition \ref{proposition1} is that the non-existence of $\frac{\partial \phi_\mathrm{b}\left(x_0, \tau \right)}{\partial x_0}$ does not necessarily occur on the boundary of the PWA system; rather, it can also take place in the interior of any polyhedron $\mathcal{R}_i$. The following example illustrates the set $\mathcal{C}$ and also the set of initial states at which $\frac{\partial \phi_\mathrm{b}\left(x_0, \tau \right)}{\partial x_0}$ fails to exist.
\end{remark}

\textbf{Example 1.} Consider a two-dimensional PWA system illustrated in Fig. \ref{example1}. The system is partitioned by three lines: $x_1=0$, $x_1=2$, and $x_2=0$, resulting in six polyhedra. The critical set $\mathcal{C}$, according to \eqref{SI}, consists of the positive segment of the $x_1$-axis. For every initial point along the red and blue lines, when the trajectory $\phi_\mathrm{b}$ reaches the right-half plane, it lacks a well-defined Jacobian. To prove this, we first note that for any initial state $\bar x_0$ starting from the red line $(\mathcal{C})$, the solution will always remain in $\mathcal{C}$. Therefore, $\Phi_\mathrm{for}(\bar x_0) \cap \mathcal{C} \ne \emptyset$, and $\frac{\partial \phi_\mathrm{b}(x_0,\tau)}{\partial x_0}$ does not exist, according to Proposition \ref{proposition1}. Then, we analyze the initial states in the first quadrant. By solving the affine system in this quadrant, we get the evolution: $x_1(\tau) = x_1(0)+\tau,\;x_2(\tau) = (x_1(0)+x_2(0)+1)e^\tau-\tau-x_1(0)-1$, for $\tau \geq 0$ such that $x_1(\tau)\leq 0$ and $x_2(\tau)\geq 0$. If $\Phi_\mathrm{for}(\bar x_0) \cap \mathcal{C} \ne \emptyset$, at a certain time instant $\tau_{k_1}$ it must hold that $x_1(\tau_{k_1}) = x_2(\tau_{k_1}) = 0$. Otherwise, the trajectory can only converge to the red line asymptotically. From $x_1(\tau_{k_1}) = x_2(\tau_{k_1}) = 0$, we can derive the relation:
\begin{equation}\label{eq_example}
	x_2(0) = e^{x_1(0)}-x_1(0) -1.
\end{equation}
The blue line corresponds to \eqref{eq_example} in the first quadrant.

By simulating the system starting from the points near the blue line, we can find that even a small perturbation on the initial state will result in a large gap when the trajectory goes into the right-half plane. This validates that the classical Jacobian does not exist at the initial states on the blue curve after the critical time when the trajectory enters the red line. 

\begin{figure}\label{tra_figure}
	\centering
	\includegraphics[width=160pt,clip]{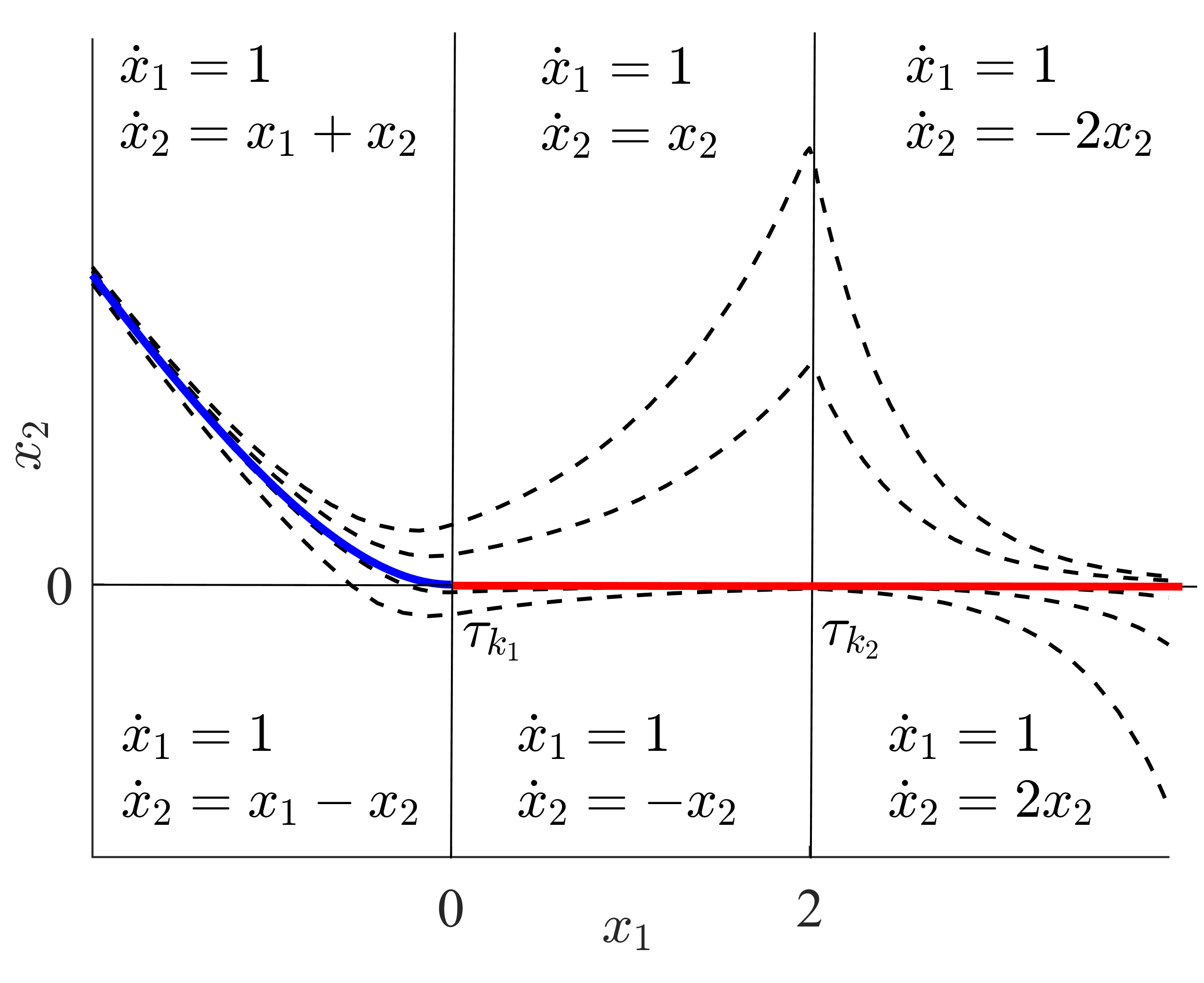}
	\caption{The evolution of the PWA system given in Example 1. The black solid lines refer to the boundaries of the polyhedra. The critical set $\mathcal{C}$ is the positive part of the $x_1$-axis, represented by the red line. The backward reachable set $\Phi_{\text {back }}(\mathcal{C})$ is the union of the red and blue lines. According to Proposition \ref{proposition1}, the solution $\phi_\mathrm{b}$ is not differentiable at all points on the red and blue lines. The black dashed curves represent some trajectories starting near the blue line. }
    \label{example1}
\end{figure}

The sensitivity branches at the time $\tau$ when the solution $\phi_{\mathrm{b}}(x_0, \tau)$ enters the critical set $\mathcal{C}$. Fig. \ref{example2} gives a geometric interpretation of $Q_\mathrm{A}( x_0,\tau)$. 

\begin{figure}\label{tree_figure}
	\centering
	\includegraphics[width=200pt,clip]{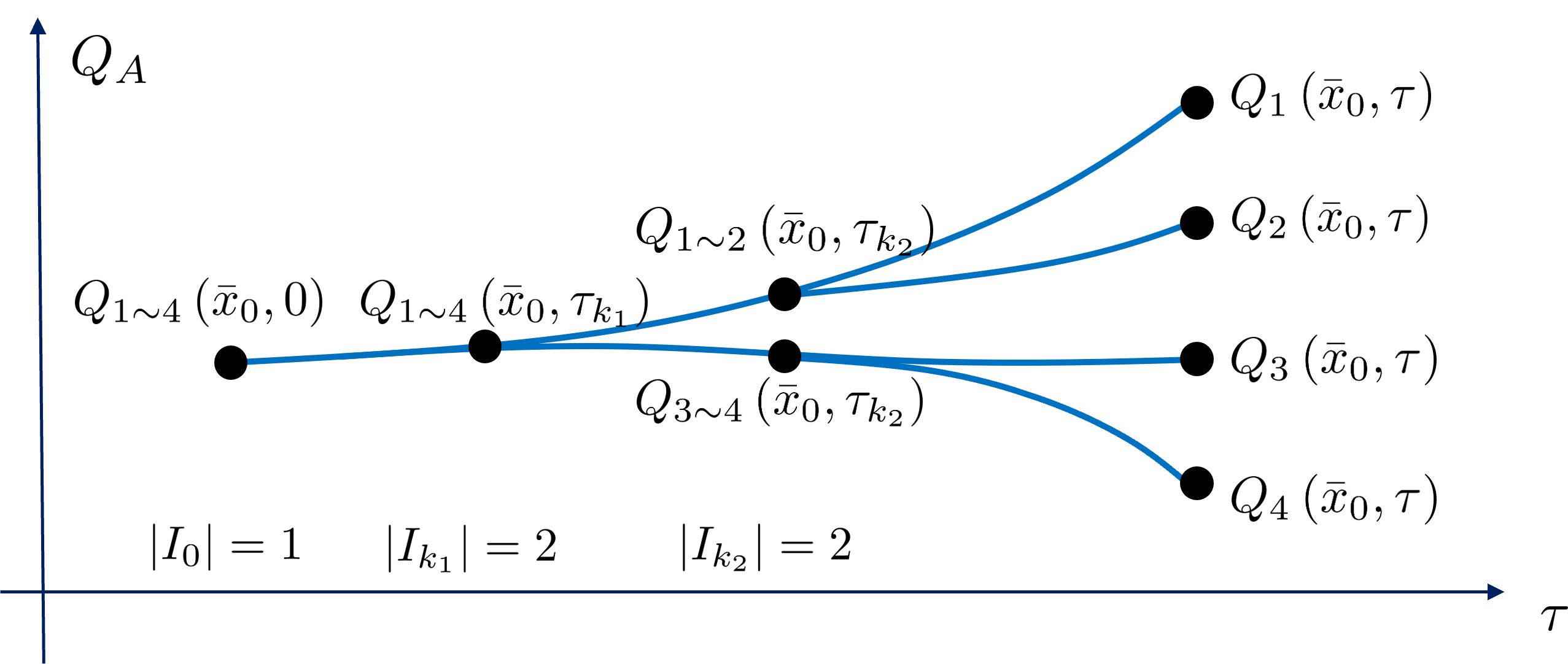}
	\caption{Evolution of the sensitivity function $Q_\mathrm{A}$ over time for any initial state at the blue line in Fig. \ref{example1}. The components $Q_1$ to $Q_4$ represent individual elements of $Q_\mathrm{A}$. The sensitivity branches when the solution $\phi_{\mathrm{b}}(\bar x_0, \tau)$ enters each critical set. In Example 1, this occurs at $\tau_{k_1}$ when the trajectory reaches $(0,0)$ and at $\tau_{k_2}$ when the trajectory reaches $(2,0)$. }
    \label{example2}
\end{figure}

\subsection{Equivalence between $\mathrm{conv}(Q_\mathrm{A})$ and $\partial_\mathrm{C} \phi_\mathrm{b}$ }

In this subsection, we show that $\mathrm{conv}(Q_\mathrm{A} (x_0,\tau))= \partial_\mathrm{C} \phi_\mathrm{b}(x_0,\tau),\;\forall x_0 \in \mathbb{R}^n$ and $\forall \tau \in [0,\infty)$. Toward this end, the following lemma and assumption are useful.

\begin{lemma}\label{invertible}
	Consider the PWA system \eqref{closed_loop_pwa} and its Aumann sensitivity $Q_\mathrm{A}$ defined in \eqref{integral}. For any $x_0 \in \mathbb{R}^n$, any $\tau \in [0,\infty)$, and any $Q(x_0,\tau) \in Q_\mathrm{A}(x_0,\tau)$, $Q(x_0,\tau)$ is invertible. 
\end{lemma}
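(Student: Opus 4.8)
The plan is to exploit the fact that, by Definition \ref{Aumann}, every element $Q(x_0,\tau) \in Q_\mathrm{A}(x_0,\tau)$ is the value at time $\tau$ of the solution of a \emph{linear}, time-varying matrix differential equation whose fundamental matrix is necessarily nonsingular. First I would fix $x_0$, $\tau \in [0,\infty)$, and a single element $Q(x_0,\cdot)$ of the Aumann sensitivity. By \eqref{integral} this element corresponds to a specific selection $A(s) := \partial f_\mathrm{PWA,b}(\phi_\mathrm{b}(x_0,s))$ taking values in $\{D_i \mid \phi_\mathrm{b}(x_0,s) \in \mathrm{clo}(\mathcal{R}_i)\}$, and $Q(x_0,\cdot)$ is the absolutely continuous solution of
\begin{equation*}
	\dot Q(x_0,s) = A(s)\, Q(x_0,s), \quad Q(x_0,0) = I_n .
\end{equation*}
Since $A(s)$ takes values in the finite set $\{D_1,\dots,D_r\}$, it is bounded by $L = \max_{i} \|D_i\|$ and hence integrable on $[0,\tau]$.

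The key step is Liouville's formula (the Abel--Jacobi--Liouville identity) for this linear matrix ODE, which yields
\begin{equation*}
	\det Q(x_0,\tau) = \det Q(x_0,0)\, \exp\!\left( \int_0^\tau \mathrm{tr}\,A(s)\, d s \right) = \exp\!\left( \int_0^\tau \mathrm{tr}\,A(s)\, d s \right),
\end{equation*}
where I used $\det Q(x_0,0) = \det I_n = 1$. Because $|\mathrm{tr}\,A(s)| \le nL$ for all $s$ and $\tau$ is finite, the integral in the exponent is finite, so $\det Q(x_0,\tau)$ is a strictly positive real number. Therefore $Q(x_0,\tau)$ is nonsingular; since the chosen element was arbitrary, this establishes invertibility for every member of $Q_\mathrm{A}(x_0,\tau)$.

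As a more elementary alternative valid when the time-stamped switching sequence contains only finitely many switches in $[0,\tau]$, I could argue directly from \eqref{generalized sensitivity computation}: on each interval $[\tau_k,\tau_{k+1})$ the coefficient $A(s)$ is the constant matrix $D_{i_k}$, so $Q(x_0,\tau)$ factors as the product of matrix exponentials $e^{D_{i_{\bar k}}(\tau - \tau_{\bar k})}\cdots e^{D_{i_0}\tau_1}$; each factor is a matrix exponential and therefore invertible, and a product of invertible matrices is invertible.

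The main obstacle I anticipate is the Zeno/chattering regime, in which the switching sequence is infinite within a finite horizon: there the product-of-exponentials form has infinitely many factors and must be handled with care, so it does not by itself close the proof. This is precisely why I would base the argument on Liouville's formula, which only requires integrability of $A$ (guaranteed by the uniform bound $L$) and thus covers the infinite-switching case uniformly, with the product-of-exponentials computation retained as intuition for the generic finite-switching situation.
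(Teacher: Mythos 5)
Your proof is correct, and your primary argument takes a genuinely different route from the paper. The paper's proof is exactly your ``elementary alternative'': it invokes the recursive factorization of $Q(x_0,\tau)$ along the time-stamped switching sequence, $Q(x_0,\tau) = e^{D_{i_{\bar k}}(\tau-\tau_{\bar k})}\cdots e^{D_{i_0}\tau_1}$, and concludes by induction that a finite product of matrix exponentials is invertible. Your main argument via Liouville's formula, $\det Q(x_0,\tau) = \exp\bigl(\int_0^\tau \mathrm{tr}\,A(s)\,ds\bigr) > 0$ for a bounded measurable selection $A(s) \in \{D_1,\dots,D_r\}$, buys something real that the paper's proof does not: the paper's Definition of the switching sequence explicitly allows $M=\infty$ with finite $\tau_M$ (Zeno behavior), and in that regime the product-of-exponentials induction only establishes invertibility for $\tau$ strictly before the accumulation time, whereas the determinant identity requires nothing beyond integrability of $A$ on $[0,\tau]$ and so covers all $\tau \in [0,\infty)$ uniformly. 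The trade-off is that your route leans on Carath\'eodory-type existence and the Abel--Jacobi--Liouville identity for merely measurable coefficients, while the paper's argument is fully explicit and self-contained in the (generic) finite-switching case. You identified this distinction yourself, which is exactly the right reading of where the paper's proof is tight.
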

\begin{proof}
	According to \eqref{generalized sensitivity computation}, any $Q(x_0,\tau)$ can be recursively computed by
	\begin{equation}\label{recursive}
		Q(x_0,t) = \left\{ \begin{gathered}
			{e^{{D_{{i_0}}}\tau}},\tau \in [{\tau_0},{\tau_1}) \hfill \\
			{e^{{D_{{i_1}}}(\tau - {\tau_1})}}Q(x_0,{\tau_1}),t \in [{\tau_1},{\tau_2}) \hfill \\
			... \hfill \\
			{e^{{D_{{i_{M - 1}}}}(\tau - {\tau_{M - 1}})}}Q(x_0,{\tau_{M - 1}}),\tau \in [{\tau_{M - 1}},{\tau_M}), \hfill \\ 
		\end{gathered}  \right.
	\end{equation}
	where $\{i_0,i_1,...,i_M\} \in I(x_0)$ and $I(x_0)$ is the Cartesian product of the switching sequence $\{I_0,I_1,...,I_M\}$ for given $x_0$.
	
	As ${e^{{D_{{i_0}}}\tau}}$ is always invertible, and $A B$ is invertible for any two invertible matrices $A$ and $B$, it is straightforward to prove by induction that $Q(x_0,\tau)$ is invertible for any $x_0 \in \mathbb{R}^n$ and $\tau \in [0,\infty)$.
\end{proof}

\begin{assumption}\label{assumption_invert}
	For any $x_0 \in \Phi_{\text {back }}(\mathcal{C})$, any $\tau \in [0,\infty)$, and any $Q(x_0,\tau) \in \mathrm{conv}(Q_\mathrm{A}(x_0,\tau))$, $Q(x_0,\tau)$ is invertible. 
\end{assumption}

\begin{remark}
	Since any $Q \in Q_\mathrm{A}$ is invertible according to Lemma \ref{invertible}, we further assume that any element in the convex hull of $Q_\mathrm{A}$ is invertible. Although convex combinations may not preserve invertibility, Assumption \ref{assumption_invert} is typically satisfied in practical scenarios such as adaptive cruise control with PWA models in \cite{corona2008adaptive} and the inverted pendulum system considered in \cite{he2023state} and the case study of the current paper. This assumption is crucial in the subsequent analysis on the equivalence between $\mathrm{conv}(Q_\mathrm{A})$ and $\partial_\mathrm{C} \phi_\mathrm{b}$, as it enables the application of the inverse function theorem \cite[Theorem 1]{clarke1976inverse} for the non-smooth solution $\phi_{\mathrm{b}}$. 
\end{remark}

\begin{proposition}\label{proposition_equivalence}
	Consider the PWA system \eqref{closed_loop_pwa} and its two generalized sensitivities $Q_\mathrm{A}$ and $\partial_\mathrm{C} \phi_\mathrm{b}$. Suppose Assumption \ref{assumption_invert} holds. We have $\mathrm{conv}(Q_\mathrm{A} (x_0,\tau))= \partial_\mathrm{C} \phi_\mathrm{b}(x_0,\tau),\;\forall x_0 \in \mathbb{R}^n$ and $\forall \tau \in [0,\infty)$.
\end{proposition}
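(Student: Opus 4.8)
The plan is to prove the set equality by establishing the two inclusions separately; in fact I would prove the sharper identity that the raw set of limiting classical Jacobians generating $\partial_\mathrm{C}\phi_\mathrm{b}$ coincides with $Q_\mathrm{A}(x_0,\tau)$, so that taking convex hulls of both sides yields the claim (recall $\partial_\mathrm{C}\phi_\mathrm{b}$ is already convex by Definition \ref{Clarke definition}). First I would dispose of the generic case: by Proposition \ref{proposition1}, whenever $x_0 \notin \Phi_\mathrm{back}(\mathcal{C})$ the Aumann sensitivity is single-valued and equals the classical Jacobian $\frac{\partial \phi_\mathrm{b}(x_0,\tau)}{\partial x_0}$, which then also equals $\partial_\mathrm{C}\phi_\mathrm{b}(x_0,\tau)$, so all the genuine work concentrates on $x_0 \in \Phi_\mathrm{back}(\mathcal{C})$.

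For the inclusion $\partial_\mathrm{C}\phi_\mathrm{b}(x_0,\tau) \subseteq \mathrm{conv}(Q_\mathrm{A}(x_0,\tau))$, I would take any generator of the Clarke derivative, i.e.\ a limit $\lim_i D\phi_\mathrm{b}(x_i,\tau)$ along differentiability points $x_i \to x_0$. Using the globally Lipschitz dependence of the flow on its initial condition (the bound $\|\phi_\mathrm{b}(x_0,\tau)-\phi_\mathrm{b}(y_0,\tau)\|\le \|x_0-y_0\|e^{L\tau}$ recalled after \eqref{closed_loop_pwa}), the perturbed trajectories $\phi_\mathrm{b}(x_i,\cdot)$ converge uniformly on $[0,\tau]$ to $\phi_\mathrm{b}(x_0,\cdot)$, so their switching times converge to those of $\Gamma(x_0)$ and the regions they traverse are, for large $i$, adjacent regions of the nominal trajectory, i.e.\ indexed by a selection in $I(x_0)$. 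Since each $D\phi_\mathrm{b}(x_i,\tau)$ is the single-valued product of matrix exponentials in \eqref{recursive}, continuity of the exponential together with convergence of the switching times forces the limit to be one of the products listed in \eqref{generalized sensitivity computation}, hence an element of $Q_\mathrm{A}(x_0,\tau)$; convexifying gives the inclusion. The care here is purely in the convergence of the switching structure, including intervals on which the nominal trajectory slides along a common boundary $\Gamma_I$: there the differentiable perturbed trajectory lives in a single region $\mathcal{R}_{i_k}$, $i_k\in I_k$, which is exactly the freedom encoded in the Cartesian product $I(x_0)$.

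For the reverse inclusion it suffices, by convexity of $\partial_\mathrm{C}\phi_\mathrm{b}$, to show $Q_\mathrm{A}(x_0,\tau)\subseteq \partial_\mathrm{C}\phi_\mathrm{b}(x_0,\tau)$, i.e.\ that every branch $Q$ associated with a selection $(i_0,\dots,i_M)\in I(x_0)$ is realizable as a limit of classical Jacobians. This is the crux and the main obstacle: given a sliding interval of the nominal flow on a boundary $\Gamma_I$, I must produce initial conditions $x_j\to x_0$ at which $\phi_\mathrm{b}$ is differentiable and whose trajectories spend that interval in the interior of the prescribed region $\mathcal{R}_{i_k}$, with $D\phi_\mathrm{b}(x_j,\tau)\to Q$. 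Existence of such one-sided perturbations is precisely where Assumption \ref{assumption_invert} is consumed: by the inclusion just proved together with Assumption \ref{assumption_invert}, every element of $\partial_\mathrm{C}\phi_\mathrm{b}(x_0,\tau)$ is nonsingular, so Clarke's nonsmooth inverse function theorem \cite[Theorem 1]{clarke1976inverse} applies to the map $\phi_\mathrm{b}(\cdot,\tau)$ and renders it a local Lipschitz homeomorphism around $x_0$. I would exploit this by perturbing the terminal point $y_0=\phi_\mathrm{b}(x_0,\tau)$ to the appropriate side of the lower-dimensional set $\phi_\mathrm{b}(\Phi_\mathrm{back}(\mathcal{C}),\tau)$ so that the backward trajectory threads the chosen regions, then pulling the perturbation back through the homeomorphism to obtain the required $x_j\to x_0$; nonsingularity guarantees this pullback is nondegenerate and that the resulting Jacobians converge to $Q$ rather than collapsing.

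I expect the genuinely delicate points to be: (i) controlling the switching sequence under perturbation, distinguishing a nominal trajectory that touches a boundary transversally (which produces no branching) from one that slides along it for a positive time (the only source of multiplicity); (ii) ruling out or handling accumulation of switching times, the $M=\infty$ Zeno/chattering case flagged after \eqref{switching}, which I would sidestep by working on finite horizons where $\Gamma(x_0)$ is finite; and (iii) making the pullback construction of the reverse inclusion fully rigorous, which is the step that actually uses the invertibility hypothesis. Once both inclusions between the generator sets are established, convexifying yields $\mathrm{conv}(Q_\mathrm{A}(x_0,\tau))=\partial_\mathrm{C}\phi_\mathrm{b}(x_0,\tau)$ for every $x_0\in\mathbb{R}^n$ and $\tau\in[0,\infty)$.
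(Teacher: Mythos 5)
Your proposal follows essentially the same route as the paper's own proof: the forward inclusion $\partial_\mathrm{C}\phi_\mathrm{b}\subseteq\mathrm{conv}(Q_\mathrm{A})$ via limits of classical Jacobians along differentiability points with a locally constant switching-sequence selection from $I(x_0)$, and the reverse inclusion by first using that forward inclusion together with Assumption \ref{assumption_invert} to invoke Clarke's nonsmooth inverse function theorem, then pulling back terminal perturbations that avoid a lower-dimensional bad set to produce differentiability points whose Jacobians converge to each prescribed branch of $Q_\mathrm{A}$. The delicate points you flag (switching-structure convergence under perturbation, and making the pullback realize the chosen selection) are exactly the steps the paper's appendix also relies on, so your plan matches it in both structure and substance.
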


\begin{proof}
	See Appendix \ref{appendix_proposition_equivalence}.
\end{proof}

Proposition \ref{proposition_equivalence} indicates that the convex hull of the Aumann sensitivity, computed via \eqref{generalized sensitivity computation} can replace the Clarke sensitivity in the formulation of the proposed PSF \eqref{safety filter pwa}.

%\subsection{Computing the sensitivity}
%
%Based on the above analysis, the uniqueness of the Aumann sensitivity $Q_\mathrm{A}$ can be determined by computing either the backward or forward reachable set. In the offline setting, the backward reachable set can be used to identify the region $\Phi_{\mathrm{back}}(\mathcal{C})$ if $Q_\mathrm{A}$ is not unique. In the online setting, since we have the current state $\bar x_0$ at hand, computing the forward reachable set $\Phi_\mathrm{for}(\bar x_0)$, and checking whether it intersects with $\mathcal{C}$ are more convenient. Analytical computation of reachable sets for PWA systems is only possible in low-dimensional cases, such as the PWA system in Example 1. For high-dimensional PWA systems, the reachable sets can be approximated using the Hamilton–Jacobi framework \cite{bansal2017hamilton}.
%
%If $\Phi_\mathrm{for}(\bar x_0) \cap \mathcal{C} = \emptyset$, we can follow the standard procedure outlined in \cite[Chapter3.3]{khalil2002nonlinear} and use numerical integration to compute the unique sensitivity. Otherwise, we must calculate the set-valued sensitivity function based on \eqref{integral}. 

\subsection{Implementation of the all-elements PSF}\label{section implementation}

Problem \eqref{safety filter pwa} is a semi-infinite programming problem \cite[Chapter 5.4]{bonnans2013perturbation}, i.e., it contains infinitely many constraints. To solve it, we adopt the following three steps.

\textbf{Step 1: Transforming the constraints.} Under the equivalence between $\partial_\mathrm{C} \phi_\mathrm{b}$ and $\mathrm{conv}(Q_\mathrm{A})$, we replace $\partial_\mathrm{C} \phi_\mathrm{b}$ with $\mathrm{conv}(Q_\mathrm{A})$ in the formulation \eqref{safety filter pwa} of the proposed all-elements PSF. In other words, in \eqref{b} and \eqref{c}, the CBF constraints should be satisfied for any $Q \in \mathrm{conv}(Q_\mathrm{A})$. 

We first eliminate the infinite number of constraints stemming from the set-valued generalized Clarke derivatives. Since (i) the CBF constraints are linear in $h_X$, $h_\mathrm{b}$, and $Q$, and (ii) the sets $\partial_\mathrm{C} h_X$, $\partial_\mathrm{C} h_\mathrm{b}$, and $\mathrm{conv}(Q_\mathrm{A})$ are convex, according to \cite[Proposition 2.1]{ben1998robust}, \eqref{b} and \eqref{c} are equivalent to
\begin{align}\label{derobustify}
	&\partial h_X Q   f_\mathrm{PWA}(x,u)  \geq-\alpha\left(h_X\left(\phi_{\mathrm{b}}( x,\tau)\right)\right),\nonumber\\
	&\quad \quad \forall \partial h_X \in J_{h_X}(\phi_{\mathrm{b}}( x,\tau)),\;\forall  Q \in Q_\mathrm{A}(x,\tau),\; \forall \tau \in[0, T] ,\nonumber\\
	&\partial h_\mathrm{b} Q  f_\mathrm{PWA}(x,u)   \geq-\alpha_{\mathrm{b}}\left(h_{\mathrm{b}}\left(\phi_{\mathrm{b}}( x,T)\right)\right),\nonumber\\
	& \quad\quad \forall \partial h_\mathrm{b} \in J_{h_\mathrm{b}}(\phi_{\mathrm{b}}( x,T)),\;\forall  Q \in Q_\mathrm{A}(x,T),
\end{align}
where $J_{h_X}(\phi_{\mathrm{b}}( x,\tau)) =$
\begin{equation*}
	\left\{\lim _{i \rightarrow \infty} \left. D h_X(y_i) \right| y_i \rightarrow \phi_{\mathrm{b}}( x,\tau), \; h_X \text { is differentiable at } y_i\right\}
\end{equation*}
and $J_{h_\mathrm{b}}$ is defined similarly to $J_{h_X}$. According to Definition \ref{Clarke definition}, the set-valued functions $J_{h_\mathrm{b}}$ and $J_{h_X}$ denote the collections of limiting derivatives that generate $\partial_\mathrm{C} h_\mathrm{b}$ and $\partial_\mathrm{C} h_X$, respectively, i.e., $\partial_\mathrm{C} h_\mathrm{b} (\phi_{\mathrm{b}}( x,T)) = \mathrm{conv}(J_{h_\mathrm{b}}(\phi_{\mathrm{b}}( x,T)))$ and $\partial_\mathrm{C} h_X (\phi_{\mathrm{b}}( x,\tau)) = \mathrm{conv}(J_{h_X}(\phi_{\mathrm{b}}( x,\tau)))$. 

Note that $J_{h_\mathrm{b}}$ and $J_{h_X}$ are in general computable, as the explicit expressions of $h_\mathrm{b}$ and $h_X$ are usually known. Besides, the Aumann sensitivity $Q_A$ is computed through \eqref{generalized sensitivity computation}. Note that $|J_{h_\mathrm{b}}|$, $|J_{h_X}|$, and $|Q_\mathrm{A}|$ are finite. 

\textbf{Step 2: Mixed-integer encoding of $f_\mathrm{PWA}$.} Mixed-integer encoding is commonly used in hybrid MPC in which the PWA prediction model is usually converted into an equivalent mixed-logical dynamical form \cite{borrelli2017predictive}. 

If the partition of the PWA system \eqref{pwa_form} is only in the state space, i.e., $f_{\mathrm{PWA}}(x, u):=A_i x+B_i u+c_i, \text { for } x \in \mathcal{P}_i$, the inequalities in \eqref{derobustify} are naturally linear in $u$ and hence there is no need to do mixed-integer encoding.

If the partition is in both state and input spaces, to make the inequalities in \eqref{derobustify} linear in $u$, we follow \cite{bemporad1999control} to use the Big-M approach to get the equivalent mixed-integer formulation of $f_{\mathrm{PWA}}(x, u)$, which is given by 
\begin{align}
	&f_{\mathrm{PWA}}(x, u) = \sum\limits_{i = 1}^p {{f_i}} ,\;\sum\limits_{i = 1}^p {{\Delta _i}}  = 1 \nonumber \\
	&{\text{for}}\;i = 1,2,...,p: 
	\left\{ \begin{gathered}
		{f_i} \leq \overline{M}{\Delta _i} \hfill \\ 
		{f_i} \geq \underline{M}{\Delta _i} \hfill \\
		{f_i} \leq {A_i}x + {B_i}u + {c_i} - \underline{M}\left( {1 - {\Delta _i}} \right) \hfill \\
		{f_i} \geq {A_i}x + {B_i}u + {c_i} - \overline{M}\left( {1 - {\Delta _i}} \right) \hfill \\
		{\Psi^{(x)}_i}x + {\Psi^{(u)}_i}u \leq {\psi_i} + \overline{M}_i^*(1 - {\Delta _i}), \hfill \\ 
	\end{gathered}  \right.
\end{align}
where the matrices $\Psi^{(x)}_i$, $\Psi^{(u)}_i$, and $\psi_i$ constitute the hyperplane representation $\{ [x^T,\;u^T]^T \in \mathbb{R}^{n+m} | {\Psi^{(x)}_i}x + {\Psi^{(u)}_i}u \leq {\psi_i} \}$ of the polyhedral region $\mathcal{P}_i$ in \eqref{pwa_form},  $\overline{M}$, $\underline{M}$, and $\overline{M}_i^*$ are some bounds of affine functions ${A_i}x + {B_i}u + {c_i}$ and ${\Psi^{(x)}_i}x + {\Psi^{(u)}_i}u -{\psi_i}$ over $X \times U$. The detailed expressions of these bounds are provided in \cite{bemporad1999control}. Besides, $\Delta _i \in \{0,1\}, i=1,2,\dots,p$ are binary variables.

\textbf{Step 3: Discretizing the time $\tau$.} After Steps 1 and 2, the infinite number of constraints in \eqref{derobustify} comes only from the continuous time set $[0,T]$. In practice, to obtain computational tractability, the constraints are usually relaxed into a finite number of constraints that should be satisfied at some time instants \cite{molnar2023safety}. In particular, given the prediction horizon $T>0$ and a number $N \in \mathbb{N}^+$, the time set $[0,T]$ is discretized into $\{ \tau_{(l)}\}^N_{l=0}$, where each $\tau_{(l)} = lT/N$. As a result, \eqref{derobustify} is relaxed into an inequality constraint that should be satisfied at finitely many $\tau_{(l)}$. 

After applying the three steps above, problem \eqref{safety filter pwa} becomes a convex QP if the PWA system \eqref{pwa_form} is partitioned only over the state space. If the partition is defined over the joint state-input space, the problem \eqref{safety filter pwa} typically becomes an MIQP.

%\begin{remark}
%	Note that, in contrast to Steps 1 and 2, which involve equivalent transformations, Step 3 introduces a relaxation of the constraints in \eqref{b}. To ensure the safety of the resulting filtered controller following this relaxation, it is typically necessary to tighten the inequalities in \eqref{b} and \eqref{c}. We refer the reader to \cite{van2024disturbance} for a detailed discussion on this topic.
%\end{remark}

\section{Explicit approximation of the safe controller}\label{section_approximate}
In the previous section, we have shown that the intractable constraints in \eqref{safety filter pwa} can be reformulated using Aumann sensitivity. However, When the partition of the PWA system \eqref{pwa_form} is defined over the joint state-input space, the computation time for solving the MIQP derived from \eqref{safety filter pwa} in practice grows dramatically with the number of binary variables in the worst case \cite{richards2005mixed}. The number of binary variables itself scales linearly with the number $p$ of regions. Consequently, the proposed all-elements PSF becomes impractical for real-time control systems with limited computational resources. To address this challenge (Problem 3), we further derive an explicit approximation for the solution to \eqref{safety filter pwa}. This explicit solution is guaranteed to be feasible w.r.t. \eqref{safety filter pwa} if the original problem is feasible, and can be computed efficiently.

Our design is inspired by \cite{tordesillas2023rayen}, where a constrained neural network is employed to approximate solutions of convex constrained optimization problems. An important property of \eqref{safety filter pwa} we use is that the backup solution $\pi_\mathrm{b}(x)$ is a feasible solution to \eqref{safety filter pwa} for any $x \in \Phi_{\text {const}}\left(S_{\mathrm{b}}, T\right)$, as has been proved in Theorem \ref{Theorem_feasibility}. 

For each $x$, define by $U_\eqref{safety filter pwa}(x)$ the feasible set of \eqref{safety filter pwa}. First, we  restrict the constraint on $u$ from $u$ belonging to the whole feasible set $U_\eqref{safety filter pwa}(x)$ to 
\begin{equation}\label{tighten}
	u\in U_\eqref{safety filter pwa}(x) \cap U_\mathrm{line}(x),
\end{equation}
where $U_\mathrm{line}(x)$ is the line segment with the endpoints $\pi_\mathrm{b}(x)$ and $\pi_\mathrm{r}(x)$. Mathematically, $U_\mathrm{line}(x) = \{ u\in \mathbb{R}^m | u = \pi_\mathrm{b}(x) + \lambda(\pi_\mathrm{r}(x) -\pi_\mathrm{b}(x)),\;\lambda \in [0,1] \}$.

By adding the constraint $u \in U_\mathrm{line}(x)$ into the optimization problem \eqref{safety filter pwa}, we obtain that the optimizer $\hat\pi_\mathrm{safe}(\cdot): \mathbb{R}^n \to \mathbb{R}^m$ of the modified problem takes the following form:
\begin{equation}\label{approximated_policy}
	\hat\pi_\mathrm{safe}(x) = \pi_\mathrm{b}(x) + \lambda^*(x)(\pi_\mathrm{r}(x) -\pi_\mathrm{b}(x)),
\end{equation}
where 
\begin{align}\label{lamda}
	\lambda^*(x) =& \max_{\lambda \in [0,1]} \;\lambda\nonumber\\
	& \text{ s.t. } u = \pi_\mathrm{b}(x) + \lambda(\pi_\mathrm{r}(x) -\pi_\mathrm{b}(x)),\; u \in U_\eqref{safety filter pwa}(x).
\end{align}

Since each polyhedral region $\mathcal{P}_i$ has the hyperplane representation $\{ [x^T,\;u^T]^T \in \mathbb{R}^{n+m} | {\Psi^{(x)}_i}x + {\Psi^{(u)}_i}u \leq {\psi_i} \}$, for any state $x$, we can determine the slice $U_i(x)$ of $\mathcal{P}_i$, defined by
\begin{equation}\label{ui}
	U_i(x) : = \{ u \in \mathbb{R}^m | {\Psi^{(u)}_i}u \leq  \psi_i - \Psi^{(x)}_i x\}.
\end{equation}

Let $I_u(x)$ denote the set of indices $i$ for which $U_i(x)$ is not empty. By substituting the expression of $f_\mathrm{PWA}$ into \eqref{safety filter pwa} and applying Step 1 of Section \ref{section implementation}, we can rewrite the optimizer $\lambda^*$ in a more compact form:
\begin{align}\label{lamda compact}
	\lambda^*(x) =& \max_{\lambda } \;\lambda\nonumber\\
	& \text{ s.t. } \exists i \in I_u(x) \text{ s.t. } \eta_i(x,\tau) \lambda \leq \omega_i(x,\tau),\;\forall \tau \in [0,T],
\end{align}
where $\eta_i$ and $\omega_i$ are vector-valued functions with proper dimensions. The expressions of $\eta_i$ and $\omega_i$ can be obtained based on \eqref{pwa_form}, \eqref{derobustify}, and \eqref{lamda}. To derive the explicit form of $\lambda^*$, the constraint $\lambda \in [0,1]$ is incorporated in \eqref{lamda compact} with the form of $[1\;\;-1]^T \lambda \leq [1\;\;0]^T$. 

The following proposition gives the explicit form of the optimizer $\lambda^*$.

\begin{proposition}\label{proposition_explicit}
	Consider the optimization problem \eqref{lamda} or \eqref{lamda compact}, the optimizer $\lambda^*$ has the expression:
	\begin{equation}\label{lamda expression}
		\lambda^*(x) = \max_{i\in  I_u(x)} \min_{\tau \in[0,T]}\min_{j \in \mathbb{N}^+_{\mathrm{dim}(\eta_i)}} \;\lambda_{i,j} (x,\tau),
	\end{equation}
	with
	\begin{equation}\label{lamda ij}
		\begin{aligned}
		& \lambda_{i,j} (x,\tau) \\
		&= 
		\left\{\begin{aligned}
			&\frac{\omega_{i,j}}{\eta_{i,j}},\;  \text { if }  \eta_{i,j}>0\\
			&\mathrm{step} (\omega_{i,j}) ,\;  \text { if } \eta_{i,j}= 0\\
			&\mathrm{step} \left(\omega_{i,j} \!-\!\eta_{i,j} \min_{ \lambda_{i,j'}(x,\tau')\! \in\! (0,1)}  \lambda_{i,j'} (x,\tau')  \right),\;  \text { if } \eta_{i,j}\!<\!0,
		\end{aligned}\right.
	\end{aligned}
	\end{equation}
	where $\omega_{i,j}$ and $\eta_{i,j}$ represent the $j$-th element of $\omega_i$ and $\eta_i$, respectively, and we omit their dependence on $x$ and $\tau$ for notational brevity.
\end{proposition}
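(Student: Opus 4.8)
The plan is to first peel off the logical ``$\exists\,i$'' in \eqref{lamda compact} and reduce the problem to a family of scalar linear programs in $\lambda$, one per index, each of which I then solve in closed form by a sign analysis of the coefficients $\eta_{i,j}$. Since the constraint in \eqref{lamda compact} only requires that \emph{some} $i\in I_u(x)$ admit a feasible $\lambda$, the feasible set is the union over $i$ of the per-index feasible sets, and the supremum of a scalar over a union of sets equals the largest of the per-index suprema. This produces the outer $\max_{i\in I_u(x)}$ in \eqref{lamda expression}, once I show that $\min_\tau\min_j\lambda_{i,j}(x,\tau)$ equals the maximum feasible $\lambda$ for index $i$ whenever that index is feasible, and equals $0$ when it is not. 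The harmlessness of the latter convention follows from Theorem \ref{Theorem_feasibility}: $\lambda=0$ (that is, $u=\pi_\mathrm{b}(x)$) is always feasible for \eqref{safety filter pwa}, so the global problem is feasible, $\lambda^*(x)\ge 0$, and infeasible indices can safely be assigned the value $0$ under the outer maximum.

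Next I fix $i$ and study $\max\{\lambda : \eta_{i,j}(x,\tau)\,\lambda \le \omega_{i,j}(x,\tau),\ \forall j,\ \forall\tau\in[0,T]\}$, where $\lambda\in[0,1]$ is already encoded as the rows $[1,\,-1]^\top\lambda\le[1,\,0]^\top$. I classify each scalar inequality by the sign of $\eta_{i,j}(x,\tau)$: if $\eta_{i,j}>0$ it is an upper bound $\lambda\le\omega_{i,j}/\eta_{i,j}$; if $\eta_{i,j}<0$ it is a lower bound $\lambda\ge\omega_{i,j}/\eta_{i,j}$; and if $\eta_{i,j}=0$ it is a $\lambda$-independent feasibility condition $\omega_{i,j}\ge 0$. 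The feasible set is then a closed interval, and its maximum is the smallest upper bound, provided this value does not fall below any lower bound and every $\eta_{i,j}=0$ row satisfies $\omega_{i,j}\ge 0$; otherwise the interval is empty. Continuity of $\eta_i,\omega_i$ in $\tau$ together with compactness of $[0,T]$ guarantees that the infima over $\tau$ are attained, so the $\min_\tau$ is well posed.

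I then match this description to the three branches of \eqref{lamda ij}. The $\eta_{i,j}>0$ branch returns $\omega_{i,j}/\eta_{i,j}$ directly, so $\min_\tau\min_j$ over these rows selects the smallest upper bound. The $\eta_{i,j}=0$ branch returns $\mathrm{step}(\omega_{i,j})$, which is $1$ when the feasibility condition holds (and hence inactive under an outer minimum that never exceeds $1$) and $0$ when it is violated, correctly collapsing the minimum to $0$. The delicate branch is $\eta_{i,j}<0$: writing $\Lambda:=\min_{\lambda_{i,j'}(x,\tau')\in(0,1)} \lambda_{i,j'}(x,\tau')$ for the candidate optimum coming from the binding interior upper bounds, I use that $\eta_{i,j}<0$ implies $\omega_{i,j}-\eta_{i,j}\Lambda\ge 0 \iff \omega_{i,j}/\eta_{i,j}\le \Lambda$, so that $\mathrm{step}(\omega_{i,j}-\eta_{i,j}\Lambda)$ equals $1$ exactly when the lower bound is compatible with the candidate optimum and $0$ otherwise. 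A short argument shows this definition is non-circular: the values lying in $(0,1)$ that enter $\Lambda$ can only come from $\eta_{i,j'}>0$ rows, since the $\eta=0$ and $\eta<0$ branches produce step values in $\{0,1\}$, never in the open interval $(0,1)$; hence $\Lambda$ is determined before the $\eta_{i,j}<0$ rows are evaluated.

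The main obstacle is verifying the $\eta_{i,j}<0$ branch rigorously, that is, showing that $\min_\tau\min_j\lambda_{i,j}$ returns the exact maximum feasible $\lambda$ when index $i$ is feasible and returns $0$ precisely when some lower bound exceeds the smallest upper bound. I expect the boundary configurations to require care: when the smallest upper bound equals $1$ (all active upper bounds are at the cap, so no $\lambda_{i,j'}$ lies in the open interval $(0,1)$ and $\Lambda$ degenerates to the cap) or equals $0$ (the optimum coincides with the backup), the step-encoded lower-bound tests must still yield the correct value, and I will treat these degenerate cases separately. Once the per-index identity $\lambda_i^*(x)=\min_\tau\min_j\lambda_{i,j}(x,\tau)$ is established, combining it with the outer $\max_{i\in I_u(x)}$ from the first step yields \eqref{lamda expression}.
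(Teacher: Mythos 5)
Your proposal is correct and takes essentially the same route as the paper's proof: both reduce the existential quantifier over $i\in I_u(x)$ to a per-index scalar problem (so the outer $\max$ in \eqref{lamda expression} comes from taking suprema over a union of feasible sets), run the same case analysis on the sign of $\eta_{i,j}$, read the $\mathrm{step}$ branches of \eqref{lamda ij} as lower-bound compatibility tests against $\Lambda=\min_{\lambda_{i,j'}\in(0,1)}\lambda_{i,j'}$, and anchor everything on the feasibility of $\lambda=0$ (the backup input $\pi_\mathrm{b}(x)$) guaranteed by Theorem \ref{Theorem_feasibility}. The organizational difference---you characterize the per-index feasible interval and identify its right endpoint, whereas the paper verifies feasibility of the candidate $\lambda^{+}(x)$ and then infeasibility of $\lambda^{+}(x)+\Delta_\lambda$---is cosmetic, and the degenerate configurations you flag but defer (e.g., $\Lambda$ taken over an empty index set because no $\lambda_{i,j'}$ lies in $(0,1)$) are precisely the ones the paper's own Case~(4) arguments also pass over without explicit treatment.
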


\begin{proof}
	See Appendix \ref{appendix_proposition_explicit}.
\end{proof}

Since $\hat{\pi}_\mathrm{safe}$ is a feasible solution to the proposed all-elements PSF \eqref{safety filter pwa}, we immediately have the following corollary of Theorem \ref{theoreom_forward} and Proposition \ref{proposition_explicit} on the safety performance of $\hat{\pi}_\mathrm{safe}$:

\begin{corollary}
	Consider the continuous PWA system \eqref{pwa_form} and the proposed all-elements PSF \eqref{safety filter pwa}. Suppose that Assumptions \ref{A1} and \ref{assumption_invert} hold and that the approximate policy $\hat{\pi}_\mathrm{safe}$ in \eqref{approximated_policy} with $\lambda^*$ computed from \eqref{lamda expression} is locally Lipschitz continuous. Then, the approximate policy $\hat{\pi}_\mathrm{safe}$ renders $\Phi_{\text{const}}(S_\mathrm{b},T)$ forward invariant. 
\end{corollary}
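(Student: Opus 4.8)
The plan is to observe that this corollary is a direct consequence of Theorem \ref{theoreom_forward}, once one recognizes that the proof of that theorem never exploits the \emph{optimality} of $\pi_\mathrm{safe}$, but only two facts: that the applied policy satisfies the CBF constraints \eqref{b}--\eqref{c} pointwise in $x$, and that it is locally Lipschitz continuous. Hence it suffices to verify that the explicit policy $\hat{\pi}_\mathrm{safe}$ in \eqref{approximated_policy} is a \emph{feasible} policy for the all-elements PSF \eqref{safety filter pwa} at every $x \in \Phi_{\text{const}}(S_\mathrm{b}, T)$, and then to replay the invariance argument of Theorem \ref{theoreom_forward} verbatim with $\hat{\pi}_\mathrm{safe}$ in place of $\pi_\mathrm{safe}$.

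First I would establish feasibility. By Theorem \ref{Theorem_feasibility}, the backup input $\pi_\mathrm{b}(x)$ lies in the feasible set $U_\eqref{safety filter pwa}(x)$ for every $x \in \Phi_{\text{const}}(S_\mathrm{b}, T)$; since $\pi_\mathrm{b}(x)$ is exactly the $\lambda=0$ endpoint of the segment $U_\mathrm{line}(x)$, the restricted program \eqref{lamda} always admits $\lambda=0$ as a feasible point, so $\lambda^*(x)$ is well defined and nonnegative. Proposition \ref{proposition_explicit} then supplies the closed form \eqref{lamda expression} of the maximizer, confirming that the optimizer exists and that $\hat{\pi}_\mathrm{safe}(x) = \pi_\mathrm{b}(x) + \lambda^*(x)(\pi_\mathrm{r}(x) - \pi_\mathrm{b}(x)) \in U_\eqref{safety filter pwa}(x) \cap U_\mathrm{line}(x) \subseteq U_\eqref{safety filter pwa}(x)$. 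In particular, $\hat{\pi}_\mathrm{safe}(x)$ satisfies the derobustified CBF constraints \eqref{derobustify}, and, invoking Proposition \ref{proposition_equivalence} under Assumption \ref{assumption_invert} to identify $\mathrm{conv}(Q_\mathrm{A})$ with $\partial_\mathrm{C}\phi_\mathrm{b}$, equivalently the constraints \eqref{b}--\eqref{c}, at every such $x$.

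Next I would transfer the invariance argument. With feasibility in hand, I take the predictive CBF candidate $h$ from \eqref{CBF_candidate}, which is Lipschitz continuous and whose zero-superlevel set coincides with $\Phi_{\text{const}}(S_\mathrm{b}, T)$. The chain-rule computation \eqref{deviation_CBF} in the proof of Theorem \ref{theoreom_forward} relies only on the pointwise constraint satisfaction just established, so it applies unchanged with $\hat{\pi}_\mathrm{safe}$, yielding $\partial h \, f_\mathrm{PWA}(x,\hat{\pi}_\mathrm{safe}(x)) \geq -\alpha(h(x))$ for every $\partial h \in \partial_\mathrm{C} h(x)$ in both cases of \eqref{CBF_candidate}. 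Thus $h$ satisfies the non-smooth CBF inequality \eqref{cbf_condition} of Lemma \ref{lemma_invariance}. Because $\hat{\pi}_\mathrm{safe}$ is assumed locally Lipschitz continuous, Lemma \ref{lemma_invariance} applies and gives $h(\hat{\phi}_\mathrm{safe}(x_0,t)) \geq 0$ for all $t \geq 0$ whenever $x_0 \in \Phi_{\text{const}}(S_\mathrm{b}, T)$, where $\hat{\phi}_\mathrm{safe}$ denotes the closed-loop solution under $\hat{\pi}_\mathrm{safe}$; this is precisely the claimed forward invariance.

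I expect no deep obstacle, since the statement is a corollary; the only point requiring care is the \emph{decoupling} of feasibility from optimality, i.e., making explicit that Theorem \ref{theoreom_forward} holds for any locally Lipschitz policy feasible for \eqref{safety filter pwa}, not merely for the argmin. The line-segment restriction \eqref{tighten} shrinks the admissible set but crucially retains the feasible point $\pi_\mathrm{b}(x)$, so feasibility is never lost; and the local Lipschitz continuity of $\hat{\pi}_\mathrm{safe}$, which is the bridge to Lemma \ref{lemma_invariance}, is supplied as a hypothesis and therefore need not be proved here.
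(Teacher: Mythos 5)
Your proposal is correct and follows essentially the same route as the paper, which derives the corollary immediately from the fact that $\hat{\pi}_\mathrm{safe}$ is a feasible solution to \eqref{safety filter pwa} (via Theorem \ref{Theorem_feasibility}, the line-segment restriction \eqref{tighten}, and Proposition \ref{proposition_explicit}) combined with the observation that the invariance argument of Theorem \ref{theoreom_forward} uses only feasibility and local Lipschitz continuity, never optimality. Your write-up merely makes explicit the feasibility-versus-optimality decoupling and the role of Assumption \ref{assumption_invert} via Proposition \ref{proposition_equivalence}, which the paper leaves implicit.
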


%Similar to Lemma \ref{lemma_safety}, to show the safety of $\hat{\pi}_\mathrm{safe}$, the corollary assumes the Lipschitz continuity of $\hat{\pi}_\mathrm{safe}$. Analyzing the conditions under which $\hat{\pi}_\mathrm{safe}$ is Lipschitz continuous is left for future work.

Akin to Step 3 of Section \ref{section implementation}, implementing the approximate policy \eqref{approximated_policy} needs the discretization of the time set $[0,T]$ appearing in \eqref{lamda expression}.

\begin{remark}
	Unlike \cite{tordesillas2023rayen}, which computes explicit approximations for convex optimization problems, our approach accommodates non-convex MIQPs. Compared to our previous work \cite{he2024efficient}, which also finds explicit approximations for MIQP problems, we extend the framework to problems with infinitely many constraints.
\end{remark}

\section{Case studies}
\subsection{Inverted pendulum}
\begin{figure}
	\centering
	\includegraphics[width=80pt,clip]{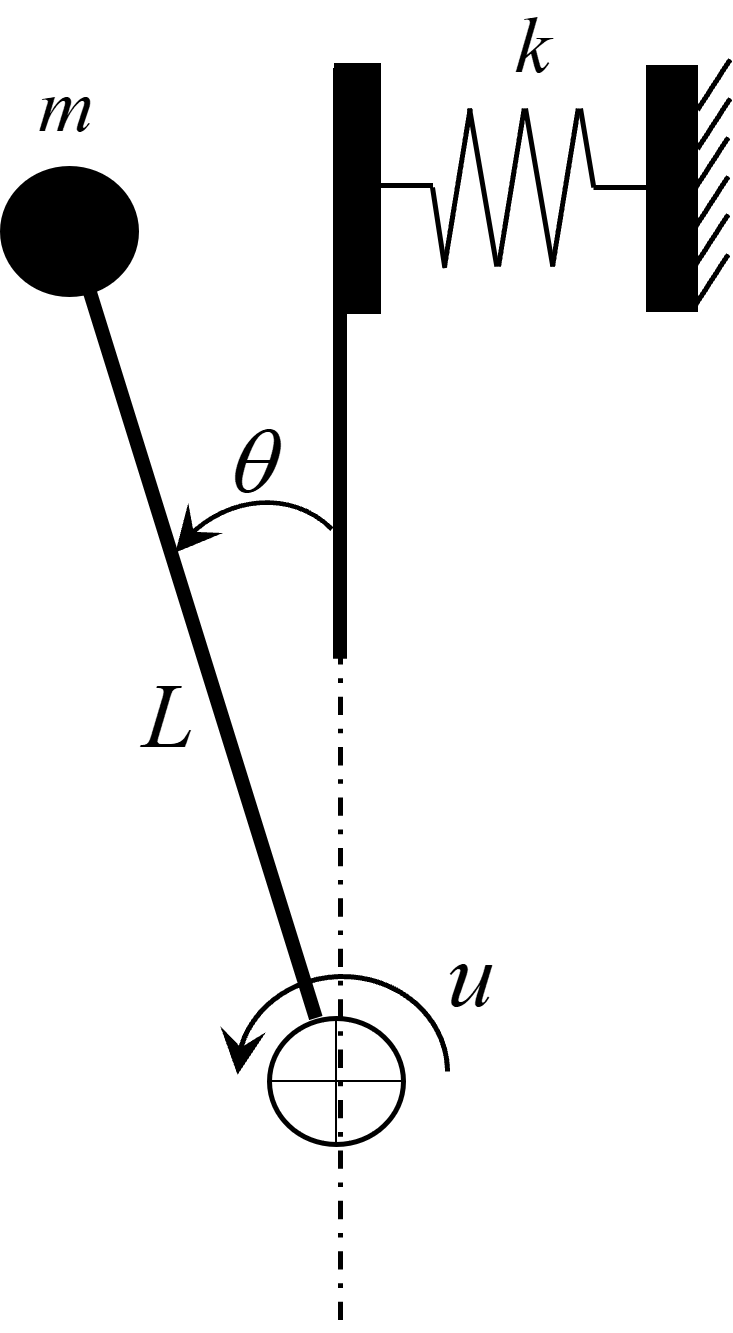}
	\caption{Diagram of the inverted pendulum interacting with an elastic wall.}
	\label{fig_pendulum}
\end{figure}
To demonstrate the theoretical results, control of a simple inverted pendulum that interacts with an elastic wall (illustrated in Fig. \ref{fig_pendulum}) is considered. The system is linearized around the upright position, resulting in the following PWA system:

\begin{equation*}
	\ddot{\theta}  = \begin{cases}
		\frac{g}{L} \theta + \frac{1}{mL^2} u,\;&\text{if } \theta\geq 0\\
		(\frac{g}{L} - k L) \theta + \frac{1}{mL^2} u,\;&\text{if } \theta< 0,
	\end{cases}
\end{equation*}
where $\theta$ is the pendulum angle, $u$ is the input torque, the state variable $x=[\theta\;\;\dot\theta]^T$, $m=1$ kg, $L=1$ m, $g=10$ m/s$^2$, and $k = 2$ N/m is the spring of the wall. The state constraint is $x \in X:=\{x|\;|\theta| \leq 0.5,\; |\dot{\theta}| \leq2 \}$, and the input constraint is $|u|\leq10$. In the simulation, the system is discretized by forward Euler with a sampling time of 0.001 seconds.

The backup policy is designed as a linear feedback controller, given by $\pi_\mathrm{b}(x) = -12 \theta - 3 \dot{\theta}$. Under this policy, the resulting PWA system takes the form:
\begin{equation}\label{eq_inverted_closed}
	\dot x = f_\mathrm{PWA,b}(x) = \begin{cases} \left[\begin{array}{rr}
			0 & 1 \\
			-2 & -3
		\end{array} \right]x ,\;&\text{if } x_1\geq 0\\
		\left[\begin{array}{rr}
			0 & 1 \\
			-4 & -3
		\end{array}\right] x ,\;&\text{if } x_1< 0.
	\end{cases}
\end{equation}
The feedback gain $[-12\;-3]$ is selected such that both linear modes of \eqref{eq_inverted_closed} are Hurwitz stable.

According to Proposition \ref{proposition1}, $x = [0\;\;0]^T$ is the only initial point at which the Aumann sensitivity contains multiple (two in this case) elements. Under the backup policy, a quadratic CBF $h_\mathrm{b}(x) = \gamma - x^T R x$ is synthesized using the LMI approach presented in \cite{wabersich2022predictive}. The reference control policy is chosen as $\pi_{\mathrm{r}}(x) = -12 (\theta -\theta_\mathrm{r})  - 3 (\dot{\theta} - \dot{\theta_\mathrm{r}}) + mL^2 \ddot{\theta_\mathrm{r}} - mgL \theta_\mathrm{r}$, which is a proportional-derivative plus feedforward controller. The reference angle $\theta_\mathrm{r}(t) = 0.8\cos(t)$. This means that during some periods, the reference state violates the state constraint. For the PSF in \eqref{safety filter pwa}, the horizon $T$ is set to 1 second and $[0,\;T]$ is further discretized into $N = 50$ intervals, resulting in the time grid $\{lT/N\}^N_{l=0}$.

Fig. \ref{time_traj} compares the time responses of the closed-loop system under the PSF \eqref{safety filter pwa} and the standard safety filter using the CBF $h_\mathrm{b}$ without predictions. The initial state is $[0.1\;\;0.1]^T$. From Fig. \ref{time_traj}, it is observed that even though the reference state violates the state constraint, the trajectories controlled by these two safety filters always satisfy the state constraint. As evidenced by the first and third sub-figures, the PSF is less conservative because the corresponding trajectory stays closer to the reference. Furthermore, Fig. \ref{2D_traj} displays the trajectories and also the safe set of $h_\mathrm{b}$. It is found that the feasible region of the PSF exceeds the boundary of the safe set of $h_\mathrm{b}$ (i.e., the feasible region of the standard safety filter), highlighting the superiority of the predictive approach over the standard method.

\begin{figure}
	\centering
	\includegraphics[width=240pt,clip]{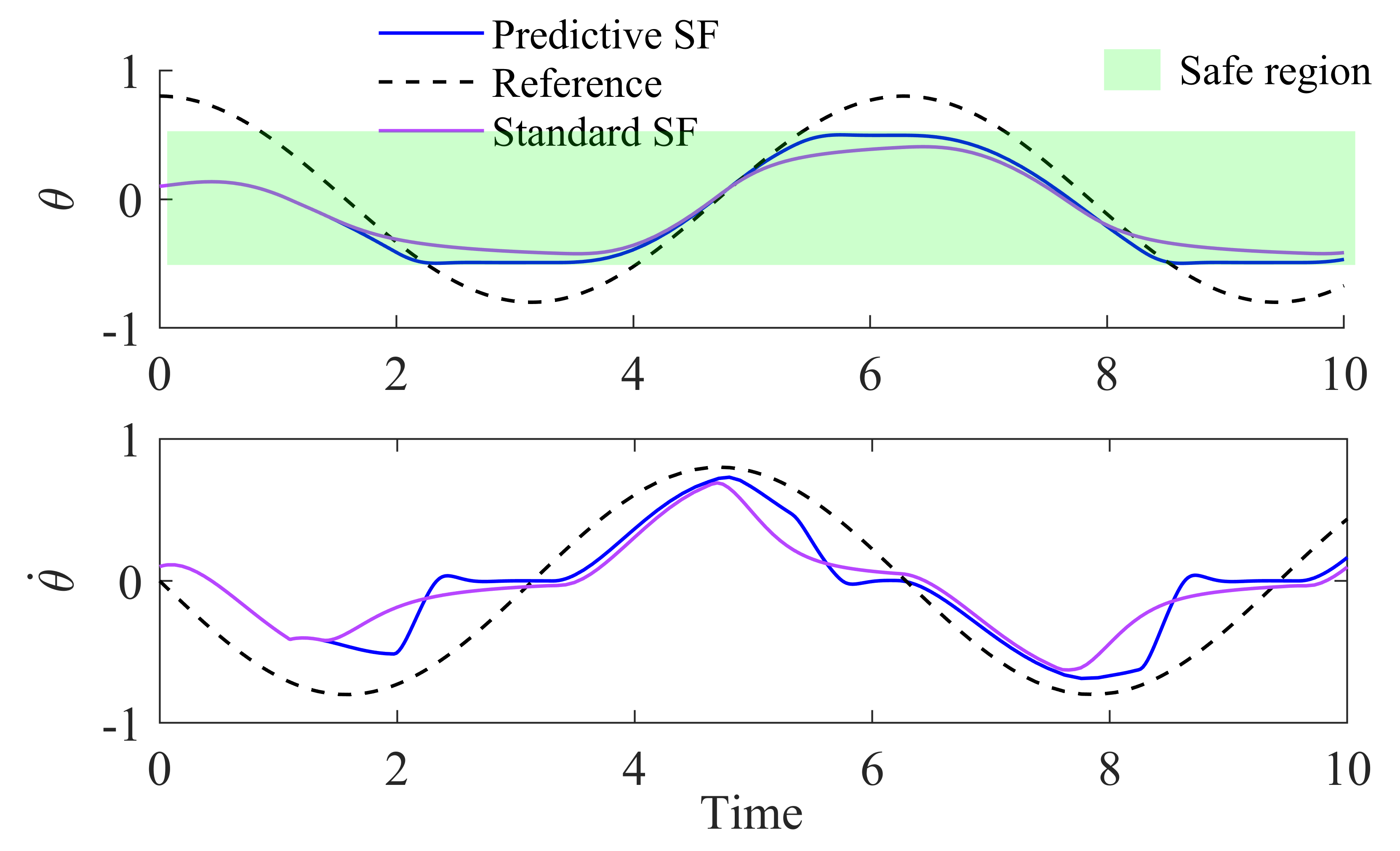}
	\caption{Time responses of the closed-loop system with the PSF \eqref{safety filter pwa} and the standard safety filter. The initial state is $[0.1\;\;0.1]^T$. “SF” means “safety filter”. Both SFs successfully control the system without recording any constraint violations. The predictive SF is less conservative than the standard SF because the trajectory regulated by the predictive SF is closer to the reference.  }
	\label{time_traj}
\end{figure}

\begin{figure}
	\centering
	\includegraphics[width=240pt,clip]{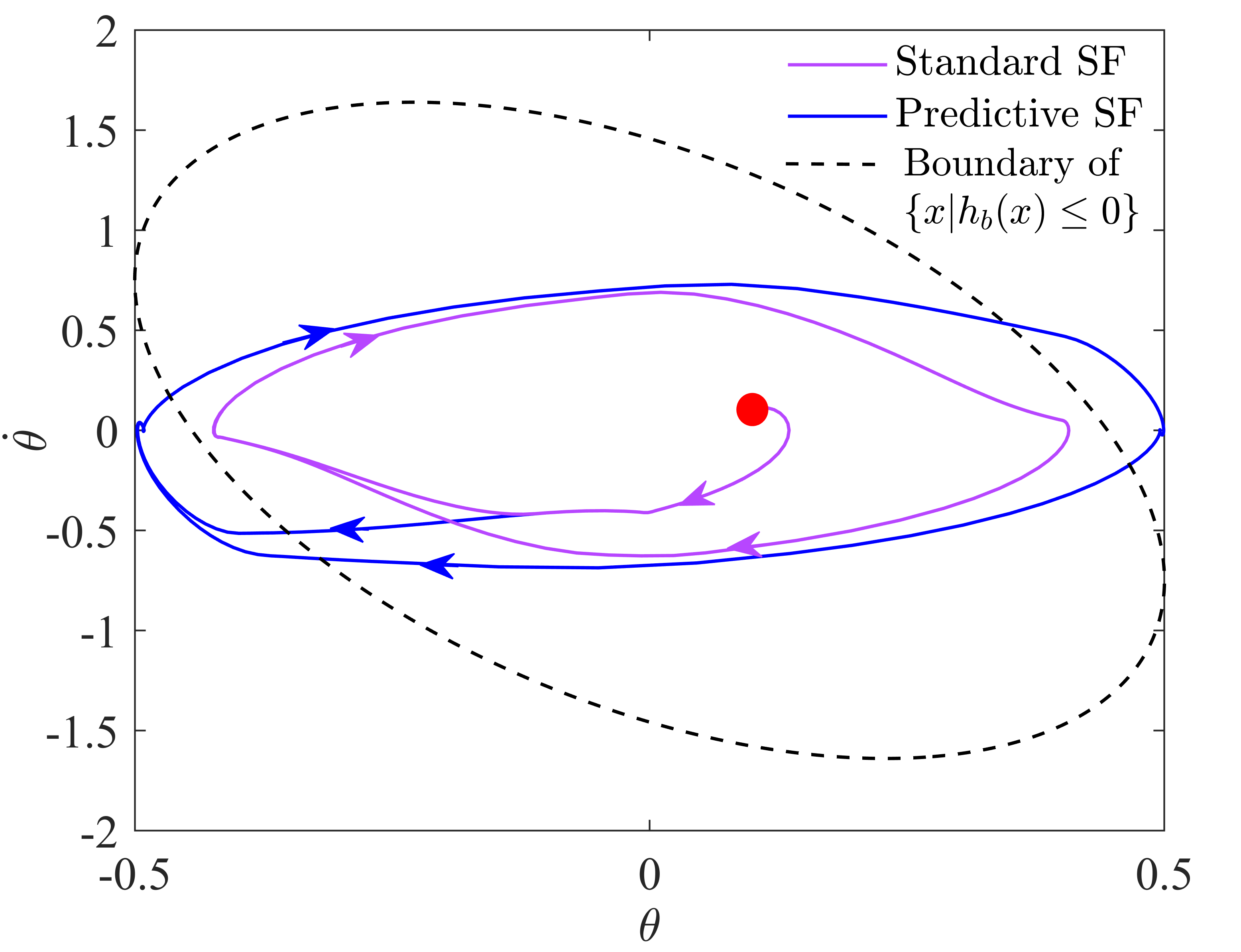}
	\caption{Trajectories of the closed-loop system (solid lines) and the safe set of $h_\mathrm{b}$ (black dashed line). The black hole represents the initial state. By virtue of predictions, the predictive SF enlarges the original feasible region $\{x | h_b(x) \leq 0\}$.} 
	\label{2D_traj}
\end{figure}

Next, we examine a scenario in which the PSF based on the classical gradient \eqref{sensitivity} fails to ensure safety. Consider the initial state $x(0) = [0.5\;\;-2]^T$, which lies on the boundary of $X$. The function $h_X$, defined as $h_X(x) :=  \min\{0.5-\theta,\; \theta+0.5,\;2 - \dot{\theta} ,\; \dot{\theta}+2 \}$, is not differentiable at $x(0)$. We design two PSFs: the first is based on \eqref{safety filter pwa}, and the second is a modified version of \eqref{safety filter pwa} that includes only a single, randomly selected gradient from $\partial_\mathrm{C}h_X$. The second one corresponds to the classical PSF, which implicitly assumes that $h_X$ is differentiable everywhere. To highlight the difference between the two safety filters, a constant reference policy is chosen as $\pi_{\mathrm{r}}(x) = -10$. 

Fig. \ref{Clarke_gradient_traj} illustrates the closed-loop trajectory of $\dot{\theta}$ under both safety filters. It is evident in the zoomed figure that the proposed all-elements PSF using the generalized Clarke derivative (blue curve) drives the system to remain within the safe region, while the safety filter using the single gradient (purple curve), proposed in \cite{gurriet2020scalable,chen2021backup}, experiences constraint violations during the initial phase\footnote{When the filter becomes infeasible, we switch to applying the reference $\pi_{\mathrm{r}}$ through a saturation unit $\max\{\min\{\cdot,\;10\},\; -10\}$, which enforces only the input constraint.}.

\begin{figure}
	\centering
	\includegraphics[width=240pt,clip]{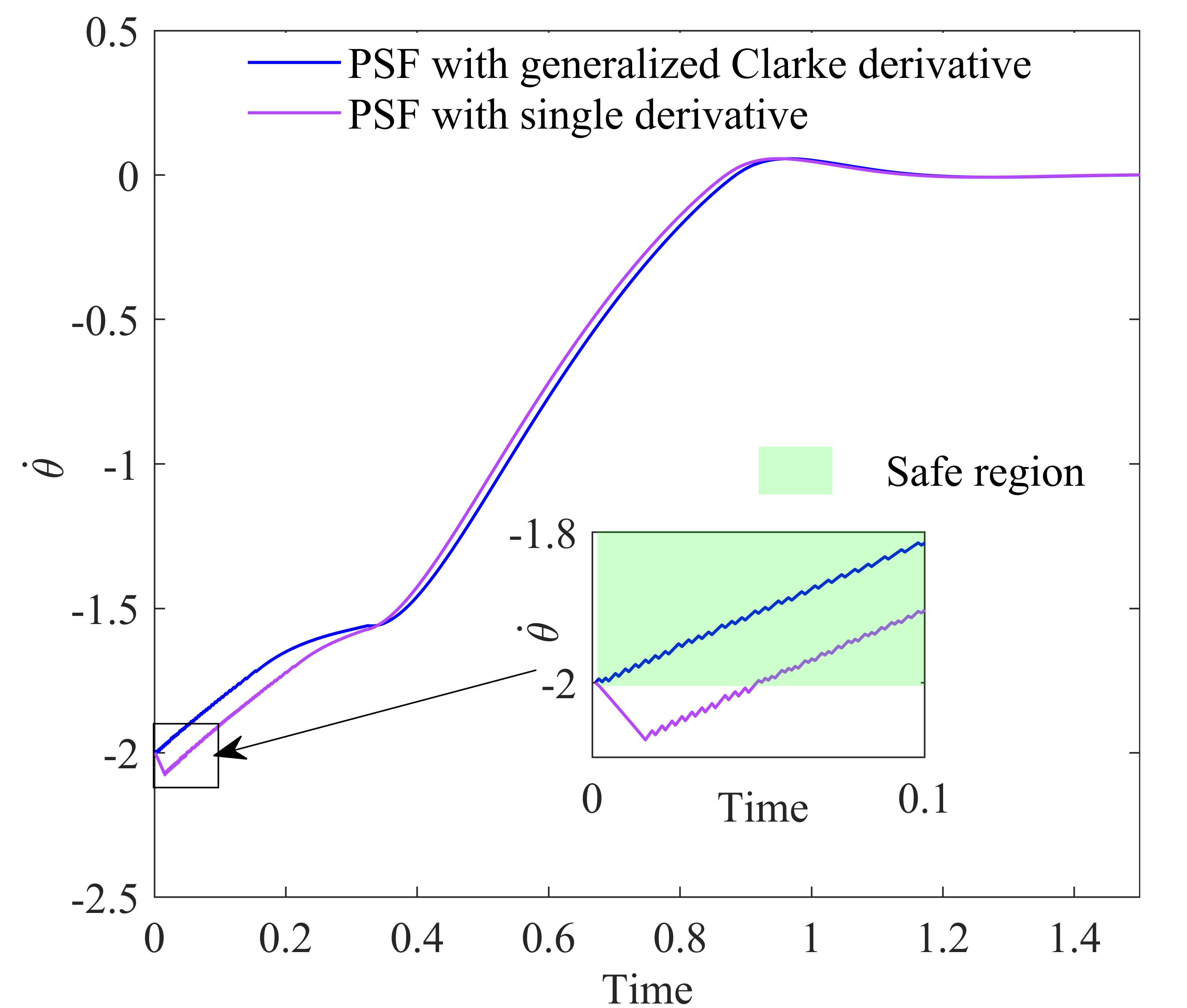}
	\caption{Comparison of PSFs using generalized Clarke derivative and single derivative methods. The highlighted inset zooms into the initial phase of the trajectory, illustrating the difference in early behavior between the two methods. When the initial state is at the boundary of the state constraint set, the proposed all-elements PSF using generalized Clarke derivative successfully avoids the unsafe region. The classical PSF proposed in \cite{gurriet2020scalable,chen2021backup} fails because of the mismatch in calculating the derivative of $h_X$, which arises from its non-smooth nature.}
	\label{Clarke_gradient_traj}
\end{figure}

\subsection{Temperature control}
To demonstrate the computational advantage of the proposed explicit solution, we further consider a more complex PWA system with more partitions. In particular, let us consider control of the temperatures $x = [T_1\; T_2\; T_3\;T_4]^T$ of four rooms in a building. The layout of the four rooms and heat flows are shown in Fig. \ref{fig_room}.

\begin{figure}
	\centering
	\includegraphics[width=240pt,clip]{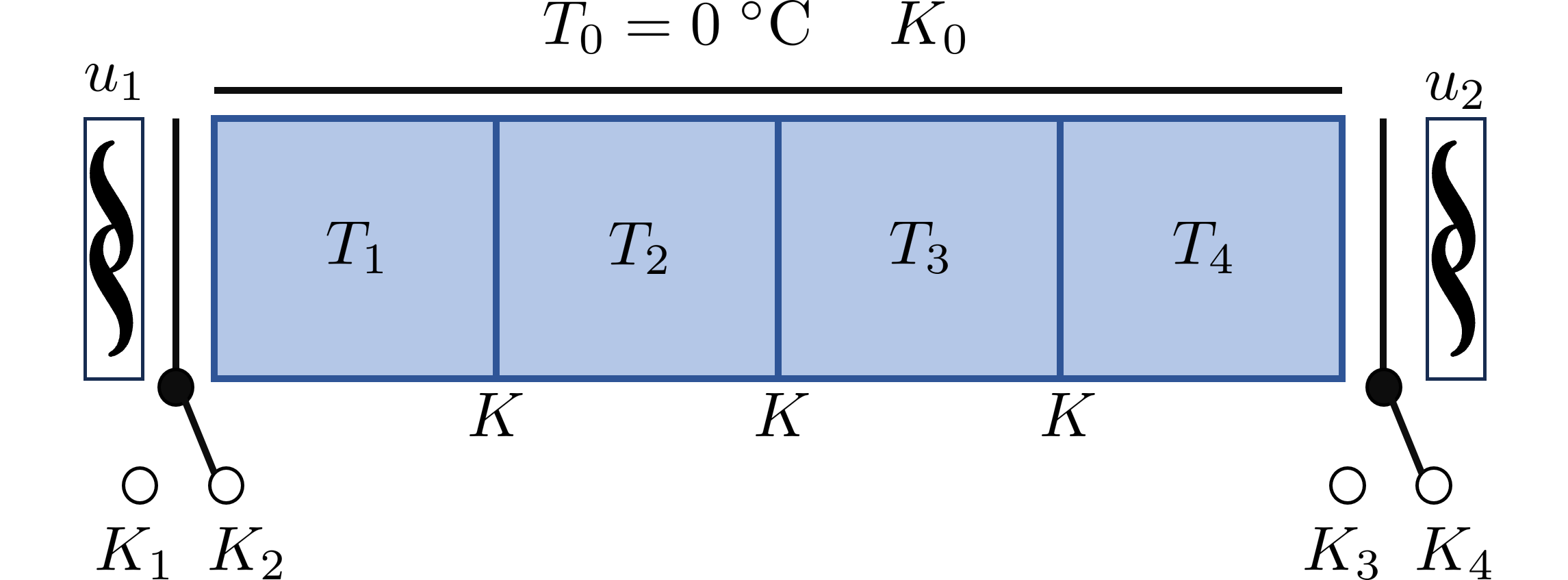}
	\caption{Diagram of the building.
	\label{fig_room}}
\end{figure}

The ambient temperature $T_0$ outside the building is assumed to be constantly zero, i.e., $T_0 = 0\;^\circ\text{C}$. Heat flows occur between each two adjacent rooms as well as between the external environment and each room. These thermal interactions are modeled using linear equations. For example, the temperature dynamics in Room 2 are given by $\dot T_2 = K (T_1 - T_2) + K(T_3 - T2) + K_0 (T_0 - T_2)$, where the coefficients $K$ and $K_0$ are the ratios of the corresponding thermal capacitance to thermal resistance. Room 1 and Room 4 are each equipped with a temperature controller ($u_1$ and $u_2$ ) to regulate the temperature in the respective room. We assume that the heaters are powerful enough such that they realize the demanded temperature values $u_1$ and $u_2$ immediately. The heat flow between a room and its controller is modeled by a piecewise linear equation. In particular, when the temperature difference $|u_1 - T_1|$ exceeds a threshold of $5\;^\circ\text{C}$, the system automatically switches to a mode with a larger coefficient $K_2 > K_1$ to facilitate better heat conduction. To summarize, the system admits the following state space equation:
\begin{equation}\label{eq_room}
\small
	\begin{aligned}
	\dot T_1 &=\begin{cases}
		K(T_2 \!- \!T_1) \!+\! K_1 (u_1 \!- \!T_1) \!+\! K_0 (T_0 \!-\! T_1),\mathrm{if}\; |u_1 \!- \!T_1| \leq 5\\
		K(T_2 - T_1) + K_2 (u_1 - T_1) + K_0 (T_0 - T_1),\mathrm{otherwise}
	\end{cases}\nonumber\\
	\dot T_2 &= K (T_1 - T_2) + K(T_3 - T_2) + K_0 (T_0 - T_2)\nonumber\\
	\dot T_3 &= K (T_2 - T_3) + K(T_4 - T_3) + K_0 (T_0 - T_3)\nonumber\\
	\dot T_4 &=\begin{cases}
		K(T_3 \!- \!T_4) \!+ \!K_3 (u_2 \!-\! T_4)\! +\! K_0 (T_0 \!- \!T_4),\mathrm{if}\; |u_2 \!-\! T_4| \leq 5\\
		K(T_3 - T_4) + K_4 (u_2 - T_4) + K_0 (T_0 - T_4),\mathrm{otherwise},
	\end{cases}
\end{aligned}
\end{equation}
where $K = 0.0035\;\mathrm{s}^{-1}$, $K_0 = 0.001\;\mathrm{s}^{-1}$, $K_1 = 0.01\;\mathrm{s}^{-1}$, $K_2 = 0.02\;\mathrm{s}^{-1}$, $K_3 = 0.008\;\mathrm{s}^{-1}$, and $K_4 = 0.016\;\mathrm{s}^{-1}$. This PWA system can be rewritten as the standard form of \eqref{pwa_form}, with a total of 9 polyhedral regions. 

The control objective is to adjust the temperatures of Room 2 and Room 3 to $20\;^\circ\text{C}$, while keeping the temperatures of Room 1 and Room 4 no higher than $26\;^\circ\text{C}$. The input constraints are given by $u_1,\;u_2\in [-10,\;35]$. The initial state is $[5\; 20\; 10\; 15]^T$. In the simulation, the system is discretized using the forward Euler method with a sampling time of 1 second. 

The reference controller is designed based on backstepping \cite[Chapter 14.3]{khalil2002nonlinear}. In particular, to make $T_2$ and $T_3$ track $20\;^\circ\text{C}$, a virtual controller $v$ for $T_1$ and $T_4$ is designed as 
\begin{align*}
	v=K_{23}^u\left( \begin{bmatrix} T_2 \\ T_3 \end{bmatrix}- \begin{bmatrix} 20 \\ 20 \end{bmatrix}\right)-B_{23}^{-1}A_{23} \begin{bmatrix} 20 \\ 20 \end{bmatrix}
\end{align*}
where $A_{23}$ and $B_{23}$ are the system matrices of the linear sub-equations for $\dot T_2$ and $\dot T_3$, where $T_2$ and $T_3$ are the states and $T_1$ and $T_4$ are the inputs. The feedback gain $K^u_{23}$ is chosen as the LQR gain for the linear subsystem with $Q = I_2$ and $R=I_2$. The second term of $v$ is a feedforward term. Then, the real reference input is determined by
\begin{align*}
	\pi_{\mathrm{r}}(x) = K_{14}^u \left( \begin{bmatrix} T_1 \\ T_4 \end{bmatrix}- v\right) - B_{14}^{-1}A_{14} v- \begin{bmatrix}T_2 K/K1 \\T_3K/K3\end{bmatrix}
\end{align*}
where $A_{14}$ and $B_{14}$ are the system matrices of the linear subsystem for $T_1$ and $T_4$, corresponding to the region where $|u_1 - T_1| \leq 5$ and $|u_2 - T_4| \leq 5$, with $T_2$ and $T_3$ treated as external disturbances. The feedback gain $K^u_{14}$ is designed similarly to $K^u_{23}$. The second term of $\pi_{\mathrm{r}}$ is a feedforward term, while the third term compensates for the external disturbances.

Now, consider the backup CBF candidate $h_b(x) =\min\{24 - T_1, \; 24 - T_4\}$. It can be verified that by choosing the backup controller $\pi_\mathrm{b}(x) = \begin{bmatrix} -KT_2/K_1 \\ -KT_3/K_3 \end{bmatrix}$, $h_b$ satisfies $\partial h_\mathrm{b} f_\mathrm{PWA,b}(x) \geq -\alpha_{\mathrm{b}} h_\mathrm{b}(x),\;\forall x\geq [0\;0\;0\;0]^T,\;\forall \partial h_\mathrm{b} \in \partial_{\mathrm{C}}h_\mathrm{b}(x)$ and $\forall \alpha_{\mathrm{b}} \in (0,K+K_0 + \max\{K_1,\;K_3\}]$. Here $f_\mathrm{PWA,b}$ represents the system \eqref{eq_room} controlled by $\pi_\mathrm{b}$. The above analysis indicates that $h_b$ and $\pi_\mathrm{b}$ are valid, i.e., satisfy Assumption\footnote{Strictly speaking, Assumption \ref{A1} is not completely satisfied because the inequality in condition (ii) holds only for $x\geq [0\;0\;0\;0]^T$ in this example. However, this does not pose a problem as long as the system operates within the positive state region.} \ref{A1}. 

Fig. \ref{fig_time_traj2} plots the time responses of the closed-loop system controlled by $\hat\pi_\mathrm{safe}$ and $\pi_\mathrm{safe}$. The PSF \eqref{safety filter pwa} is implemented over a prediction horizon of $T=4$ seconds, which is discretized into $N=40$ uniformly spaced time steps. This corresponds to a discretization interval of 0.1 seconds. From the top-left and bottom-right sub-figures, it is observed that all the controllers satisfy the safety requirements ($T_1, T_4 \leq 26 \;^\circ\text{C}$). On the other hand, the top-right and bottom-left sub-figures show that both $\hat\pi_\mathrm{safe}$ and $\pi_\mathrm{safe}$ with $N= 40$ drive the temperatures $T_2$ and $T_3$ closer to the target temperature ($20 \;^\circ\text{C}$), compared to $\hat\pi_\mathrm{safe}$ and $\pi_\mathrm{safe}$ with $N= 0$ (i.e., policies derived from a standard CBF-based safety filter without predictions). Furthermore, consistent with intuition, the approximate explicit policy $\hat\pi_\mathrm{safe}$ has a slightly worse tracking performance than the exact one $\pi_\mathrm{safe}$, as evidenced by the red and blue curves in the top-right and bottom-left sub-figures.
\begin{figure}
	\centering
	\includegraphics[width=240pt,clip]{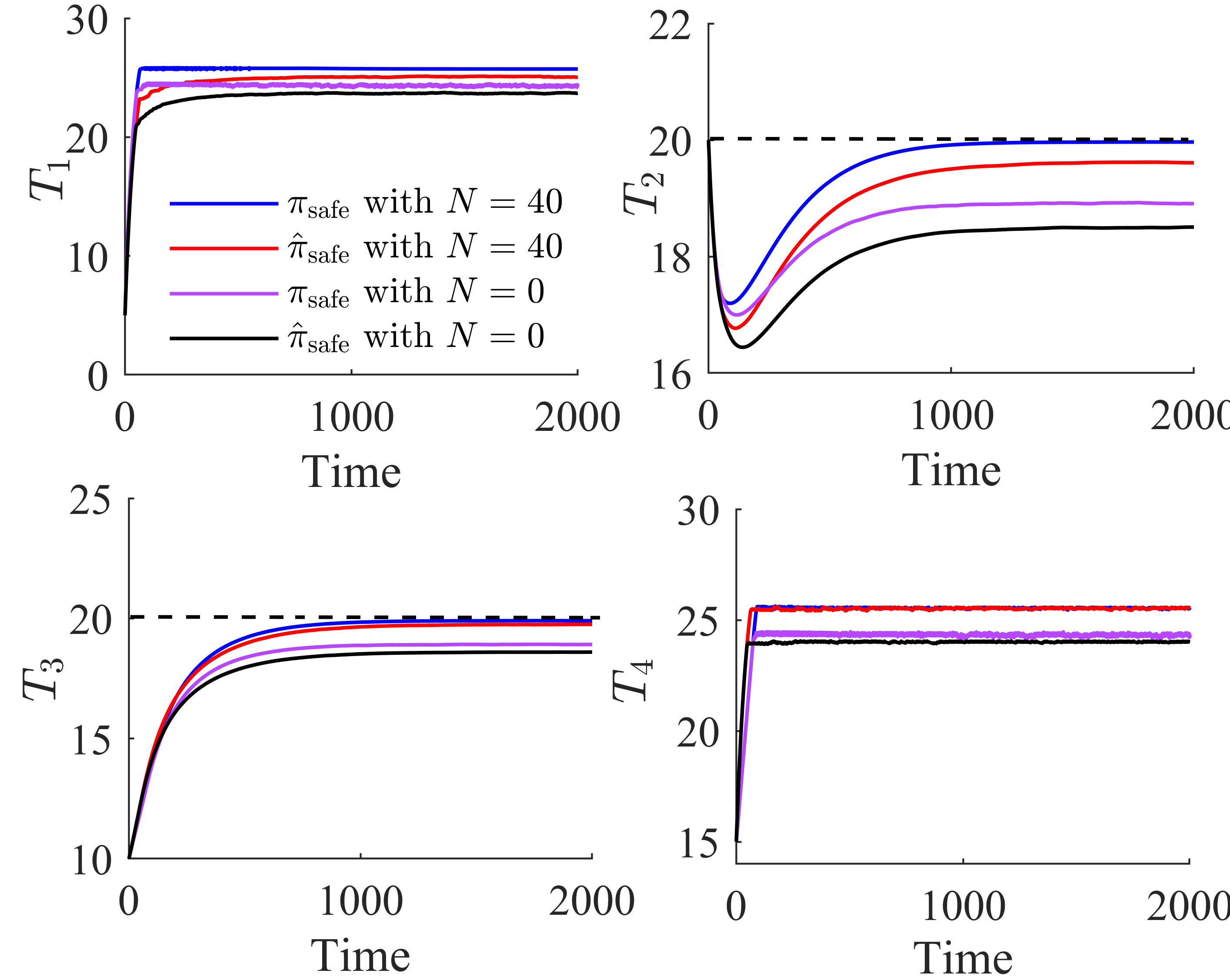}
	\caption{Time responses of the closed-loop system controlled by $\hat\pi_\mathrm{safe}$ and $\pi_\mathrm{safe}$. The black dashed lines represent the target temperature. The policy with $N=40$ is derived from \eqref{safety filter pwa} with the prediction horizon $T =4$ seconds and discretized into $N=40$ uniformly spaced time steps. The policy with $N=0$ is derived from a standard CBF-based safety filter without any predictions. Both $\hat\pi_\mathrm{safe}$ and $\pi_\mathrm{safe}$ with $T= 4$ drive the temperatures $T_2$ and $T_3$ closer to the target temperature ($20 \;^\circ\text{C}$), compared to $\hat\pi_\mathrm{safe}$ and $\pi_\mathrm{safe}$ with $N= 0$. As expected, the approximate explicit policy $\hat\pi_\mathrm{safe}$ exhibits a slightly worse tracking performance compared to the exact policy $\pi_\mathrm{safe}$.}
	\label{fig_time_traj2}
\end{figure}

In Fig. \ref{fig_compare}, we compare the tracking performance and average CPU time of the controller $\pi_\mathrm{safe}$ and its explicit approximation $\hat\pi_\mathrm{safe}$ for different prediction horizons. The tracking performance is defined as the summation of $(T_2 - 20)^2 + (T_3 - 20)^2$ over 2000 seconds. To implement $\pi_\mathrm{safe}$, according to Section \ref{section implementation}, an MIQP problem needs to be solved online. The results show that the approximate controller $\hat\pi_\mathrm{safe}$ retains a comparable tracking performance to the exact one $\pi_\mathrm{safe}$, and the average CPU time for $\hat\pi_\mathrm{safe}$ (red line) is 1–2 orders of magnitude lower than that for the exact policy $\pi_\mathrm{safe}$ (blue line) across all prediction horizons, highlighting the practicality of the approximation in real-time or resource-constrained settings. Meanwhile, increasing the prediction horizon leads to reduced tracking errors for both $\hat\pi_\mathrm{safe}$ and $\pi_\mathrm{safe}$, while their CPU times grow approximately linearly. This demonstrates that incorporating predictions can effectively reduce the conservatism of the safety filter, compared to a standard CBF-based approach without predictions.

\begin{figure}
	\subfloat[Accumative tracking error.]{\includegraphics[width=120pt,clip]{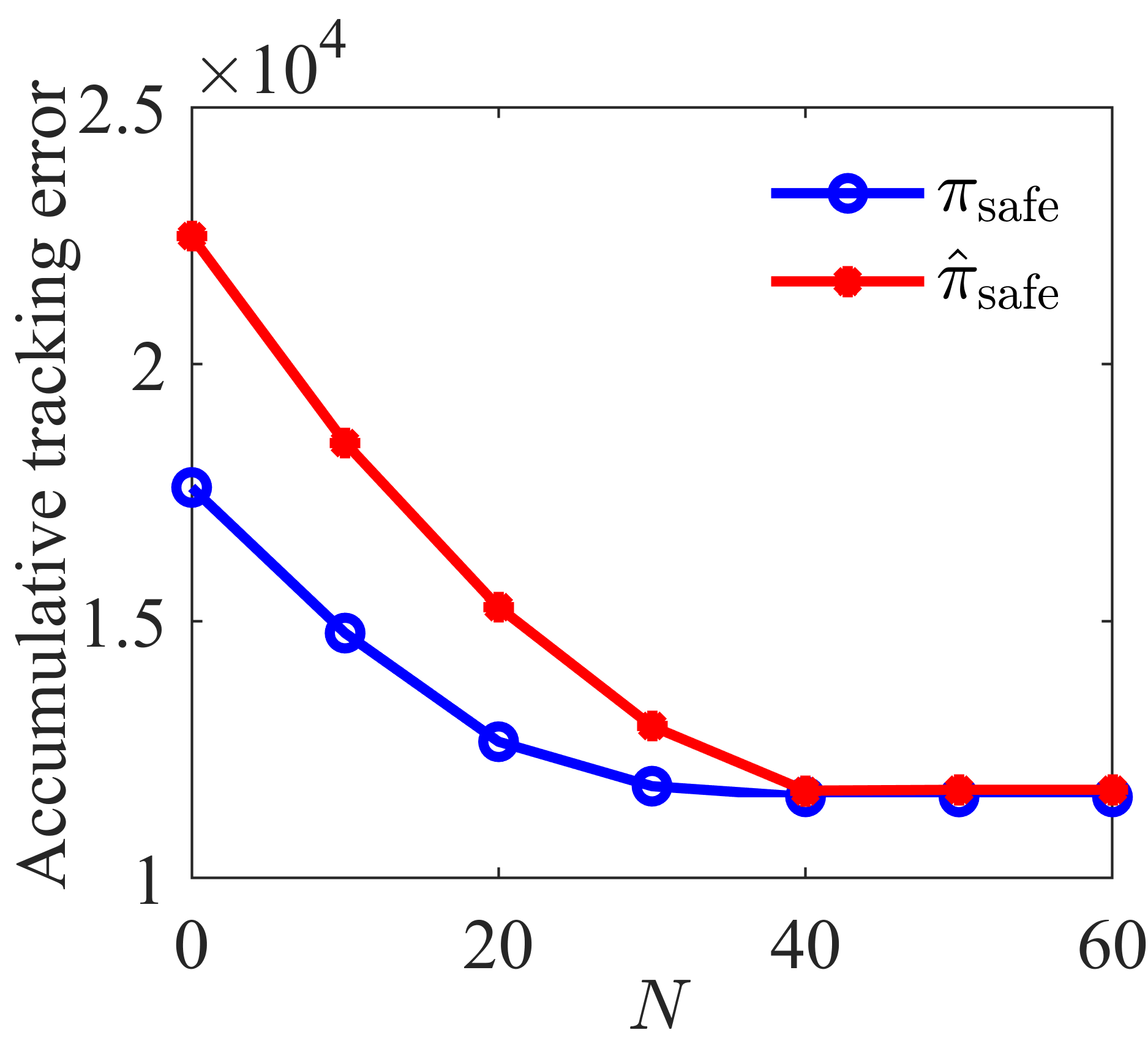}}
	\hfil
	\subfloat[Average CPU time (log scale).]{\includegraphics[width=120pt,clip]{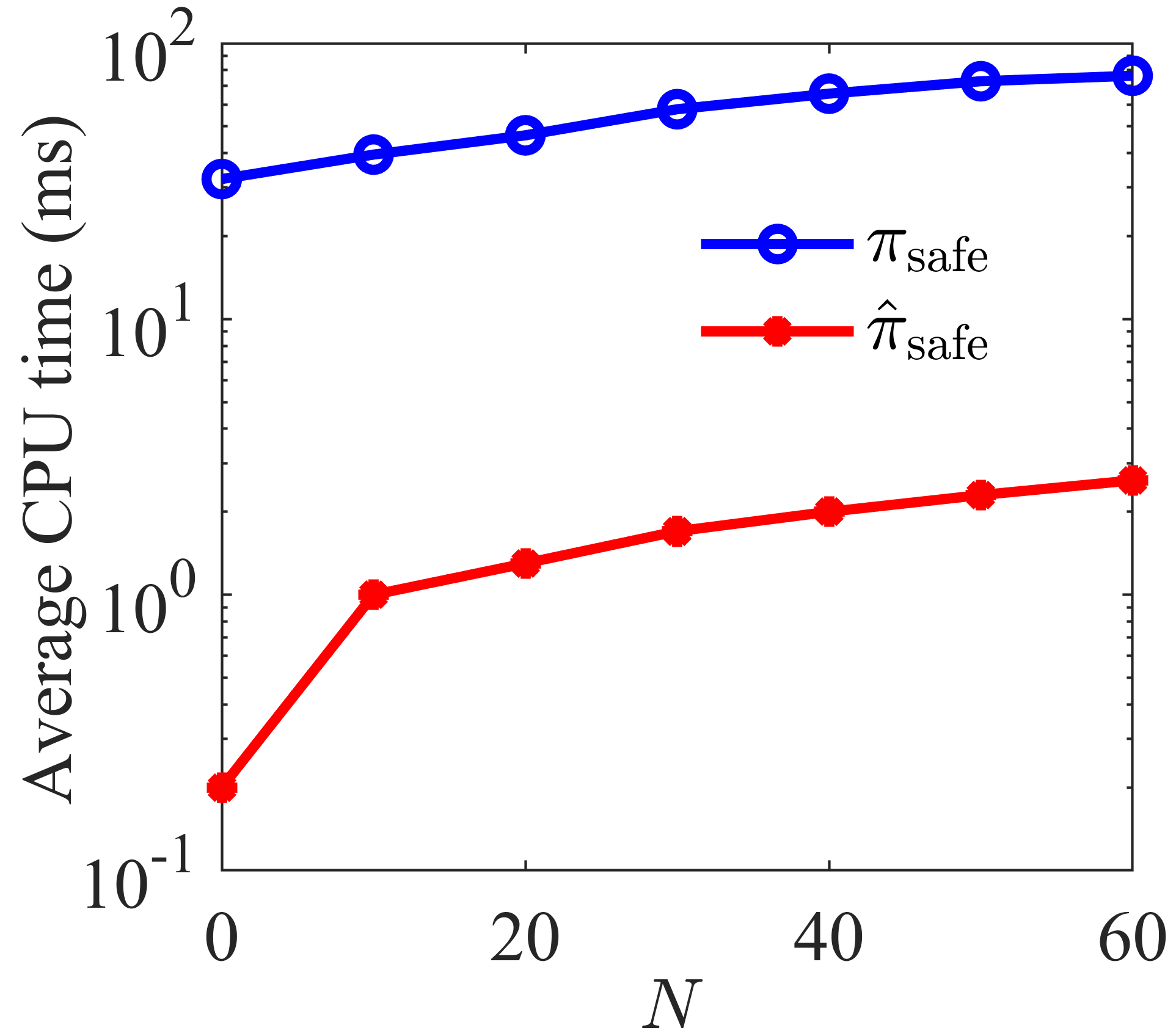}}
	\caption{Tracking performance and average CPU time (log scale) for different prediction horizons for the exact policy $\pi_\mathrm{safe}$ and the approximate policy $\hat\pi_\mathrm{safe}$. The tracking errors (defined as the summation of $(T_2 - 20)^2 + (T_3 - 20)^2$ over 2000 seconds) are comparable, but the approximate policy significantly reduces the computational cost by up to two orders of magnitude across all horizons.}
	\label{fig_compare}
\end{figure}

\section{Conclusions and future work}
This paper has developed a novel predictive safety filter (PSF) to achieve less conservative control of constrained piecewise affine (PWA) systems compared to standard safety filters that use control barrier functions (CBFs) or projections onto invariant sets. We have extended the predictive CBF method to handle the non-smoothness occurring in PWA dynamics, in the state constraint, as well as in the CBF. We have further derived an explicit approximation of the solution to the PSF. Such an approximation can be efficiently computed through basic arithmetic and min/max operations. A case study on an inverted pendulum interacting with an elastic wall has shown improved safety performance compared to classical PSFs. Furthermore, simulations on temperature control of a building have demonstrated the computational advantages of the approximate solution. 

In the future, we will consider designing PSFs for general Lipschitz continuous nonlinear systems and exploring their explicit approximation. The PSF method can be viewed as an online version of HJ reachability. We believe that unifying these two methods is a meaningful future research direction. Besides, another interesting topic is to consider probabilistic safety constraints under model uncertainties.

\appendix
\subsection{Proof of Lemma \ref{lemma_invariance}}\label{appendix_lemma_invariance}
For the convenience of using the comparison theorem \cite[Lemma 3.4]{khalil2002nonlinear}, we define $\bar h = -h$. Since $\bar h$ is not always differentiable, we consider the generalized directional derivative \cite{clarke1975generalized}
\begin{equation}\label{directional}
	\bar h^{\circ}(x ; v):=\limsup_{w \rightarrow 0 , \delta \downarrow 0}\frac{\bar h(x+w+\delta v)-\bar h(x+w)} { \delta}
\end{equation}
for any $v \in \mathbb{R}^n$. The generalized directional derivative is well-defined for Lipschitz continuous functions. According to \cite[Proposition 1.4]{clarke1975generalized}, we have 
\begin{equation*}
	\bar h^{\circ}(x ; v) = \max_{\partial \bar h \in \partial_\mathrm{C} \bar h(x)} \{\partial \bar h v \}.
\end{equation*}
By specifying $v = f(x,\pi(x))$ and noting that $\partial_\mathrm{C} \bar h(x) = - \partial_\mathrm{C}  h(x)$ and that \eqref{cbf_condition} holds, we have
\begin{align}\label{directional_decrease}
	\bar h^{\circ}(x ; f(x,\pi(x))) =& \max_{\partial \bar h \in \partial_\mathrm{C} \bar h(x)} \{\partial \bar h f(x,\pi(x)) \} \leq -\alpha (\bar h(x)),\nonumber\\
	&\forall x \in \{x \in \mathbb{R}^n| \bar h(x) \leq  0\}.
\end{align}

Consider the upper right-hand time derivative of $\bar h$ along $\dot x = f(x,\pi(x))$, which is defined by
\begin{equation}\label{dini}
	\mathrm{D}^+ \bar{h}(x_0,t) : = \limsup_{\tau \downarrow 0 } \frac{\bar h(\phi(x_0,t+\tau))-\bar h(\phi(x_0,t))} {\tau}.
\end{equation}
Since $\phi$ is the solution of the closed-loop system $\dot x = f(x,\pi(x))$, we have the relation $\phi(x_0,t) + \limsup_{w \rightarrow 0 , \delta \downarrow 0} w+ \delta f(\phi(x_0,t),\pi(\phi(x_0,t))) = \limsup_{\tau \downarrow 0 }\phi(x_0,t+\tau)$. As a result, by comparing \eqref{directional} and \eqref{dini} and noting that \eqref{directional_decrease} holds, we obtain
\begin{align*}
	\mathrm{D}^+ \bar{h}(x_0,t) &= 	\bar h^{\circ}(\phi(x_0,t) , f(\phi(x_0,t),\pi(\phi(x_0,t))))\nonumber\\
	&\leq -\alpha(\bar h(\phi(x_0,t))),\;\forall t \geq 0 
\end{align*}
if $x_0 \in \{x \in \mathbb{R}^n \mid \bar{h}(x) \leq 0\}$. Because $\alpha$ is Lipschitz continuous, directly applying the comparison theorem \cite[Lemma 3.4]{khalil2002nonlinear} yields $\bar{h}(\phi(x_0,t)) \leq 0,\;\forall t \geq 0 $, which consequently proves Lemma \ref{lemma_invariance}.

\subsection{Proof of Proposition \ref{proposition1}}\label{appendix_proposition1}
The equivalence between (3) and (4) follows from the definitions of backward and forward reachable sets. The equivalence between (1) and (2) stems from Definition \ref{Aumann}.

If $\Phi_\mathrm{for}(\bar x_0) \cap \mathcal{C} \ne \emptyset$, $\phi_\mathrm{b}(\bar x_0,\tau)$ is on the boundary of multiple polyhedra of the PWA system \eqref{closed_loop_pwa} during $[\tau_k,\; \tau_{k+1})$, where $\{(\tau_0, I_0),\;(\tau_1, I_1),\;...,\;(\tau_M, I_M)\}$ constitutes the time-stamped switching sequence of the solution $\phi_\mathrm{b}(\bar x_0,\tau)$. To see why this is true, note that if $\Phi_\mathrm{for}(\bar x_0) \cap \mathcal{C} \ne \emptyset$, there exists at least one instant $\tau_k \geq 0,\; k\in \mathbb{N}_M$ such that $\phi_\mathrm{b}(\bar x_0,\tau_k) \in \mathcal{C}$. According to the definition of $\mathcal{C}$, $\bar{g}^{(q)}_{I_k}(x(\tau_k)) = g_{I_k} D^{q-1}_i (D_i x(\tau_k) + d_i)=0,\;\forall q \in \mathbb{N}^+_n$. Using the Cayley–Hamilton Theorem \cite{decell1965application}, we know that $\bar{g}^{(q)}_{I_k}(x(\tau_k)) =0,\;\forall q \in \mathbb{N}$, meaning that any order time derivative of $\bar{g}_{I_k}$ vanishes when $\tau = \tau_k$. Consequently, $\bar{g}_{I_k}$ remains zero during $[\tau_k, \tau_{k+1})$, $\tau_{k+1} >\tau_k$, i.e., the set $\Gamma_I$ is an \emph{invariant manifold} \cite[Chapter 8.1]{khalil2002nonlinear}, so the trajectory must stay in $\Gamma_I$ over the time interval $[\tau_k, \tau_{k+1})$. We refer to all such intervals as \emph{critical periods}. During these periods, $\phi_\mathrm{b}$ resides at the intersection of the closure of multiple regions. 

\emph{(3) $\Rightarrow$ (2):} If $\Phi_\mathrm{for}(\bar x_0) \cap \mathcal{C} = \emptyset$, we can distinguish two cases. In the first case, there is no critical period. This means that for any switching instant $\tau_k$, the trajectory crosses the hyperplane $\{x \in \mathbb{R}^n | \bar{g}_{I_k\cup I_{k+1}}(x)=0\}$ from $\mathcal{R}_{I_k}$ and immediately enters the interior of one of the intersecting polyhedra $\mathcal{R}_{I_{k+1}}$. Here, all $I_k$ have only one element. When $\tau \in (\tau_k,\; \tau_{k+1})$, we know that $\partial f_\mathrm{PWA,b} (\phi_{\mathrm{b}}(\bar x_0,\tau))$ is uniquely determined by $ D_{I_k}$. As a result, according to \eqref{generalized sensitivity computation}, $Q_\mathrm{A}(\bar x_0,\tau)$ is single-valued. 

In the second case of $\Phi_\mathrm{for}(\bar x_0) \cap \mathcal{C} = \emptyset$, there still exists one or more critical periods during which $\bar{g}^{(q)}_{I_k} =0,\;\forall q \in \mathbb{N}^+_n$. However, it must hold that $D_{i_k} = D_{j_k}$ for every two adjacent polyhedra $\mathcal{R}_{i_k}$ and $\mathcal{R}_{j_k}$, $i_k,j_k \in I_k$. In this case $Q_\mathrm{A}(\bar x_0,\tau)$ is still single-valued.  

\emph{(2) $\Rightarrow$ (3):} We prove this implication by showing $\neg$ (3) $\Rightarrow$ $\neg$ (2). If $\Phi_\mathrm{for}(\bar x_0) \cap \mathcal{C} \ne \emptyset$, according to the definition of $\mathcal{C}$, during each critical period $[\tau_k, \tau_{k+1})$, $\phi_\mathrm{b}(\bar x_0,\tau)$ lies at the intersection of two or more adjacent polyhedra. As a result, the integral term of \eqref{generalized sensitivity computation} admits at least two different forms $D_{i_k} Q(\bar x_0,s)$ and $D_{j_k} Q(\bar x_0,s)$, $i_k,j_k \in I_k$ from $s=\tau_k$ to $s= \tau_{k+1}$. By holding the integral term during the other periods $[\tau_0,\tau_1),...,[\tau_{k-1},\tau_k),\;[\tau_{k+1},\tau_{k+2}),...$ fixed, we know that $Q_\mathrm{A}(\bar x_0,\tau)$ must take at least two different values for all $\tau>\tau_k$.

In addition, if $\mathcal{C}$ is empty, then it follows from the above discussions that $\phi_\mathrm{b}(x_0,\tau)$ has a global classical Jacobian w.r.t. $x_0$.

\subsection{Proof of Proposition \ref{proposition_equivalence}}\label{appendix_proposition_equivalence}

If $\phi_\mathrm{b}(x_0,\tau)$ is differentiable w.r.t. $x_0$ at $x_0 = \bar x_0$, it is obvious that $\partial_\mathrm{C} \phi_\mathrm{b}(\bar{x}_0,\tau) = Q_\mathrm{A} (\bar x_0,\tau) = \left.\frac{\partial \phi_{\mathrm{b}}\left(x_0, \tau\right)}{\partial x_0}\right|_{x_0=\bar{x}_0}$. In the remainder of the proof, we consider the case when $Q_\mathrm{A}(\bar{x}_0,\tau)$ has multiple values.

First, we prove $\partial_\mathrm{C} \phi_\mathrm{b} \subseteq \mathrm{conv}(Q_\mathrm{A})$. For any $\bar x_0$ and $\tau$ such that $\phi_\mathrm{b}(x_0,\tau)$ is not differentiable w.r.t. $x_0$ at $x_0 = \bar x_0$, consider any sequence $\{y_j\}^\infty_{j=0}$ satisfying $\lim_{j \to \infty} y_j = \bar x_0$ and that $\phi_\mathrm{b}(x_0,\tau)$ is differentiable w.r.t. $x_0$ at $y_j$. Since $y_j$ converges to $\bar x_0$, there exists a sufficiently large $\bar j \in \mathbb{N}^+$ such that for any $j\geq \bar{j}$, $\phi_\mathrm{b}(y_j,\tau)$ admits the same switching sequence $\{i_0,\;i_1,\;...,i_M\}$ satisfying $\{i_0,\;i_1,\;...,i_M\} \in I(\bar x_0)$, where $I(\bar x_0)$ is the Cartesian product of the switching sequence of $\phi_\mathrm{b}(\bar x_0,\tau)$.

According to Definition \ref{Clarke definition}, $\lim_{j \to \infty} J_{\phi_\mathrm{b}}(y_j,\tau) \in \partial_\mathrm{C}  \phi_\mathrm{b}(\bar x_0,\tau)$, where $J_{\phi_\mathrm{b}}$ refers to the classical Jacobian and has the expression
\begin{align}\label{recursive 2}
	J_{\phi_\mathrm{b}}(y_j,\tau) \! =\!  \left\{ \begin{gathered}
		{e^{{D_{{i_0}}}\tau}},\tau \in [{\tau_0},{\tau_1}) \hfill \\
		{e^{{D_{{i_1}}}(\tau - {\tau_1})}}J_{\phi_\mathrm{b}}(y_j,{\tau_1}),\tau \in [{\tau_1},{\tau_2}) \hfill \\
		... \hfill \\
		{e^{{D_{{i_{M - 1}}}}(\tau - {\tau_{M - 1}})}}J_{\phi_\mathrm{b}}(y_j,{\tau_{M - 1}}),\tau\! \in\! [{\tau_{M - 1}},{\tau_M}). \hfill \\ 
	\end{gathered}  \right.
\end{align}
By comparing \eqref{recursive} and \eqref{recursive 2} and noting that the switching time $\tau_k$ is continuous to $y_j$, we can prove by induction that $\lim_{j \to \infty} J_{\phi_\mathrm{b}}(y_j,\tau) = Q_{(i_0,i_1,...,i_M)}(\bar x_0,\tau)\in Q_\mathrm{A}(\bar x_0, \tau)$. Here, the subscript ${(i_0,i_1,...,i_M)}$ means that $Q_{(i_0,i_1,...,i_M)}$ corresponds to the element of $Q_\mathrm{A}$ that is determined by the switching sequence ${(i_0,i_1,...,i_M)}$. The containment of $\partial_\mathrm{C} \phi_\mathrm{b}$ in $\mathrm{conv}(Q_\mathrm{A})$ follows from the arbitrariness of the sequence $\{y_j\}^\infty_{j=0}$.

Conversely, we prove $ \mathrm{conv}(Q_\mathrm{A}) \subseteq \partial_\mathrm{C} \phi_\mathrm{b} $. Since we have proved that $\partial_\mathrm{C} \phi_\mathrm{b} \subseteq \mathrm{conv}(Q_\mathrm{A})$, according to Assumption \ref{assumption_invert}, any element of $\partial_\mathrm{C} \phi_\mathrm{b}$ is invertible. As a result, by virtue of the inverse function theorem for generalized Clarke derivatives \cite[Theorem 1]{clarke1976inverse}, we know that there exist a Lipschitz continuous function $\phi^{-1}_{\mathrm{b}}(\cdot,\tau): \mathbb{R}^n \to \mathbb{R}^n $, a neighborhood $Y$ of $\bar x_0$, and a neighborhood $Z$ of $\phi_{\mathrm{b}}(\bar x_0,\tau)$ such that $\phi^{-1}_{\mathrm{b}}(\phi_{\mathrm{b}}(y,\tau)) = y,\;\forall y \in Y$ and $\phi_{\mathrm{b}}(\phi^{-1}_{\mathrm{b}}(z,\tau)) = z,\;\forall z \in Z$. 

Then, for any $\bar x_0$ and any $\tau$ such that $\phi_\mathrm{b}(x_0, \tau)$ is not differentiable w.r.t. $x_0$ at $x_0 = \bar x_0$, consider any element $Q_{(i_0,i_1,...,i_M)}(\bar x_0,\tau )$ in $Q_\mathrm{A}(\bar x_0,\tau )$. Note that $Q_{(i_0,i_1,...,i_M)}(\bar x_0,\tau )$ can be expressed by \eqref{recursive}. Since $\phi_{\mathrm{b}}$ is differentiable almost everywhere, we immediately know that $\Phi_{\text {back }}(\mathcal{C})$ has the dimension strictly lower than $n$. Therefore, we can always find a sequence $\{z_j\}^\infty_{j=1}$ satisfying $z_j \in Z,\;\forall j \in \mathbb{N}^+$, $z_j \notin \Phi_{\text {back }}(\mathcal{C}),\;\forall j \in \mathbb{N}^+$, and $\lim_{j \to \infty} z_j = \phi_{\mathrm{b}}(\bar x_0,\tau)$. By substituting $z_j$ to the inverse function $\phi^{-1}_{\mathrm{b}}$, we get a sequence $y_j = \phi^{-1}_{\mathrm{b}}(z_j,\tau),\; j =1,2,..$. Thanks to the Lipschitz continuity of the inverse function, the sequence $\{y_j\}^\infty_{j=0}$ satisfies $\phi_{\mathrm{b}}(y_j,\tau) = z_j$, $\lim_{j \to \infty} y_j = \lim_{j \to \infty} \phi^{-1}_{\mathrm{b}}(z_j,\tau) = \bar x_0$, and $y_j \notin  \Phi_{\text {back }}(\mathcal{C})$. As $y_j$ converges to $\bar x_0$, there exists a $\bar j \in \mathbb{N}^+$ such that for any $j\geq \bar{j}$, $\phi_\mathrm{b}(y_j,\tau)$ admits the same switching sequence $\{i_0,\;i_1,\;...,i_M\} \in I(\bar x_0)$. As a result, $\lim_{j \to \infty} J_{\phi_\mathrm{b}}(y_j,\tau) = Q_{(i_0,i_1,...,i_M)}(\bar x_0,\tau)$. It follows from Definition \ref{Clarke definition} that $Q_{(i_0,i_1,...,i_M)}(\bar x_0,\tau) \in \partial_\mathrm{C} \phi_\mathrm{b} (\bar x_0,\tau)$. This implies $\mathrm{conv}(Q_\mathrm{A} (\bar x_0,\tau)) \subseteq \partial_\mathrm{C} \phi_\mathrm{b}(\bar x_0,\tau)$, which completes the proof of Proposition \ref{proposition_equivalence}.

\subsection{Proof of Proposition \ref{proposition_explicit}}\label{appendix_proposition_explicit}
Let $\lambda^+(x)$ denote the right-hand side of \eqref{lamda ij}. The proof is divided into two parts: proving the feasibility of $\lambda^+(x)$ and the optimality of $\lambda^+(x)$.

(i) \emph{Feasibility of $\lambda^+(x)$.} For any $x \in \Phi_{\text {const}}\left(S_{\mathrm{b}}, T\right)$, since $\pi_\mathrm{b}(x)$ is feasible for \eqref{safety filter pwa}, $0$ is feasible for \eqref{lamda compact} as well. Therefore, we know that $\exists i \in I_u(x)$ such that $\omega_{i,j}(x,\tau) \geq 0,\;\forall \tau \in [0,T]$ and $\forall j \in \mathbb{N}^+_{\mathrm{dim}(\eta_i)}$. Due to the maximization operator in \eqref{lamda expression}, it further holds that $\lambda^+(x) \geq 0$.

Suppose without loss of generality that $\lambda^+(x) = \lambda_{i^+,j^+}(x,\tau^+)$ for a certain $i^+\in I_u(x)$, $j^+ \in \mathbb{N}^+_{\mathrm{dim}(\eta_{i^+})}$, and $\tau^+ \in[0,T]$. It follows from the two minimization operators in \eqref{lamda expression} that 
\begin{equation}\label{relation}
	\lambda^+(x) \leq \lambda_{i^+,j}(x,\tau),\;\forall j \in \mathbb{N}^+_{\mathrm{dim}(\eta_{i^+})} \text{ and } \forall \tau \in [0,T].
\end{equation}
Then, for any $j \in \mathbb{N}^+_{\mathrm{dim}(\eta_{i^+})}$ and any $\tau \in [0,T]$, based on \eqref{lamda ij} we can distinguish the following four cases:

Case (1): $\eta_{i^+,j}(x,\tau) >0$. In this case, $\lambda_{i^+,j}(x,\tau)  = \frac{\omega_{i^+,j}(x,\tau)}{\eta_{i^+,j}(x,\tau)}$. Directly applying \eqref{relation} yields 
\begin{equation}\label{feasibility}
	\eta_{i^+,j}(x,\tau) \lambda^+(x)\leq \omega_{i^+,j}(x,\tau).
\end{equation}

Case (2): $\eta_{i^+,j}(x,\tau) =0$ and $\omega_{i^+,j}(x,\tau) \geq0$. In this case, \eqref{feasibility} obviously holds.

Case (3): $\eta_{i^+,j}(x,\tau) =0$ and $\omega_{i^+,j}(x,\tau) <0$. From \eqref{lamda ij}, we know that $\lambda_{i^+,j}(x,\tau) =0$, which further implies $\lambda^+(x) =0$ because of \eqref{relation} and $\lambda^+(x) \geq 0$. Consequently, the feasibility of $\lambda^+(x) =0$ results from the feasibility of $\pi_\mathrm{b}(x)$.

Case (4): $\eta_{i^+,j}(x,\tau) <0$. In this case, we only need to focus on the situation when $\lambda^+(x)>0$ because if $\lambda^+(x)=0$ then the feasibility of $\lambda^+(x)$ follows from the feasibility of $\pi_\mathrm{b}(x)$. If $\lambda^+(x)>0$, then by defining $\lambda_{i^+}(x): = \min_{\tau' \in[0,T], j' \in \mathbb{N}^+_{\mathrm{dim}(\eta_{i^+})}, \lambda_{i^+,j'}(x,\tau') \in (0,1)}  \lambda_{i^+,j'} (x,\tau') $, we have 
\begin{equation}\label{40}
	\omega_{i^+,j}(x,\tau) -\eta_{i^+,j}(x,\tau) \lambda_{i^+}(x)\geq 0.
\end{equation}
Noticing the relation $\lambda^+(x) = \lambda_{i^+,j^+}(x,\tau^+) \geq \lambda_{i^+}(x)$, we derive \eqref{feasibility} from \eqref{40}.

Combining the four cases above and noticing the arbitrariness of $j$ and $\tau$, we can conclude that $\lambda^+(x)$ satisfies the constraint of \eqref{lamda compact} for $i = i^+$. This proves the feasibility of $\lambda^+(x)$.

(ii) \emph{Optimality of $\lambda^+(x)$.} If $\lambda^+(x) =1$, the optimality follows directly from the constraint $\lambda \leq 1$. In the remainder of the proof, we consider $\lambda^+(x) <1$ and prove that $\lambda^+(x) +\Delta_\lambda$ is infeasible for problem \eqref{lamda compact} for any $\Delta_\lambda \in\left(0,1-\lambda^{+}(x)\right] $. Since the objective of the optimization problem \eqref{lamda} is to maximize $\lambda$, the infeasibility of $\lambda^+(x) +\Delta_\lambda$ implies the optimality of $\lambda^+(x)$.

For any $i \in I_u(x)$, let $\lambda_i(x) = \lambda_{i,j^*}(x, \tau^*):= \min_{\tau \in[0,T]} \min_{j \in \mathbb{N}^+_{\mathrm{dim}(\eta_i)}} \lambda_{i,j}(x,\tau)$, and then we get the relation
\begin{equation}\label{relation2}
	\max_{i\in  I_u(x)} \lambda_i(x) = \lambda^+(x) <1.
\end{equation}

Analogous to the proof of the feasibility, for any $i \in I_u(x)$, we consider the following four cases:

Case (1): $\eta_{i,j^*}(x,\tau^*) >0$. In this case, $\lambda_{i,j^*}(x,\tau^*)  = \frac{\omega_{i,j^*}(x,\tau^*)}{\eta_{i,j^*}(x,\tau^*)}$. Therefore, by \eqref{relation2} we have
\begin{align*}
	&\eta_{i,j^*}(x,\tau^*) (\lambda^+(x) + \Delta_\lambda) \\
	\geq& \underbrace{\eta_{i,j^*}(x,\tau^*) \lambda_{i,j^*}(x,\tau^*)}_{=\omega_{i,j^*}(x,\tau^*)} + \underbrace{\eta_{i,j^*}(x,\tau^*)  \Delta_\lambda}_{>0}\\
	>& \omega_{i,j^*}(x,\tau^*),
\end{align*}
which shows the infeasibility of $\lambda^+(x) +\Delta_\lambda$ regarding the $j^*$-th component of the constraint of \eqref{lamda compact}. 

Case (2): $\eta_{i,j^*}(x,\tau^*) =0$ and $\omega_{i,j^*}(x,\tau^*) \geq0$. In this case, it follows from \eqref{lamda ij} that $\lambda^+(x) = \lambda_{i,j^*}(x,\tau^*) = 1$, which contradicts \eqref{relation2}. Therefore, this case cannot occur. 

Case (3): $\eta_{i,j^*}(x,\tau^*) =0$ and $\omega_{i,j^*}(x,\tau^*) <0$. It is straightforward to get that $\eta_{i,j^*}(x,\tau^*) (\lambda^+(x) + \Delta_\lambda) =0 > \omega_{i,j^*}(x,\tau^*)$. Similar to case (1), it can be proved that $\lambda^+(x) +\Delta_\lambda$ is infeasible. 

Case (4): $\eta_{i,j^*}(x,\tau^*) <0$. Since $\lambda_{i,j^*}(x,\tau^*)$ can take either 0 or 1 in this case and recalling that $\lambda_{i,j^*}(x,\tau^*) <1$, it must hold that $\lambda_{i,j^*}(x,\tau^*) =0$ and that $\omega_{i,j^*}(x,\tau^*) - \eta_{i,j^*}(x,\tau^*) \lambda_{i,j''}(x,\tau'') <0$, where $\lambda_{i,j''}(x,\tau'') = \min_{\tau' \in[0,\; T], j' \in \mathbb{N}^+_{\mathrm{dim}(\eta_{i})}, \lambda_{i,j'}(x,\tau') \in (0,1)}  \lambda_{i,j'} (x,\tau') $. As a consequence, if $\lambda^+(x) + \Delta_\lambda \leq \lambda_{i,j''}(x,\tau'')$, we have 
\begin{align}\label{42}
	\eta_{i,j^*}(x,\tau^*) (\lambda^+(x) + \Delta_\lambda) &\geq \eta_{i,j^*}(x,\tau^*) \lambda_{i,j''}(x,\tau'')\nonumber\\
    &>\omega_{i,j^*}(x,\tau^*).
\end{align}
If $\lambda^+(x) + \Delta_\lambda > \lambda_{i,j''}(x,\tau'')$, consider the $j''$-th component of the constraint of \eqref{lamda compact}. In particular, since $\lambda_{i,j''}(x,\tau'') \in (0,1)$, by \eqref{lamda ij} we know that $\omega_{i,j''}(x,\tau'') >0$, $\eta_{i,j''}(x,\tau'') >0$, and $\lambda_{i,j''}(x,\tau'') =\frac{\omega_{i,j''}(x,\tau'')}{\eta_{i,j''}(x,\tau'')}$. Then we have 
\begin{align}\label{43}
	\eta_{i,j''}(x,\tau'') (\lambda^+(x) + \Delta_\lambda) &> \eta_{i,j''}(x,\tau'') \lambda_{i,j''}(x,\tau'')\nonumber\\
    &>\omega_{i,j''}(x,\tau'').
\end{align}
Combining \eqref{42} and \eqref{43}, we observe that either the $j^*$-th row or the $j''$-th row of the constraint of \eqref{lamda compact} is violated for $\lambda^+(x)+\Delta_\lambda$ in Case (4).

Integrating the discussions in the four cases above and noticing the arbitrariness of $i$, we can conclude that $\lambda^+(x)+\Delta_\lambda$ is infeasible for problem \eqref{lamda compact}. 

\bibliographystyle{IEEEtran}
\bibliography{references} 

\end{document}